\title{Nonparametric methods controlling the median of the false discovery proportion} 
\author{Jesse Hemerik\footnote{Econometric Institute, Erasmus University Rotterdam, The Netherlands. e-mail: hemerik@ese.eur.nl}
}
\theoremstyle{plain}
\newtheorem{theorem}{Theorem}[section]
\newtheorem{lemma}[theorem]{Lemma}
\newtheorem{proposition}[theorem]{Proposition}
\newtheorem*{example}{Example}
\theoremstyle{definition}
\newtheorem{assumption}{Assumption}
\newtheorem{remark}{Remark}
\newcommand{\N}{\mathcal{N}}
\newcommand{\R}{\mathcal{R}}
\newcommand{\M}{\mathcal{M}}
\newcommand{\G}{\mathcal{G}}
\newcommand{\X}{\mathcal{X}}
\newcommand{\C}{\mathcal{C}}
\newcommand{\co}{[c,\infty)}
\newcommand{\cop}{\mathbb{T}}
\newcommand{\pr}{\mathbb{P}}
\newcommand{\de}{\delta}
\newcommand{\lt}{\tilde{\phi}}  %local tests
\newcommand{\pv}{p\text{-value}} 
\newcommand{\pvs}{p\text{-values}}
\newcommand{\black}{\color{black}}
\newcommand{\citp}{\citep}
\newcommand{\citt}{\citet}
\begin{document}
\maketitle

\begin{abstract}
\noindent  When testing many hypotheses, often we do not have strong expectations about the directions of the effects. In some situations however, the alternative hypotheses are that the parameters lie in a certain direction or interval, and it is in fact expected that most hypotheses are false. This is often the case when researchers perform multiple noninferiority or equivalence tests, e.g. when testing  food safety with metabolite data. The goal is then to use data to corroborate the expectation that most hypotheses are false. We propose a nonparametric multiple testing approach that is powerful in such situations. If the user's expectations are wrong, our approach will still be valid but have low power. Of course all multiple testing methods become more powerful when appropriate one-sided instead of two-sided tests are used, but our  approach often has superior power then. 
The proposed methods are not at all limited to safety testing and can be used for testing hypotheses about various kinds of parameters, such as coefficients of a model.
The methods in this paper control the median of the false discovery proportion (FDP), which is the fraction of false discoveries among the rejected hypotheses. This approach is comparable to false discovery rate control, where one ensures that the mean rather than the median of the FDP is small. Our procedures make use of a symmetry property of the test statistics, do not require independence and have finite-sample properties.
\\
\\
\emph{keywords:} Equivalence testing, false discovery proportion, non-inferiority testing, nonparametric, symmetry.
\end{abstract}

\section{Introduction}
In many settings where multiple hypotheses are tested, it is expected that most null hypotheses are true or approximately true.
However, there are also applications when one expects a priori that most of the parameters lie in a particular direction or lie within some interval, and the goal is to corroborate this expectation \citp{berger1996bioequivalence,hasler2013simultaneous,kang2014statistical,engel2021equivalence,leday2022multivariate}.  An illustration is provided in Figure \ref{fig:scatter2}.
 For example, before  a new crop---e.g. one that has been genetically modified---is introduced in the European Union,  the European Food Safety Authority often requires analyzing the concentrations of various molecules in the crop \citp{leday2023improved}. It must then for example be shown that the mean concentrations of certain analytes fall below a certain threshold or within some equivalence interval.
In the latter case, this means that one tests null hypotheses of non-equivalence. In some cases researchers test one global null hypothesis about all the parameters \citp{wang1999statistical,chervoneva2007multivariate,hoffelder2015multivariate}, while in other cases the problem is treated as a multiple testing problem 
 \citp{quan2001assessment,logan2008superiority,hua2016multiplicity}, especially when there are many parameters \citp{vahl2017statistical, leday2023improved}.
The null hypotheses are then often of the form $H_0:|\theta| \geq \delta$ and the alternatives  are $H_a:|\theta| < \delta$, with $\theta\in \mathbb{R}$ a parameter and $\delta>0$.
Researchers then often expect a priori that most or all hypotheses are false, and the goal is to statistically corroborate this using some multiple testing method \citp{vahl2017statistical, leday2023improved}. Often it is not possible to reject all hypotheses---even with the intersection-union principle \citp{berger1996bioequivalence, berger1997likelihood,hoffelder2015multivariate}---but typically the data suggest that most hypotheses are false. There is a strong need for innovative multiple testing methods that are tailored to such data.

\begin{figure}[ht] 
\centering
  \includegraphics[width=\linewidth]{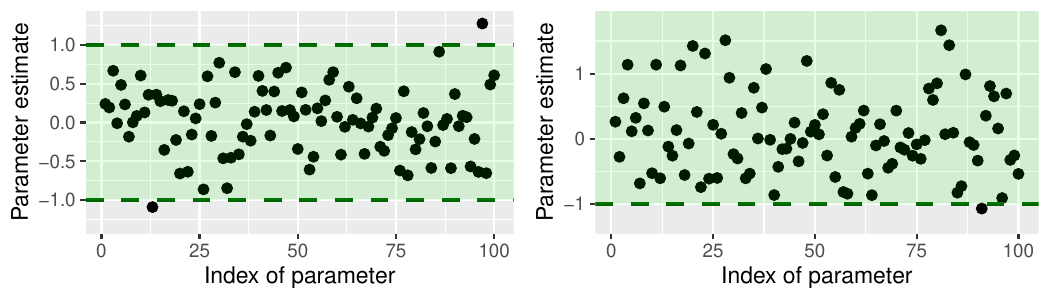}
  \caption{\emph{In some situations,  e.g. food safety testing, one wants to show that many  parameters lie in a certain interval or in a particular direction. In the figure on the left and on the right, the parameter estimates suggest that most parameters lie within the regions $(-1,1)$ and $(-1,\infty)$ respectively. In many such problems, based on prior knowledge, it is already expected that  most parameters lie within such a  region, and the goal is to corroborate that with data.
This paper proposes powerful, nonparametric (or semiparametric) procedures that do this. In Sections \ref{secnovel}-\ref{secet}, we  select  indices of parameters that seem to fall within the region, and estimate the number of incorrectly selected indices, i.e. the indices of parameters that in fact lie outside the region.
 In Section \ref{secfdx}, we provide methods that select a set of indices such that the median of the fraction of incorrectly selected indices---i.e., the median of the FDP---stays below some chosen small value $\gamma\in[0,1)$. }
} \label{fig:scatter2}
\end{figure} 

In this paper, we propose novel multiple testing methods that are powerful in such situations. The methods are not at all limited to safety testing and can be used for testing hypotheses about various kinds of parameters, such as coefficients in a generalized linear model (an example is in Section \ref{secdata}).
 The novel methods connect to a large literature, which we review first, before discussing this paper's contributions.

\subsection{Existing multiple testing approaches}
When multiple hypotheses are tested, the fraction of false discoveries among all rejected hypotheses is called the \emph{false discovery proportion} (FDP). When there are no rejections, the FDP is defined to be 0.
There exist a few different notions of FDP control. The oldest one is familywise error rate (FWER) control, which means guaranteeing that the FDP is 0  with high confidence \citp{goeman2014multiple}.
Apart from FWER control, the most well-known notion is to control the expected value $\mathbb{E}(FDP)$, which is called the \emph{false discovery rate} (FDR). FDR control means ensuring that $FDR\leq\gamma$, where $\gamma\in(0,1)$ is some user-specified bound \citp{benjamini1995controlling,benjamini2001control,dickhaus2014simultaneous,wang2022false}.

Another manner of providing statements on the FDP  is to fix some rejection region, reject all hypotheses with test statistics in this region and then provide a possibly data-dependent bound $\bar{FDP}$ for the FDP, which satisfies $\mathbb{P}(FDP\leq \bar{FDP} ) \geq 1-\alpha$. Here $\alpha\in (0,1)$ is a fixed, user-defined value, e.g. $0.05$. Permutation-based approaches for obtaining such bounds are provided in \citt{hemerik2018false}.
A related goal is to guarantee that the $FDP$ is small with large probability, i.e. that   $\mathbb{P}(FDP\leq \gamma ) \geq 1-\alpha$, where $\gamma\in[0,1)$ is chosen by the user. This is called \emph{false discovery exceedance control} or FDX control \citp{van2004augmentation,lehmann2005generalizations, romano2006stepdown, romano2006stepup, romano2007control, guo2007generalized, farcomeni2008review, roquain2011type, guo2014further, delattre2015new, javanmard2018online, ditzhaus2019variability, 
dohler2020controlling, basu2021empirical, miecznikowski2023exceedance}. FDX controlling methods  vary in the strictness of their assumptions. 
There are also methods that provide simultaneous bounds for false discovery proportions; these could in principle be used for FDX control but are not optimized for this purpose \citp{hemerik2019permutation, katsevich2020simultaneous, blanchard2020post,
 goeman2021only,blain2022notip,vesely2023permutation}.
%Note that if we take $\gamma=0$ in an FDX method, then we obtain FWER control. 

FDX control provides an attractive statistical guarantee on the hypotheses that have been rejected. On the other hand, FDX methods tend to have relatively low power when $\gamma$ and $\alpha$ are small. For this reason, it can be useful to take $\alpha=0.5$. We then know that with probability at least $0.5$, the FDP is at most $\gamma$, i.e. $\mathbb{P}(FDP\leq \gamma)\geq 0.5$. 
This means that one  controls the median of the FDP, or \emph{mFDP} for short \citp{hemerik2024flexible}. 
mFDP control means that if the procedure is performed on many independent datasets, then the median of the resulting FDPs will be at most $\gamma$.
Note that this is similar to FDR control; the only difference is that one  controls the median instead of the mean.

\subsection{Contributions}
This paper focuses on the types of multiple testing problems discussed at the beginning of this Introduction, i.e.,  we consider one-sided or non-equivalence hypotheses, where it is expected that most hypotheses are  false.
For these settings, this paper proposes nonparametric (or semiparametric) mFDP controlling methods that are more powerful than existing exact mFDP and FDR methods. The proposed methods are built in such a way that they have good power when most parameters tend to lie in their expected directions or intervals, which correspond to the alternative hypotheses. The methods are valid regardless of the actual parameters, but they only have good power if the user's a priori expectations are mostly correct---which is usually the case in the applications we are interested in here. Our approach then tends to be superior in power, even compared to known  exact mFDP procedures based on stepping down \citp{lehmann2005generalizations} or  closed testing \citp{hemerik2018false,goeman2021only,hemerik2024flexible}.
Of course, most multiple testing procedures become more powerful if appropriate one-sided tests  instead of two-sided tests are used. However, our approach exploits such knowledge more effectively, in  settings such as those in Figure \ref{fig:scatter2}. 

The proposed methods are based on computing a test statistic for every hypothesis and  exploiting a natural symmetry property of the test statistics. For example, it is well know that test statistics are often asymptotically multivariate normal, so that they are asymptotically jointly symmetric about their means.  Even if $n$ is finite and the test statistics are not normal  then they are often still symmetric, as we will explain.
The consequence of the symmetry property is that the joint distribution of the test statistics is unchanged when we reflect them about their means. We exploit reflected test statistics to derive a \emph{median unbiased} estimator of the FDP, i.e., an estimator that underestimates the FDP with probability at most 0.5 \citp{hemerik2024flexible}.  This is the same as a $50\%$-confidence upper bound for the FDP. Note that although we write ``median unbiased'' for short, we thus mean that the estimator is ``median not-downward biased''. Having a truly median unbiased estimate of the FDP seems impossible in general, because we would need information such as how many hypotheses are false.

Later a procedure for mFDP \emph{control} is proposed,  which makes use of the mentioned FDP estimates.
This mFDP controlling procedure evaluates the FDP estimates for different rejection regions, and then chooses a rejection region in such a way that the mFDP is at most $\gamma$. 
Since the FDP estimates used are not simultaneously valid $50\%$-confidence bounds, it is nontrivial to choose the rejection region in a way that leads to valid mFDP control.
However, Theorems \ref{thmcontrol} and \ref{thmasymptcontrol} state that this can be done.  
The manner of proving these results is new to our knowledge, since it is not based on closed testing or another familiar technique. %Our methods are valid for finite samples and do not require independence.

\subsection{Setup of this paper}
In Sections \ref{secexm} and \ref{secsingle}
we  review some existing settings where nonparametric methods are useful. Examples of tests for a single hypothesis are in  Section \ref{secsingle}. In Section \ref{secexm} we discuss multiple testing and the method SAM, which is the main nonparametric competitor for FDP estimation.

After introducing our main assumption in Section \ref{secmainas},
in Section \ref{secfdpest} we discuss median unbiased estimation of the FDP. 
For the case of one-sided testing---e.g., noninferiority testing---a novel method for median unbiased estimation of the FDP is in Section \ref{secnovel}.   Its admissibility is established in Section  \ref{secctad} and the method is conceptually compared to SAM in Section \ref{seccompSAM}.  Section \ref{secet} defines a median unbiased FDP estimator  for the case of equivalence testing. 

In Section \ref{secfdx} we discuss \emph{control} of the median of the FDP, i.e., methods which guarantee that $\mathbb{P}(FDP\leq \gamma)\geq 0.5$.
The novel mFDP controlling method is  discussed in Section \ref{seckorn} and implemented in the R package \verb|mFDP| \citp{hemerik2026mfdp}. This method builds on our median unbiased FDP estimates from Section \ref{secnovel} (directional testing) and Section \ref{secet} (equivalence testing). A link to the procedure in \citt{hemerik2024flexible} is discussed in Section \ref{secflex}. That method provides the flexibility of choosing $\gamma$ post hoc, but is less powerful for mFDP control than the novel method.
Section \ref{secsim} contains simulation and data analysis results. We end with a Discussion.

\section{Existing nonparametric  methods for FDP estimation} \label{secexm}
The methods proposed in this paper are nonparametric (or semiparametric), in the sense that the test statistics are not assumed to follow some parametric distribution. They can be used in many settings where existing nonparametric multiple testing methods can also be used. 
In Section \ref{secsingle}, we review some basic examples of nonparametric models and tests, for the case where there is a single hypothesis.
These models are extended to multiple testing settings in the examples in Section \ref{secexvalidass}.
It is worth remarking that the new multiple testing methods can also be combined with various parametric models and tests. Indeed,  test statistics are often asymptotically multivariate normal, and then the symmetry assumption that we will require (Assumption \ref{asssym} in Section \ref{secmainas}) is satisfied asymptotically.

Section \ref{secfdpest} is about median unbiased estimators of the FDP, i.e.,  estimators $\bar{FDP}$ satisfying  
\begin{equation} \label{eq:goalbound}
\mathbb{P}(FDP\leq \bar{FDP})\geq 0.5.
\end{equation}
We now discuss existing   nonparametric approaches for median unbiased estimation of the FDP. These are  compared to our novel approach conceptually in Section \ref{seccompSAM} and using simulation results in Section \ref{secsimest}.

The main nonparametric competitors of our novel mFDP estimators are the permutation methods in \citt{hemerik2018false}. 
 \citt{hemerik2018false} contains a fast method, which we will refer to as SAM, and a more powerful but much slower procedure, which we will refer to as SAM+CT. Here SAM stands for ``Significance Analysis of Microarrays'', but the methodology is very general and not at all limited to microarray data. Further, ``CT'' stands for closed testing,  which is a fundamental principle for constructing admissible multiple testing procedures \citp{genovese2006exceedance,goeman2011multiple,goeman2021only}. One message of this paper is that in the settings we are interested here, our estimators perform better than SAM and SAM+CT---in the sense that our estimates of the FDP  tend to be smaller.

Since the comparison of our estimators with SAM and SAM+CT is an important part of this paper, we summarize these methods here. 
Let $X$ be data, taking values in a sample space $\mathcal{S}$. Consider null hypotheses $H_1,...,H_m$ and corresponding test statistics $T_j: \mathcal{S} \rightarrow \mathbb{R}$, $1\leq j \leq m$, which may be dependent. 
\citt{hemerik2018false} considers quite general rejection regions which can depend on $1\leq j\leq m$, but for simplicity we will assume  the following rejection rule:  we reject all hypotheses $H_j$ with $T_j(X)>t$, where $t\geq 0$ is some prespecified value. This means that the set of  indices of the rejected hypotheses is 
$$ \mathcal{R}(t) =\mathcal{R}(X,t) =\{1\leq j \leq m: T_j(X)>t\}.$$
We will often refrain from writing the argument $X$ or $t$ when it is clear from the context.
Write $R(t)=R(X,t)=|\mathcal{R}(t)|$ and let 
$$ \N =\{1\leq j \leq m: H_j \text{ is true}\},$$ which we now assume to be nonempty for convenience. 
The number of false positive findings is then $V(t)=|\N\cap \R(t)|$ and the FDP is 
$$FDP(t) = V(t)/(R(t)\vee 1).$$

Let $\G$ be a set of transformations $g:\mathcal{S}\rightarrow\mathcal{S}$, such that $\G$ is a group under composition of maps  \citp{hemerik2018false}.
Here $\G$ may for example consist of sign-flipping maps or permutation maps as in Section \ref{secsingle}---the difference being that here the group typically acts on a whole data matrix, e.g. it may simultaneously sign-flip or permute the data. The main assumption from \citt{hemerik2018false} is Assumption \ref{assSAM} below. An example where this assumption holds is given under Assumption 1 in \citt[][p.139]{hemerik2018false}. 

\begin{assumption} \label{assSAM}
The joint distribution of the test statistics $T_j\{g(X)\}$ with $j\in \N$ and $g\in \G$  is
invariant under all transformations in $\G$ of $X$.
\end{assumption}

SAM then considers the \emph{permutation distribution of the number of rejections}. That is, let  $R^{(1)}\leq ...\leq R^{(|\G|)}$ be the sorted values $R(g(X),t)$, $g\in \G$.
Now define 
$$\bar{V}(t) =\bar{V}(X,t)  := \min\{R^{(\lceil(1-\alpha)|\G| \rceil)}, R\},$$
i.e., 
the minimum of $R(X,t)$ and the $(1-\alpha)$-quantile of the values $R(g(X),t)$, $g\in \G$. If  $\alpha=0.5$, this means taking the median of those values. Theorem 2 in \citt{hemerik2018false} states that $\mathbb{P}(V(t)\leq \bar{V}(t))\geq 1-\alpha$, which for $\alpha=0.5$ translates to 
$$\mathbb{P}(V(t)\leq \bar{V}(t))\geq 0.5.$$
Defining  $$\bar{FDP}(t) = \bar{V}(t) / (R(t)\vee 1) $$
gives a bound   $\bar{FDP}(t)$ satisfying inequality \eqref{eq:goalbound}.

As noted in \citt{hemerik2018false}, the bound $\bar{FDP}(t)$ can be uniformly improved, while still guaranteeing that   inequality \eqref{eq:goalbound} holds. This improvement is based on \emph{closed testing}. The improvement is provided in Proposition 1 of  \citt[][]{hemerik2018false} and in that paper the details are given.
This is the method that we will refer to as SAM+CT.  Intuitively, the idea is that especially when there are many false hypotheses, the basic SAM method will be conservative because the many false hypotheses cannot lead to false discoveries. SAM+CT addresses that issue. 

As discussed Appendix \ref{seccompSAM} and in \ref{secsimest},
our novel FDP estimator does not uniformly improve SAM+CT, but on average it povides lower FDP estimates, for the scenarios that we are interested in here. Another crucial advantage of our novel method is that it is much faster than SAM+CT. Whereas the computation time of SAM+CT is typically exponential $m$, the computation time of the novel method in linear in $m$, after sorting the p-values. In practice SAM+CT is only computationally feasible for quite small multiple testing problems ($m\leq100$). 
%Thus, our novel method provides a large  improvement both in terms of statistical and computational efficiency.

\section{Hypotheses and symmetry assumption} \label{secmainas}

In this paper we study multiple hypothesis testing problems of the following form.
Consider hypotheses $H_1,...,H_m$ and  corresponding  test statistics  $T_1,...,T_m$ with means $\mu_1,...,\mu_m\in \mathbb{R}$ respectively. These test statistics might e.g. be unbiased estimators of certain parameters $\theta_1,...,\theta_m$ of interest, in which case we simply have $\mu_j= \mathbb{E}(T_j)=\theta_j$, $1\leq j \leq m$.
Throughout Section \ref{secnovel}  we suppose that for every $1\leq j\leq m$ the null hypothesis is
 \begin{equation} \label{defoneside}
H_j:  \mu_j\leq \delta_j,
\end{equation}
where  $\delta_1,...,\delta_m\in \mathbb{R}$ are constants.  Testing non-equivalence hypotheses of the form  $$H_j:  |\mu_j|\geq \delta_j,$$
with $\delta_j>0$ for all $j$,
 will be covered in Section \ref{secet}.

Throughout the rest of this paper, we will assume   that the test statistics $T_1,...,T_m$ satisfy the following symmetry property.
In fact, it is only an assumption on the joint distribution of $(T_j: i\in \N)$, i.e., the   test statistics corresponding to the true hypotheses.

\begin{assumption} \label{asssym}
The vector $(T_j-\mu_j : j\in \N)$  is nondegenerate and satisfies  $$(T_j-\mu_j : j\in \N) \,{\buildrel d \over =}\, (-(T_j-\mu_j): j\in \N).$$
\end{assumption}
The above assumption  is reasonable in many situations. For example, if the statistics  $(T_j : j\in \N)$ are multivariate normal, then the assumption is satisfied. Note that even if that is not the case, then the  test statistics are often multivariate normal asymptotically. If we only require asymptotic validity of the proposed methods, we can replace Assumption \ref{asssym} by the assumption that it holds asymptotically.
 Further, we do not need normality at all. 
 Section \ref{secexvalidass} contains two basic examples where the statistics are not normal, but Assumption \ref{asssym} is satisfied for finite samples.

\section{FDP estimation} \label{secfdpest}
In this section we cover the proposed median unbiased FDP estimators.  Besides being useful in their own right, the estimators will be used by the mFDP controlling method from Section \ref{seckorn}.

\subsection{FDP estimator for one-sided tests} \label{secnovel}
In this section we define our median unbiased FDP estimator for the case of one-sided tests. 
In Section \ref{secctad} it will be proved that the estimator is admissible, i.e., it cannot be uniformly improved by a different median unbiased estimator.

For every $1\leq j \leq m$ let $\delta_j\in \mathbb{R}$, and pick $t\geq 0$. Consider hypotheses of the form \eqref{defoneside}.
Reject the hypotheses with indices in 
$$\mathcal{R}(t)=\mathcal{R}((T_1,...,T_m),t) = \{1\leq j \leq m: T_j-\delta_j > t\}.$$
Instead of  $\mathcal{R}((T_1,...,T_m),t)$, we will often write $\mathcal{R}(t)$ or $\mathcal{R}$ when it is clear from the context that the arguments are $(T_1,...,T_m)$ and $t$.
The number of false discoveries is then  $V = V(t) = \N\cap \mathcal{R}(t)$, where we used the notation introduced in Section \ref{secexm}. Also recall the notation $R=|\mathcal{R}|$. Write $a_1\wedge a_2$ for the minimum of two numbers $a_1$ and $a_2$.
Our estimator for the number of false positives is defined as 
$$
\tilde{V}(t) =  R^-(t)  \wedge  R(t),
$$
where $R^-(t) = | \R^-(t) |$ with
$$ \R^-(t)  := \{1\leq j \leq m: T_j-\delta_j<-t\}.$$
Analogously, our FDP estimator is defined to be
$$   \tilde{FDP}(t) =  \tilde{V}(t)/(R(t)\vee 1) .$$
The way in which we thus estimate the FDP may bring to mind the knockoffs procedure  \citt{barber2015controlling,candes2018panning}. Thus, the estimator  $\tilde{FDP}(t)$ is in itself not really novel. However, a difference with knockoffs is that the null test statistics based on knockoffs are mutually uncorrelated, whereas here they need not be.

We first establish that 
 under Assumption \ref{asssym}, these estimators are median unbiased, i.e., they satisfy
  \begin{equation} \label{eq:mainprV}
 \mathbb{P}\big\{V(t)\leq \tilde{V}(t)\big\}\geq 0.5,
 \end{equation}
 or equivalently,
 \begin{equation} \label{eq:mainpr}
 \mathbb{P}\big\{FDP(t)\leq \tilde{FDP}(t)\big\}\geq 0.5.
 \end{equation}
\begin{theorem} \label{thmnewvalid}
Under Assumption \ref{asssym}, the bound $\tilde{FDP}$ satisfies equation \eqref{eq:mainpr}.
\end{theorem}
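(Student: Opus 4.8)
The plan is to reduce \eqref{eq:mainpr}, equivalently \eqref{eq:mainprV}, to a statement about an \emph{exchangeable} pair of random counts. First I would note that, since $V(t)=|\mathcal{N}\cap\mathcal{R}(t)|\le|\mathcal{R}(t)|=R(t)$ holds deterministically, the event $\{V(t)\le\tilde V(t)\}=\{V(t)\le R^-(t)\wedge R(t)\}$ coincides with $\{V(t)\le R^-(t)\}$; so it suffices to show $\mathbb{P}(V(t)\le R^-(t))\ge 1/2$.

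Second, I would introduce the ``oracle'' counts $\hat V(t):=|\{j\in\mathcal{N}:T_j-\mu_j>t\}|$ and $\hat V^-(t):=|\{j\in\mathcal{N}:T_j-\mu_j<-t\}|$, built from the (unknown) true means rather than the $\delta_j$. The key is a pair of deterministic inclusions, both using that $\delta_j-\mu_j\ge 0$ for $j\in\mathcal{N}$, which is exactly what membership in $\mathcal{N}$ means under \eqref{defoneside}: for $j\in\mathcal{N}$, $T_j-\delta_j>t$ forces $T_j-\mu_j>t$, whence $V(t)\le\hat V(t)$; and $T_j-\mu_j<-t$ forces $T_j-\delta_j<-t$, whence $\hat V^-(t)\le|\{j\in\mathcal{N}:T_j-\delta_j<-t\}|\le R^-(t)$. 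Chaining these gives $\{\hat V(t)\le\hat V^-(t)\}\subseteq\{V(t)\le R^-(t)\}$, so it is enough to prove $\mathbb{P}(\hat V(t)\le\hat V^-(t))\ge 1/2$.

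Third, I would invoke Assumption \ref{asssym}. Consider the map $\phi$ sending a vector $(s_j:j\in\mathcal{N})$ to the pair $\bigl(|\{j\in\mathcal{N}:s_j>t\}|,\ |\{j\in\mathcal{N}:s_j<-t\}|\bigr)$. Applied to $(T_j-\mu_j:j\in\mathcal{N})$ it returns $(\hat V(t),\hat V^-(t))$, while applied to $(-(T_j-\mu_j):j\in\mathcal{N})$ it returns the same pair with its two coordinates swapped. Since Assumption \ref{asssym} says these two input vectors are equal in distribution, the pair $(\hat V(t),\hat V^-(t))$ is exchangeable. For any exchangeable pair $(A,B)$ of real random variables, $\mathbb{P}(A<B)=\mathbb{P}(B<A)$, hence $\mathbb{P}(A\le B)=\tfrac12\bigl(1-\mathbb{P}(A=B)\bigr)+\mathbb{P}(A=B)=\tfrac12\bigl(1+\mathbb{P}(A=B)\bigr)\ge\tfrac12$; taking $(A,B)=(\hat V(t),\hat V^-(t))$ completes the argument, and the equivalence of \eqref{eq:mainprV} and \eqref{eq:mainpr} was already noted.

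I do not anticipate a serious obstacle: the reductions via the shift $\delta_j-\mu_j\ge 0$ are elementary monotonicity, and the exchangeable-pair bound is standard. The one thing to get right is the direction of those inclusions---they work because rejection amounts to the threshold $t+(\delta_j-\mu_j)$ for $T_j-\mu_j$ while the reflected side uses $-t+(\delta_j-\mu_j)$, i.e.\ the same nonnegative shift in opposite directions. The real content is simply the observation that reflecting the null test statistics about their means interchanges $\hat V(t)$ and $\hat V^-(t)$, turning the symmetry hypothesis into exchangeability of the pair.
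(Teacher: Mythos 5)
Your proposal is correct and matches the paper's own proof in all essentials: your oracle counts $\hat V(t),\hat V^-(t)$ are exactly the quantities $N_1,N_2$ in the paper, the monotonicity reductions via $\delta_j\ge\mu_j$ for $j\in\mathcal{N}$ are the same, and your exchangeable-pair step is just a slightly more explicit phrasing of the paper's argument that $\mathbb{P}(N_1\le N_2)=\mathbb{P}(N_2\le N_1)$ and these probabilities sum to at least one. No gaps.
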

 Proofs are in Appendix \ref{appproofs}.
The estimator $\tilde{V}$ is conceptualy compared with SAM in Appendix \ref{seccompSAM}. There, an interesting property is discussed, namely that if $\alpha=0.5$, using a very small number of permutations in SAM may improve its power.
  
We now investigate whether the estimator $\tilde{V}(t)$ is admissible, i.e., whether it can be uniformly improved, without violating requirement \eqref{eq:mainprV}. 
Assume for a moment that $\mu_j=\delta_j$ for all $1\leq j \leq m$, so that all hypotheses are true, but barely so.
Then we can note the following.

\begin{lemma} \label{lemmaties}
Suppose Assumption \ref{asssym} holds and  $\mu_j=\delta_j$ for all $1\leq j \leq m$. Hence $\N=\{1,...,m\}$ and $V(t)=R(t)$.
Define the events
\begin{align*}
& E_1 = \{V(t) <  R^-(t)\} =  \{R(t) <  R^-(t)\}, \\
& E_2 = \{V(t) =  R^-(t)\} =  \{R(t) =  R^-(t)\}, \\
&E_3 = \{V(t) >  R^-(t)\} =   \{R(t) >  R^-(t)\}.
\end{align*}
Then $\mathbb{P}(E_1) = \mathbb{P}(E_3)$
and $\mathbb{P}(E_1) =   0.5  - 0.5\mathbb{P}(E_2)$.
\end{lemma}

Thus, in the scenario of Lemma \ref{lemmaties}, we will only have  $\mathbb{P}(E_1)=0.5$ if  $\mathbb{P}(E_2)=0$, which is however typically not the case---although it is often approximately true when there  are many hypotheses.
In conclusion, $\mathbb{P}(E_3)$ is typically strictly smaller than $0.5$ and consequently $\pr(V(t)> \tilde{V}(t)) $ is typically strictly smaller than $0.5$. It can be seen that this is also the case for other values of $\delta_1,...,\delta_m$.

This suggests to slightly adapt the procedure in the situation that  $R(t) $ and $ R^-(t)$ are tied, in order to have a method that can be shown to be admissible. Indeed, consider the following adapted bound $\tilde{V}'$:
$$
    \tilde{V}'(t)= 
\begin{cases}
   d  & \text{if } R(t) =  R^-(t) \\
    \tilde{V}(t)          & \text{otherwise,}
\end{cases}
$$
where $d$ is an independent  variable that equals $\tilde{V}(t)$ or $-\infty$ both with probability $0.5$.
This procedure sets the bound to $-\infty$ with probability 0.5 when $R(t) =  R^-(t)$. 
Since the bound $\tilde{V}'(t)$ is based on a random coin flip $d$, we
do not recommend using $\tilde{V}'(t)$ in practice. However, we will show that $\tilde{V}'(t)$ is admissible. 
Since typically $\tilde{V}(t)=\tilde{V}'(t)$ with high probability when there are many rejected hypotheses, it follows that $\tilde{V}$ is essentially admissible for most practical purposes.

The following result states that the bound $\tilde{V}'(t)$ is valid. 

\begin{proposition} \label{basicsvbarprime}
Suppose  Assumption \ref{asssym} holds.
The bound $\tilde{V}'(t)$ is median unbiased in the sense that $\mathbb{P}\{V(t) \leq   \tilde{V}'(t)\}\geq 0.5.$
Further, in case  $\mu_j=\delta_j$ for all $1\leq j \leq m$, we   have  $\mathbb{P}\{V(t) \leq   \tilde{V}'(t)\} = 0.5 $.
\end{proposition}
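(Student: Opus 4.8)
The plan is to evaluate $\mathbb{P}\{V(t)\le\tilde{V}'(t)\}$ directly, conditioning on the observable event $\{R(t)=R^-(t)\}$ and using that the coin flip $d$ is drawn independently of the data. On $\{R(t)\ne R^-(t)\}$ we have $\tilde{V}'(t)=\tilde{V}(t)$. On $\{R(t)=R^-(t)\}$ we have $\tilde{V}(t)=R(t)\wedge R^-(t)=R(t)$, and since $V(t)\le R(t)$ always holds, the inequality $V(t)\le\tilde{V}(t)$ holds automatically there; hence on that event $\{V(t)\le\tilde{V}'(t)\}$ coincides with $\{d=\tilde{V}(t)\}$, of conditional probability $\tfrac12$. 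Combining the two cases, and using $\{R(t)=R^-(t)\}\subseteq\{V(t)\le\tilde{V}(t)\}$,
$$\mathbb{P}\{V(t)\le\tilde{V}'(t)\}=\mathbb{P}\{V(t)\le\tilde{V}(t)\}-\tfrac12\,\mathbb{P}\{R(t)=R^-(t)\}.$$
So it remains to show $\mathbb{P}\{V(t)\le\tilde{V}(t)\}\ge\tfrac12+\tfrac12\,\mathbb{P}\{R(t)=R^-(t)\}$, with equality when $\mu_j=\delta_j$ for all $j$.

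For this lower bound I would reuse the inequalities behind Theorem~\ref{thmnewvalid}. Write $V^+=|\{j\in\N:T_j-\mu_j>t\}|$ and $V^-=|\{j\in\N:T_j-\mu_j<-t\}|$. For $j\in\N$ we have $\delta_j\ge\mu_j$, which gives $V(t)\le V^+$ and $V^-\le R^-(t)$, while $V(t)\le R(t)$; in particular $\{V(t)>R^-(t)\}\subseteq\{V^+>V^-\}$. Assumption~\ref{asssym} makes $(V^+,V^-)\,{\buildrel d \over =}\,(V^-,V^+)$, so $\mathbb{P}(V^+>V^-)=\tfrac12\{1-\mathbb{P}(V^+=V^-)\}$. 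Rewriting the target as the equivalent bound on the bad event, namely $\mathbb{P}\{V(t)>R^-(t)\}+\tfrac12\,\mathbb{P}\{R(t)=R^-(t)\}\le\tfrac12$ (the two events are disjoint because $V\le R$, and $\{V>\tilde{V}\}=\{V>R^-\}$), and using $\mathbb{P}\{V(t)>R^-(t)\}\le\mathbb{P}(V^+>V^-)$, it suffices to establish $\mathbb{P}\{R(t)=R^-(t)\}\le\mathbb{P}(V^+=V^-)$, i.e. that the observable discount term never exceeds the symmetry slack.

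The equality claim then follows immediately from Lemma~\ref{lemmaties}: if $\mu_j=\delta_j$ for every $j$ then $\N=\{1,\dots,m\}$, $V(t)=R(t)=V^+$ and $R^-(t)=V^-$, so $\{R(t)=R^-(t)\}$ is precisely the event $E_2$; the lemma gives $\mathbb{P}\{V(t)\le\tilde{V}(t)\}=\mathbb{P}\{R(t)\le R^-(t)\}=\mathbb{P}(E_1)+\mathbb{P}(E_2)=\tfrac12+\tfrac12\,\mathbb{P}(E_2)$, whence $\mathbb{P}\{V(t)\le\tilde{V}'(t)\}=\tfrac12$. The step I expect to be the main obstacle is the general inequality $\mathbb{P}\{R(t)=R^-(t)\}\le\mathbb{P}(V^+=V^-)$: the left-hand side involves the statistics of the false hypotheses as well, whereas Assumption~\ref{asssym} constrains only the $\N$-indexed statistics, so one must argue that the false hypotheses cannot shrink $\mathbb{P}(V^+=V^-)$ below $\mathbb{P}\{R(t)=R^-(t)\}$. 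I would attempt this through a coupling that moves one mean $\mu_j$ at a time upward through $\delta_j$, reducing to the barely-true configuration handled by Lemma~\ref{lemmaties} while tracking both quantities simultaneously; making that monotone comparison airtight is the crux.
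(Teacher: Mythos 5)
Your reductions are correct and in fact cleaner than the paper's: the exact identity $\mathbb{P}\{V(t)\le\tilde{V}'(t)\}=\mathbb{P}\{V(t)\le\tilde{V}(t)\}-\tfrac12\mathbb{P}\{R(t)=R^-(t)\}$ holds (the paper instead argues conditionally on $\{R(t)\ne R^-(t)\}$ and on $\{R(t)=R^-(t)\}$ separately), and your treatment of the boundary case $\mu_j=\delta_j$ via Lemma \ref{lemmaties} is essentially the paper's proof of the second statement. The genuine gap is exactly the step you flag, and no coupling will close it: the inequality $\mathbb{P}\{R(t)=R^-(t)\}\le\mathbb{P}(V^+=V^-)$ is false in general. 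Take $m=2$, $t=1$, $\delta_1=\delta_2=0$, $\N=\{1\}$ with $\mu_1=0$ and $T_1-\mu_1$ symmetric about $0$ with its mass concentrated near $\pm 10$, and a strongly false $H_2$ with $T_2\approx 100$ almost surely. Then $\mathbb{P}(V^+=V^-)=\mathbb{P}(|T_1|\le 1)\approx 0$, whereas $R(t)=R^-(t)=1$ whenever $T_1<-1$, so $\mathbb{P}\{R(t)=R^-(t)\}\approx\tfrac12$.

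Moreover, your identity makes transparent that the obstacle is not an artifact of your route: in this example $\{V(t)\le\tilde{V}(t)\}$ fails precisely when $T_1>1$, so $\mathbb{P}\{V(t)\le\tilde{V}(t)\}\approx\tfrac12$ and hence $\mathbb{P}\{V(t)\le\tilde{V}'(t)\}\approx\tfrac12-\tfrac12\cdot\tfrac12=\tfrac14<0.5$, i.e.\ the first claim itself cannot hold under Assumption \ref{asssym} alone once strongly false hypotheses inflate $\mathbb{P}\{R(t)=R^-(t)\}$ beyond the symmetry slack $\tfrac12\mathbb{P}(V^+=V^-)$. The paper's own proof passes this point by asserting $\mathbb{P}\{N_1\le N_2\mid R(t)\ne R^-(t)\}=\mathbb{P}\{N_2\le N_1\mid R(t)\ne R^-(t)\}$ (its $N_1,N_2$ are your $V^+,V^-$); but the conditioning event depends on the non-null statistics and is not invariant under reflection of $(T_j-\mu_j:j\in\N)$, so this conditional symmetry does not follow from Assumption \ref{asssym} and it fails in the same example. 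In short: your argument is sound up to the flagged inequality, that inequality is unprovable because it is false, and repairing the general statement requires extra structure---for instance flipping the coin only on a tie event determined by the null statistics, or otherwise ensuring the probability removed by the coin does not exceed $\tfrac12\mathbb{P}(V^+=V^-)$---rather than a sharper monotone coupling.
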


Proposition \ref{basicsvbarprime} implies that in some cases $\mathbb{P}\{V(t) \leq   \tilde{V}'(t)\} = 0.5 =\alpha$, i.e., in some cases $\tilde{V}'(t)$ exhausts $\alpha$. However, this does not yet imply that $\tilde{V}'(t)$ is admissible. 
Indeed, multiple testing methods that exhaust $\alpha$ when $\N=\{1,...,m\}$, are not always admissible.
For example, Bonferroni also exhausts $\alpha$ in certain situations where $\N=\{1,...,m\}$, yet it is not admissible since it is uniformly improved by Bonferroni-Holm \citp{holm1979simple, goeman2014multiple}. 
Nevertheless, it turns out that the simple estimator $\tilde{V}'(t)$ is admissible.
In Section \ref{secctad}
 we link the bounds $\tilde{V}$ and $\tilde{V}'$ to the theory of \emph{closed testing} and prove that the bound $\tilde{V}'$ is admissible, i.e., that it cannot be uniformly improved. Thus, by Section \ref{secctad}, we have the following admissibility result.

\begin{theorem} \label{thmadm}
%Let $\mathcal{S}$ be a Borel measurable subset of $\mathbb{R}^m$, e.g. $\mathbb{R}^m$ itself, and 
Assume the distribution of $(T_1,...,T_m)$ satisfies Assumption \ref{asssym}.
Then the bound $\tilde{V}'(t)$ is admissible, i.e., there exists no method producing bounds $\tilde{V}^*(t)$ based $T_1,...,T_m$ with the following properties: 
\begin{enumerate}
\item The bound $\tilde{V}^*(t)$ is valid, i.e.,   it generally holds that  $\mathbb{P}\{V(t)\leq \tilde{V}^*(t)\}\geq 0.5$ under Assumption \ref{asssym}.
\item $\tilde{V}'(t) \geq  \tilde{V}^*(t)$ almost surely, i.e., $\tilde{V}^*(t)$ is never higher than $\tilde{V}'$ under Assumption \ref{asssym}.
\item For at least one distribution satisfying Assumption \ref{asssym}, $\mathbb{P}\{ \tilde{V}'(t) >\tilde{V}^*(t)\}>0$, i.e., $\tilde{V}^*$ improves $\tilde{V}'$.
\end{enumerate}
\end{theorem}

\subsection{FDP estimator for equivalence testing} \label{secet}
%Opzet:
%definitie, validiteit
%coincides with ctp, can be adjusted to become admissible.
%

In Section \ref{secnovel}  we have defined and studied the median unbiased FDP estimator $\tilde{V}$ for the case of directional testing.
Here, we provide an estimator for the case of equivalence testing. 
It will be used by the mFDP controlling method from Section \ref{seckorn}.
Here, we consider multiple null hypotheses of non-equivalence 
$$H_j: |\mu_j|\geq \delta_j, \quad 1\leq j \leq m,$$
with $\delta_j>0$ for every $1\leq j \leq m$.
We say that $\mu_j$ lies in the \emph{equivalence region} if $\mu_j\in (-\delta_j,\delta_j)$.

For $t\geq 0$ define
\begin{align*}
\R(t) :=& \{1\leq j \leq m: |T_j|<\delta_j -t  \}, \qquad R(t):= |\R(t)|, \\
\R^-(t):=&   \{1\leq j \leq m: |T_j|>\delta_j + t  \}, \qquad R^-(t):= |\R^-(t)|.  
\end{align*}
We reject all hypotheses with indices in $\R(t)$. Thus, we only reject the hypotheses $H_j$ for which $T_j$ lies far enough within the equivalence region $(-\delta_j,\delta_j)$. Clearly, for $t\geq \max\{\delta_j: 1\leq j \leq m   \}$,  $R(t)$ is simply 0.
The number of false discoveries is as usual $V(t)=\N\cap \R(t) $.
Further, let $\tilde{V} :=   R^- \wedge R$.  Thus,  in the present section, we redefine several of the symbols from Section \ref{secnovel}. The following theorem states that in the present  setting, $\tilde{V}$ a median unbiased  estimator of $V$ under  Assumption  \ref{asssym} from Section \ref{secmainas}.

 \begin{theorem} \label{boundetvalid}
Assume  $(T_i:i\in \N)$ satisfies Assumption \ref{asssym}. 
 Then $\mathbb{P}(V(t)\leq   \tilde{V}(t)) \geq 0.5.$
 \end{theorem}
 
It turns out that like in Section \ref{secnovel} (one-sided testing),  in the present setting (equivalence testing) $\tilde{V}$  coincides with a closed testing procedure.  This is further discussed in Section \ref{appVtildectp}.

\section{mFDP control} \label{secfdx}
We now focus on achieving control of the median of the FDP, or mFDP control for short. This means guaranteeing that $\mathbb{P}(FDP\leq \gamma ) \geq 0.5$.  We discuss two methods. The first one (see Section \ref{seckorn}) is novel and usually most powerful for mFDP control (as confirmed in Section \ref{secsimcon}). The second method (Section \ref{secflex}) is partly novel, since it builds on theory from \citt{hemerik2024flexible}. This procedure has the added flexibility that $\gamma$ can be chosen after seeing the data. The price to pay for this added flexibility is lower power.

\subsection{Method 1: novel mFDP  controlling procedure} \label{seckorn}  
%\citt{korn2004controlling} provide an elegant procedure that aims to provide FDX control, i.e., it aims to ensure that $\mathbb{P}(FDP\leq \gamma ) \geq 1-\alpha$ given some prespecified $\alpha\in (0,1)$ and $\gamma\in [0,1)$. 
%This procedure has received quite some attention in the literature. As \citt{korn2004controlling} and multiple later papers note, the method is only heuristic, i.e., it has not been proven that it provides FDX control \citp{romano2007control,delattre2015new}. On the other hand, to our knowledge, no one has   provided any example where the method is invalid. 
The method that we propose here relies on the $50\%$-confidence bounds $\tilde{V}$ that have been proposed in Sections \ref{secnovel} and \ref{secet} for one-sided testing and equivalence testing respectively. The rough idea is to choose the rejection threshold in such a way that the FDP is estimated to be at most $\gamma$ for this threshold and all stricter thresholds. This is similar to the idea underlying the methods in \citt{korn2004controlling}, \citt{romano2007control} and \citt{hemerik2025resampling}. A major difference with those procedures  is that they are  often not as powerful when there are many false hypotheses.

As in Sections \ref{secnovel} and \ref{secet}, we consider the bound $\tilde{V}(t)$, which is a function of the threshold $t\in [0,\infty)$.
%, where the set of thresholds of interest is   $[0,\infty)$ in case of one-sided testing (Section \ref{secnovel}) and $ [0, c]$ in case of equivalence testing (Section \ref{secet}), where $c\leq \min\{\de_j:1\leq j \leq m\}$. In the latter case,  we simply extend the  domain to  all $t\geq 0$ by defining $\tilde{V}(t)=R(t)=0$ in case $t\geq c$.
%Thus, in general, we consider the domain of thresholds to be $[0,\infty)$.
In general, $\R$ and $\R^-$  are functions $[0,\infty ) \rightarrow \{I: I\subseteq \{1,...,m\}\}$ and $R$, $V$ and $\tilde{V}$ are functions $[0,\infty ) \rightarrow \mathbb{N}$. Further, 
$$\tilde{FDP}(t):=\tilde{V}(t)/(R(t)\vee 1)$$
is a function $[0,\infty) \rightarrow [0,1]$.
Note that the meanings of these symbols depend on whether we are in the setting of Sections \ref{secnovel} or Section \ref{secet}.
In general,  the functions $R(\cdot)$, $V(\cdot)$ and $\tilde{V}(\cdot)$ are data-dependent  step functions with a finite number of discontinuities. 
Assume for convenience that with probability 1, $|T_1\pm\delta_1|,...,|T_m\pm\delta_m|$ are all distinct.
It follows  that with probability 1, the discontinuities of $R(\cdot)$ and $\tilde{V}(\cdot)$ are disjoint.

Both in Section \ref{secnovel} and Section \ref{secet},  $R(t)$,  $R^-(t)$, $V(t)$ and $\tilde{V}(t)$ are non-increasing in $t$. This is not generally true  however for the bound $\tilde{FDP}(t)$. Indeed, if  $\tilde{V}(\cdot )$ has a discontinuity at
 $t\in \co$, then $\tilde{FDP}(\cdot)$ jumps down at $t$, but if  $R(\cdot )\vee 1$  has a  discontinuity at $t$, then $\tilde{FDP}(\cdot)$ jumps up at $t$. However, if many hypotheses are false, then the tendency is for $\tilde{FDP}(\cdot)$ to roughly decrease in $t$, as illustrated in Figure \ref{fig:constructionofs}. Indeed, the larger $t$ is, the stricter the method is and the smaller $FDP(t)$ tends to be. 
  \black A further observation is that $R(\cdot )$, $\tilde{V}(\cdot )$ and hence $\tilde{FDP}(\cdot )$  are right-continuous.

\subsubsection{Definition of the procedure}
Let $\M' \subseteq (0,\infty)$ be the set of arguments $t$ where  ${R}(\cdot )$ or $R^-(\cdot )$ has discontinuities. Note that $\pr(\M'\neq\emptyset)=1$.
Let $\M= \{0\}\cup\M'$.
Define 
$$
s =\max\{t\in\M: \tilde{FDP}(t) >\gamma   \}, % \vee 0.
$$
where the maximum of the empty set is $-\infty$.
Note that $s$ is strictly smaller than $M:=\max(\M)$, since $\tilde{FDP}(M) =R(M)=0$. Let
\begin{equation} \label{splusasmin}
s^+=\min\{t\in \M: t>s\}.
\end{equation}
Note that $s < s^+$ and often $s \approx s^+$. Further, $\tilde{FDP}(s^+)\leq \gamma.$  Since $\tilde{FDP}(\cdot )$ is a step function that is continuous from the right, $\tilde{FDP}(t)=\tilde{FDP}(s)>\gamma$ for all $t\in [s,s^+)$. It follows that 
\begin{equation} \label{splusassup}
s^+ =\sup\{t\geq 0: \tilde{FDP}(t)>\gamma\}.
\end{equation}
The construction of $s^+$ is illustrated in Figure \ref{fig:constructionofs}.
While the formulation \eqref{splusassup} is elegant, the formulation \eqref{splusasmin} is useful from a computational perspective.

The \emph{proposed procedure rejects all hypotheses with test statistics strictly larger than $s^+$,} i.e., it rejects all hypotheses with indices in $\R(s^+)$. This procedure is implemented in the functions \emph{mFDP.direc()} (for directional testing) and \emph{mFDP.equiv()} (for equivalence testing) of the R package \verb|mFDP| \citp{hemerik2026mfdp}.

\begin{figure}[ht] 
\centering
  \includegraphics[width=0.6\linewidth]{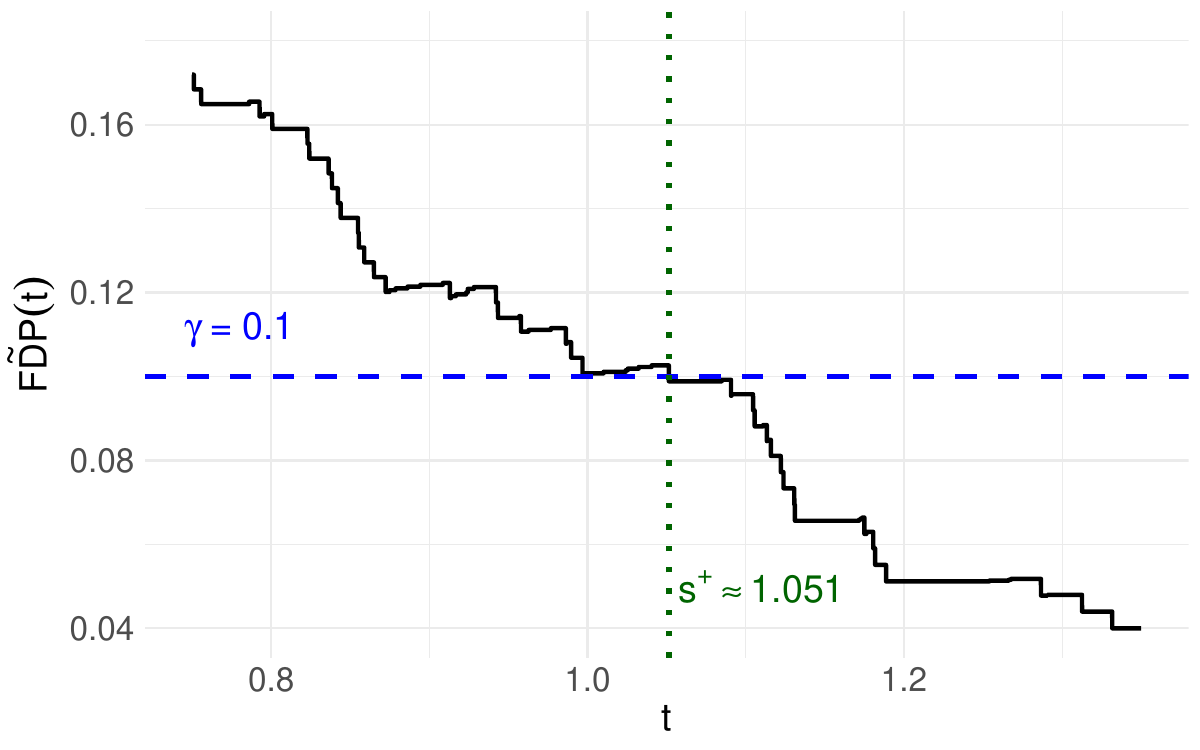}
  \caption{\emph{This figure illustrates the construction  of $s^+$. The smaller the target FDP $\gamma$ is chosen to be, the larger---i.e., stricter---the threshold $s^+$ is. In case of directional testing, our method rejects all hypotheses $H_j$ with $T_j>\delta_j+s^+$. In case of equivalence testing, it rejects all hypotheses $H_j$ with $|T_j|<\delta_j-s^+$.}
} \label{fig:constructionofs}
\end{figure}

\subsubsection{Properties of the procedure}
Obviously, our procedure provides valid mFDP control if and only if 
\begin{equation} \label{eq:askorn}
\pr\{  FDP(s^+) \leq \gamma   \} \geq 0.5.
\end{equation}
A somewhat stronger property is 
\begin{equation} \label{eq:askorn2}
\pr\{\forall t\geq s^+: FDP(t) \leq \gamma   \} \geq 0.5,
\end{equation}
which means that we have mFDP control simultaneously over all thresholds $t\geq s^+$.

In simulations, we did not find any setting where condition \eqref{eq:askorn} or  \eqref{eq:askorn2} was violated.
In addition, we now provide two theorems that state properties under which condition \eqref{eq:askorn2} is provably satisfied, for finite samples or asymptotically.

\begin{theorem} \label{thmcontrol}
Suppose Assumption \ref{asssym} holds.
Assume that all hypotheses are true or that $(T_i:i\in \N)$ is independent of $(T_i:i\in \N^c)$, where $\N^c=\{1,...,m\}\setminus \N$.
Then property \eqref{eq:askorn2} and hence \eqref{eq:askorn} is satisfied, both in the setting of Section \ref{secnovel}, i.e. directional testing, and in the setting of  Section \ref{secet}, i.e. equivalence testing.
In the former setting, this means that  if we reject all hypotheses $H_j$ with $T_j>\delta_j+s^+$, then property \eqref{eq:askorn2} holds. In the latter setting,   this means that property \eqref{eq:askorn2} holds  if we reject all hypotheses $H_j$ with $|T_j|<\delta_j-s^+$.
\end{theorem}

The additional assumption in the above theorem is in theory unpleasant, but we did not find a simulation scenario where it was necessary. Moreover, existing FDX methods either make dependence assumptions as well, or  have  low power \citp{delattre2015new}. The following theorem guarantees asymptotic mFDP control without requiring this assumption.

\begin{theorem} \label{thmasymptcontrol}
Suppose that the test statistics $(T_1,...,T_m)=(T_1^n,...,T_m^n)$ depend on an index $n$, which typically denotes the sample size.
Suppose we are in the  setting of Section \ref{secnovel}, i.e. directional testing. Assume that for every $j\in \N^c$, as $n\rightarrow\infty$,   $T_j$ converges to $\infty$ in probability.  Further, suppose that for every $j\in \N$, $T_j$ stays bounded from above in probability as $n\rightarrow\infty$.
The property \eqref{eq:askorn2} and hence \eqref{eq:askorn} are asymptotically satisfied, i.e., 
$$\liminf_{n\rightarrow\infty} \pr\big\{\forall  t\geq s^+: FDP(t)\leq \gamma  \big\} \geq 0.5.$$
\end{theorem}

\begin{example}
Consider the situation that we are interested in hypotheses $H_1,..., H_m$ of the form $H_j: \theta_j \leq c_j$. Suppose we have corresponding right-sided t-statistics $T_1,...,T_m$. Then for each $j\in \N$, $\mu_j=\mathbb{E}(T_j)\leq 0$, while we expect the test statistics $T_j$ with $j\in \N^c$ to become very large as the sample size increases. The hypotheses are of the required form \eqref{defoneside} if we take $\delta_1=...=\delta_m=0$. Thus, Theorem \ref{thmasymptcontrol} applies.
\end{example}

A notable property of our methods is that if $\R(0)=\{1,...,m\}$---i.e., all test statistics lie in the noninferiority  or equivalence region (e.g. the shaded regions in Figure \ref{fig:scatter2})---then   $\tilde{V}(t)=0$ for all $t\geq 0$ and  $\R(s^+)=\{1,...,m\}$, i.e., we reject all hypotheses, regardless of $\gamma$.
Indeed, if $\R(0)=\{1,...,m\}$, then clearly $\tilde{V}(t)=0$ for all $t\geq 0$ and hence $s=-\infty$, so that $s^+=0$ and $\R(s^+)=\{1,...,m\}$. This property is an extreme example of a more general property of our methods: if $\R(0)$ is large, then the bounds $\tilde{V}(t)$ are small and our mFDP controlling procedure tends to reject many hypotheses. This explains the good simulation results in  Sections \ref{secsimest} and \ref{secsimcon}.

\subsection{Method 2: flexible mFDP  control} \label{secflex}
The method in \citt{hemerik2024flexible} provides \emph{flexible control of the mFDP}. This means that the procedure not only ensures that $\pr\{FDP>\gamma\}\leq 0.5$ but in addition allows the user to freely choose $\gamma\in [0,1)$ based on the data, without invalidating mFDP control. This added flexibility comes with a price in terms of power. Indeed, this procedure is usually less powerful than the method from Section \ref{seckorn}, as confirmed with simulations in Section \ref{secsimcon}. The flexibility however clearly also has advantages, as discussed in \citt{hemerik2024flexible}. Note that the new method also provides some flexibility, in the sense that  property \eqref{eq:askorn2} says that we have mFDP control simultaneously over all $t\geq s^+$.

A seemingly major difference between the method of \citt{hemerik2024flexible} and those that have been proposed in this paper, is that the former is based on $\pvs$, rather than directly using a symmetry property of the test statistics. However, as shown there, it is actually not necessary to have valid $\pvs$, i.e., they need not be uniformly distributed  under the null. Moreover, we will show that under Assumption \ref{asssym}, we can transform our test statistics into $\pvs$ that can validly be used within the procedure of \citt{hemerik2024flexible}. 
In particular,  even if we do not known the null distributions of our test statistics $T_1,...,T_m$, then we can simply guess their symmetric null distributions and validly use the resulting invalid $\pvs$ in  the method of \citt{hemerik2024flexible}. The reason is that is suffices that the null $\pvs$ stochastically dominate a distribution with a certain symmetry property.  Indeed, letting $N=|\N|$, it suffices that the distribution of $(P_j: j \in \N)$ is stochastically larger than a random vector $(Q_1,...,Q_{N})$ taking values in $\mathbb{R}^N$ that satisfies
\begin{equation} \label{sympvs}
(Q_1,...,Q_N) \,{\buildrel d \over =}\,   (1-Q_1,...,1-Q_N),
\end{equation}
i.e.,
$$  (Q_1-0.5,...,Q_N-0.5) \,{\buildrel d \over =}\,   (0.5-Q_1,...,0.5-Q_N).$$

The following theorems state that if we construct potentially invalid $\pvs$  in the way just mentioned, then the methods in \citt{hemerik2024flexible}  can be validly applied based on these   $\pvs$. Theorem \ref{pvsgood} is about directional testing and Theorem \ref {pvsgoodet} is about equivalence testing.

\begin{theorem} \label{pvsgood}
Consider the setting of Section \ref{secnovel} (directional testing) and suppose Assumption \ref{asssym} holds.
For every $1\leq j \leq m$, consider some probability density function $f_j: \mathbb{R}\rightarrow \mathbb{R}^+ $ that is symmetric about 0.
Here, $f_j$ is the guessed  or perhaps exactly known null distribution of $T_j-\mu_j$. 
For every $1\leq j \leq m$, consider the $\pv$ 
$P_j = \int_{T_j-\delta_j}^{\infty} f_j(x) dx$.

Then, $(P_j:j \in \N)$ is deterministically larger than or equal to some vector satisfying property \eqref{sympvs}. As a consequence, the $\pvs$ can validly be used in the methods of  \citt{hemerik2024flexible}. In particular, as in \citt{hemerik2024flexible}, let $\cop\subset[0,1]$ be the range of $\pv$ thresholds of interest, where usually a good choice is $\cop=[0,0.5]$ or some smaller set.
Then 
 the bound $\tilde{B}: \cop \rightarrow \mathbb{N} $ constructed in  Section 3.3 of that paper satisfies
$$\pr\Big[ \bigcap_{v\in\cop} \big|\{j\in \N: P_i< v\}\big|\leq \tilde{B}(v) \Big]\geq 0.5,$$
i.e., the values $\tilde{B}(v)$ are simultaneous $50\%$ confidence bounds for the number of false positives $|\{j\in \N: P_i< v\}|$.
In particular, we can choose any $\gamma\in [0,1)$ after seeing the data and choose any rejection threshold $t\in \cop$ such that  $\tilde{B}(t) /  (|\{1\leq j \leq m: P_i< t\}\big|\vee 1)\leq \gamma$. This will guarantee that the median of the FDP is at most $\gamma$, i.e. $\pr(FDP(t)\leq \gamma)\geq 0.5$.
\end{theorem}

Note that in the theorem, the guessed null distributions $f_j$ are not required to be correct, although it can be advantageous power-wise if they are roughly correct.
The $\pvs$ from Theorem \ref{pvsgood} can directly be provided as arguments within the function \emph{mFDP.adjust()} of the R package \verb|mFDP|  \citp{hemerik2026mfdp}. This function computes \emph{adjusted} $\pvs$; rejecting the hypotheses with adjusted $\pvs$ smaller than $\gamma$ guarantees that $\pr(FDP(t)\leq \gamma)\geq 0.5$.

We now formulate a similar result for the case of equivalence testing.

\begin{theorem} \label{pvsgoodet}

Consider the setting of Section \ref{secet} (equivalence testing) and suppose Assumption \ref{asssym} holds.
For every $1\leq j \leq m$, consider some probability distribution function $f_j: \mathbb{R}\rightarrow \mathbb{R}^+$ that is symmetric about 0.
Here, $f_j$ is the guessed  or perhaps exactly known null distribution of $T_j-\mu_j$. 
For every $1\leq j \leq m$, define
\begin{equation} \label{TOSTpv}
P_j =
\begin{cases}
\int_{T_j+\delta_j }^{\infty } f_j(x) dx     & \text{if } T_j<0 \\
   \int_{-\infty}^{T_j-\delta_j} f_j(x) dx     & \text{if } T_j\geq0.
\end{cases}
\end{equation}
Then, $(P_j:j \in \N)$ is  larger than or equal to some vector satisfying property \eqref{sympvs}. As a consequence the results from Theorem \ref{pvsgood} apply here as well.
\end{theorem}

\begin{remark} \label{remTOST}
In Theorem \ref{pvsgoodet}, $P_j$ is the so-called \emph{TOST-p-value}, where ``TOST'' stands for ``two one-sided tests'' \citp{meyners2012equivalence,lakens2017equivalence}. This is the most common \emph{p}-value used for equivalence testing.  In appendix \ref{appproofs}, after the proof of Theorem \ref{pvsgoodet}, we prove that $P_j$ is indeed the TOST-\emph{p}-value.
\end{remark}

\section{Simulations and data analysis} \label{secsim}
In  Sections \ref{secnovel} and \ref{secet} we discussed median unbiased estimation of the FDP. The method defined in Section \ref{secnovel} is compared to alternatives in Section \ref{secsimest} below. Our mFDP controlling method from Section \ref{seckorn} is compared with competitors in Section \ref{secsimcon}. Thus, Section \ref{secsimest} is about \emph{estimation} of the FDP
 and Section \ref{secsimcon} is about \emph{controlling} the mFDP, i.e., keeping the mFDP below a prespecified value $\gamma$. Finally, in Section \ref{secdata} we analyze real data.

\subsection{Median unbiased estimation of the FDP} \label{secsimest}
The main competitor of our method from Section \ref{secnovel} is the SAM procedure from  \citt{hemerik2018false}, which uses permutations or other transformations to provide a median unbiased estimate of the FDP. Here we will show that for the settings that we are interested in in this paper, i.e. settings with parameters trending in a particular direction, our approach outperforms SAM. For SAM, the transformations that we use below are sign-flips. In case  of permutations (of cases and controls), similar results were obtained, i.e., our method clearly outperformed SAM. 

To generate test statistics, we first simulated an $n\times m$ matrix of independent standard normal variables, with $n=10$ and $m=500$. In case the data were not normal but had some other symmetric distribution, we obtained comparable results, so here we will stick to normal data. We allowed the test statistics to be correlated, which we achieved by sampling $n$ values and adding the $i$-th value to all elements of the $i$-th row of the data  matrix, $1\leq i \leq n$. This leads to a homogeneous correlation structure, where we denote the correlation by $\rho$. We also simulated (negatively correlated) blocks of positively correlated test statistics, which gave similar results to those shown here. To the columns corresponding to the false hypotheses we added a constant $d>0$. The null hypotheses were $H_j: \mu_j\leq \delta$, $1\leq j \leq m$, where $\delta:=0$.

In Figure \ref{fig:meanest} the average estimate $\tilde{FDP}$ of our method is compared to SAM's median unbiased estimate. Here $\pi_0:=|\N|/m$ is the number of true hypotheses divided by the total number of hypotheses. It can be seen that our method provided on average lower estimates of the FDP when there were many false hypotheses ($\pi_0=0.1$), which is the situation we are interested in in this paper.

\begin{figure}[ht!] 
\centering
  \includegraphics[width=\linewidth]{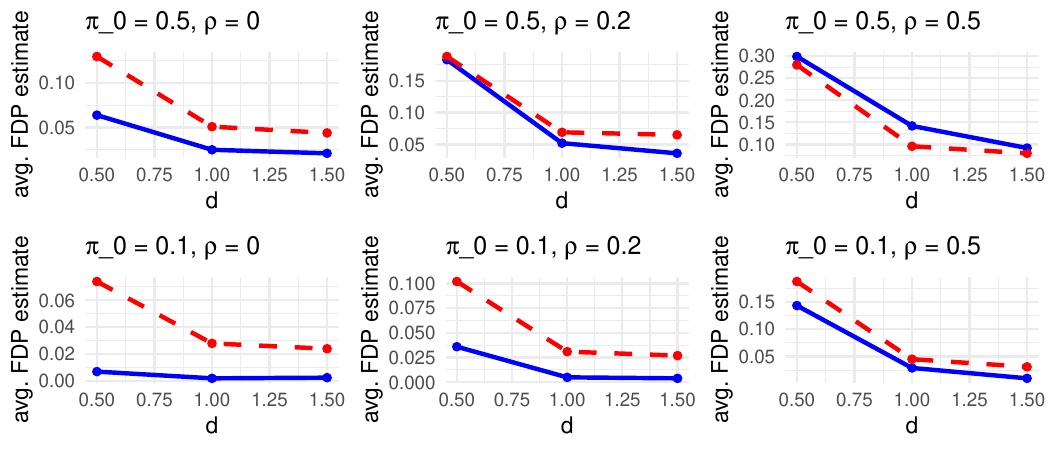}
  \caption{ \emph{The average FDP estimate of our novel estimator $\tilde{FDP}$ from Section \ref{secnovel} (solid lines) and SAM (dashed lines) as depending on  the fraction $\pi_0$ of true hypotheses, the homogeneous correlation $\rho$ between the test statistics and the effect size $d$. The number of hypotheses was $m=500$. Each estimate is based on $5\cdot 10^3$ simulations.}} \label{fig:meanest}
\end{figure}

As discussed in Section \ref{secexm}, SAM is uniformly improved by SAM+CT. However, SAM+CT is only computationally feasible when $m$ is not too large. Hence we took $m=50$ in the simulations of SAM+CT. A comparison of the novel method, SAM and SAM+CT is in Figure \ref{fig:meanestCT}.  Note that SAM+CT clearly performed better than SAM.
Further, we see that our method performed relatively well when $\pi_0$ was small, which is the setting we are interested in here. The additional advantage of our method is that it is very fast, unlike SAM+CT, which is computationally infeasible for large problems.

\begin{figure}[ht!] 
\centering
  \includegraphics[width=\linewidth]{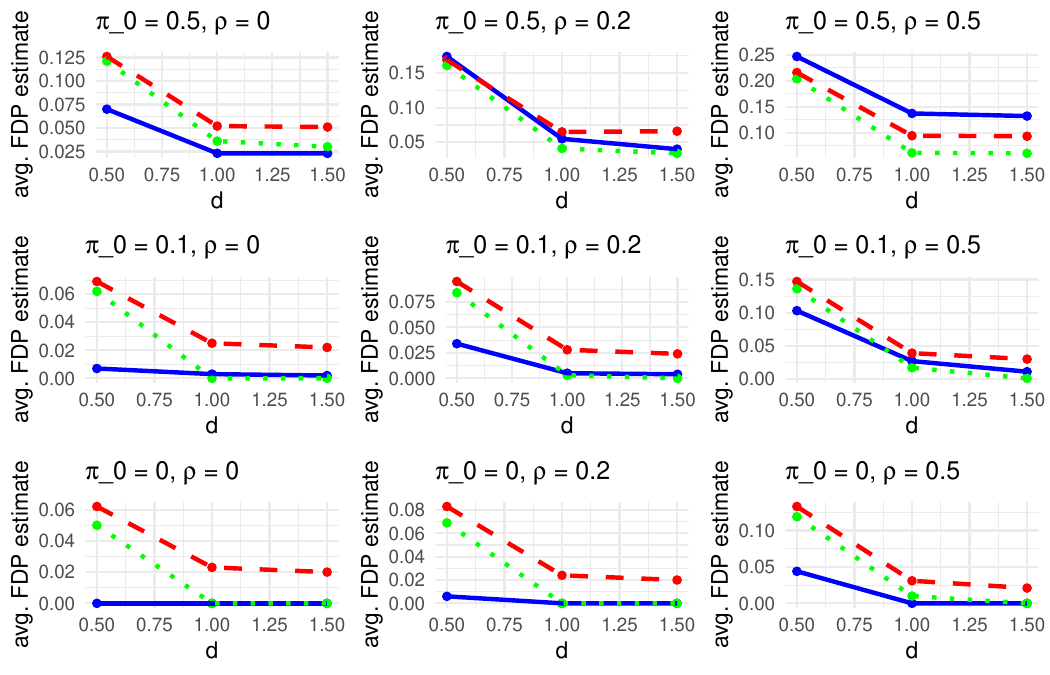}
  \caption{ \emph{The average FDP estimate of our novel estimator $\tilde{FDP}$ from Section \ref{secnovel} (solid lines), SAM (dashed) and SAM+CT (dotted) as depending on   the fraction $\pi_0$ of true hypotheses, the correlation $\rho$ between the test statistics and the effect size $d$. The number of hypotheses was taken to be $m=50$, since SAM+CT is computationally infeasible for large $m$. Each estimate is based on $10^3$ simulations.}} \label{fig:meanestCT}
\end{figure}

\subsection{mFDP control} \label{secsimcon}

Here, we show that the novel mFDP controlling procedure from Section \ref{seckorn} often has better power than some major competitors. The first method that we compare with, is the procedure from \citt{hemerik2024flexible} discussed in Section \ref{secflex}. This method is implemented in the function \emph{mFDP.adjust} from the R package \verb|mFDP| \citp{hemerik2026mfdp}. 
It is relevant to mention that that method is already compared with the procedures from \citt{goeman2019simultaneous} and \citt{katsevich2020simultaneous} in simulations in \citt{hemerik2024flexible}.

In addition, we compare with the Romano-Wolf stepdown method \citp[][Algorithm 4.1]{romano2007control}. This is and FDX method and $\alpha=0.5$ was used to obtain mFDP control. 
The Romano-Wolf method iteratively uses k-FWER methods. The k-FWER used within Romano-Wolf was the usual stepdown k-FWER method based on the $\pvs$, see e.g. \citt[][p.1142-1143]{lehmann2005generalizations}. The $\pvs$ used  were right-sided and were computed by comparing the test statistics with the standard normal distribution. Note that the Romano-Wolf method is in this case equivalent to the most powerful method among the  procedures in \citt{lehmann2005generalizations}, namely the one from pp.1147-1148. This procedure requires some assumptions on the dependence structure and is more powerful than the methods in \citt{romano2006stepdown} and \citt{romano2006stepup}.

Further,  we compare with the Benjamini-Hochberg procedure, which ensures that the mean rather than the median of the FDP is below $\gamma$ \citp{benjamini1995controlling}. The  Benjamini-Hochberg method is also based on $\pvs$. The $\pvs$ were computed in the same way as for Romano-Wolf.

The data were simulated as in Section \ref{secsimest}. For each method, the target $FDP$, $\gamma$---often called $\alpha$ in the context of Benjamini-Hochberg---was set to 0.1. We computed the power as the mean fraction of false hypotheses that were rejected. We considered effect sizes $d\in\{1,2,3\}$. Again, we focused on settings where $\pi_0$ was far from 1, which are the settings we are interested in in this paper.
The results are in Figure \ref{fig:pow}.
We see that our method had better power than the other methods, although naturally the difference with  the method from \citt{hemerik2024flexible} was small when that method had power near $100\%$.

\begin{figure}[ht!] 
\centering
  \includegraphics[width=\linewidth]{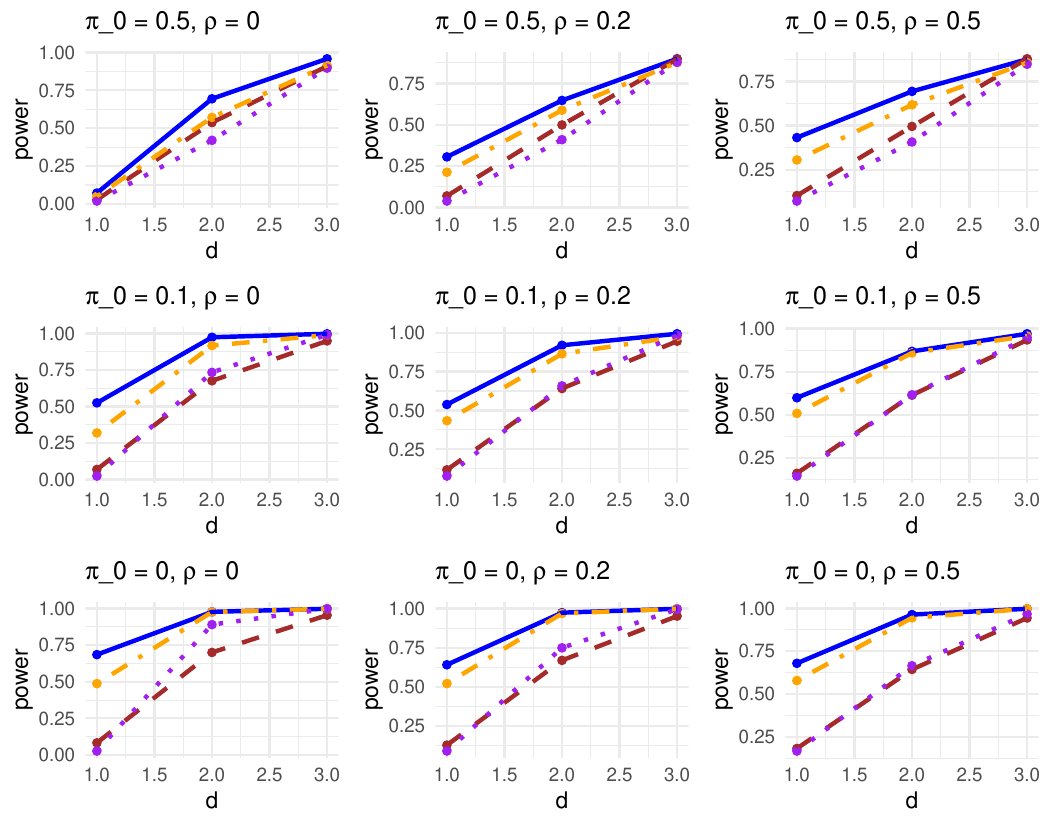}
  \caption{ \emph{The power of our novel method from Section \ref{seckorn} (solid lines) versus Benjamini-Hochberg (dashed), Romano-Wolf (dotted) and the method from \citt{hemerik2024flexible} (dash-dotted) as depending on  the fraction $\pi_0$ of true hypotheses, the correlation $\rho$ between the test statistics and the effect size $d$. The number of hypotheses was $m=500$. Each estimate is based on $5\cdot 10^3$ simulations.}
} \label{fig:pow}
\end{figure}

\subsection{Data analysis} \label{secdata}
To illustrate the methods, we applied them  to a dataset on  residential property sales in Ames, Iowa
from 2006 to 2010 \citp{cock2011ames}. We used the cleaned data which are available through the R package \verb|AmesHousing| \citp{ameshousing}. The dataset contains 2930 rows, which correspond to sold individual properties. The data  contain the sale price and various variables that might affect it, such as the lot area, the number of bathrooms above grade, the build year and the roof style.

We removed variables that were near-constant and standardized all quantitative variables, including the sale price. We fitted a standard least squares linear regression model, where sale price was regressed on all other variables, including the categorical variables. For all 31 quantitative covariates we considered the coefficient estimates $T_1,...,T_{31}$, say. 
We did not consider the coefficients of the dummy variables (representing categories), since many categories were very rare.

It is well-known that the coefficient estimates are asymptotically multivariate normal, so that Assumption \ref{asssym} is approximately satisfied.
For each $1\leq j \leq 31 $ we considered the one-sided null hypothesis 
$$H_j:\mu_j\leq -0.1$$ 
versus the complementary alternative, where $\mu_j=\mathbb{E}(T_j)$, which is the true value of the $j$-th coefficient if the linear model is correct.  Such hypotheses are of interest if we want to show that many of the covariates do not have a strong negative effect on the sales price. In this context we would expect that, since many of the quantitative variables relate in a positive way to the size of the residential property. 
We took $\gamma=0.1$ and applied the method from Section \ref{seckorn}, based on the  FDP estimator $\tilde{FDP}$  for one-sided tests from Section \ref{secnovel}. This led to rejection of 30 of the 31 hypotheses. This means that with $50\%$ confidence, we know that at least $(1-\gamma)\cdot 30 =27$ of these 30 coefficients are truly larger than $-0.1$. Note that a rejection here has the same meaning as when we would  control the FDR, except that we control the median instead of the mean of the FDP.
The only hypothesis that was not rejected, corresponds to the variable indicating the amount of unfinished basement surface. This was the only variable with a coefficient below $-0.1$. Thus, we see that all hypotheses with test statistics above $-0.1$ were rejected.

For comparison, we also applied the FDR method of Benjamini and Hochberg \citp{benjamini1995controlling} to test the hypotheses $H_j:\mu_j\leq -0.1$. The target FDR (often called $\alpha$) was also taken to be $\gamma=0.1$. As input for Benjamini-Hochberg, we used right-sided p-values based on the t-statistics $(\hat{\mu}_j-(-0.1))/{s.e.}$
 corresponding to these hypotheses. Benjamini-Hochberg rejected 29 out of 31 hypotheses, so it failed to reject one hypothesis with test statistic above $-0.1$.

It is interesting to see what happens when we reduce the sample size to $n=100$. Indeed, the p-values which e.g. Benjamini-Hochberg uses as input then tend to be less small.
  The mFDP method, on the other hand, is not based on the p-values but purely on the coefficient estimates. Of course, the power of the mFDP method is also expected to suffer somewhat---since the coefficient estimates will be less accurate.
  To reduce the sample size to $n=100$, we randomly sampled 100 rows from the dataset, and used these as the new data. We re-fitted the model, using only the quantitative predictors, to avoid having more than $n$ covariates. This time, among the 31 estimated coefficients, four were below $-0.1$.
    Our method rejected 27 out of 31 hypotheses, meaning that in this example, we were able to reject all hypotheses with estimated coefficients above $-0.1$.  
Benjamini-Hochberg rejected only 16 out of 31 hypotheses. Bonferroni (with $\alpha=0.1$) rejected 7 out of 31 hypotheses.

\section{Discussion}

In many situations, test statistics are  symmetric about their means $\mu_j$, either for finite samples or asymptotically. 
In this paper, we have taken this property as the main assumption.
If we expect a priori that  the $\mu_j$ tend to  lie in the directions or intervals corresponding to the alternative hypotheses, then our procedures  exploit this  and gain power compared to existing methods.
 At the same time, the methods are valid  regardless of whether these a priori expectations are correct.

Because of their combinatorical nature, nonparametric approaches such as ours   lend themselves better to controlling the median of the FDP  \citp{romano2007control,hemerik2018false, hemerik2019permutation} than to controlling the mean of the FDP, i.e., the FDR. However, FDR control is definitely an attractive criterion. For example, FDR control always implies \emph{weak} FWER control. This means that if all hypotheses are true, then the probability of at least one incorrect rejection will be controlled. On the other hand, in the settings that we are interested here, most hypotheses tend to be false, so weak FWER control is less of an advantage. Besides, we could guarantee weak FWER control as well by first performing a global test, which tests whether all $m$ hypotheses are true.
 Another relevant point though is that the distribution of the FDP tends to be skewed to the right, so that naturally the mFDP tends to be smaller than the FDR. 
In conclusion,  mFDP control should only be preferred over FDR control when there are specific reasons for this, such as the availability of an mFDP method with much better power or with fewer assumptions. For example, in the settings that we are interested in in this paper, FDR methods with proven finite-sample validity tend to be very conservative, which can be a reason to use our mFDP approach instead. 

The way in which we estimate  the FDP  based on a symmetry property is related to the knockoffs procedures from  \citt{barber2015controlling}.   In many respects those methods are very different and the underlying setting is different.
However, the similarity is that the knockoffs procedures are also based on counting how many test statistics are above  $t$ and below $-t$, for different $t\geq 0$. 
A difference between their approach and ours is that null  test statistics based on knockoffs are mutually uncorrelated, whereas here they are potentially correlated. 

Our multiple testing approach is clearly nonparametric (or semiparametric), since it requires no assumptions about the distributional shapes of the test statistics, except symmetry.
What makes our methods different from other nonparametric procedures---e.g. permutation methods such as \citt{hemerik2018false}, \citt{hemerik2019permutation} and \citt{blain2022notip}---is that we do not directly use symmetries of the data, but only of the resulting test statistics. This would seem very crude, and e.g. using permutations---if possible---would seem more sophisticated. However,  when most of the hypotheses are false, our approach tends to lead to  very low bounds for $V$ and consequently high power. 
Thus, our approach is very \emph{adaptive}, in the sense that it adapts very well to the amount of signal in the data. 
Additionally, when implemented with care, all methods in this paper have a computation time that is linear in $m$, after sorting the test statistics.

\setlength{\bibsep}{3pt plus 0.3ex}  %change space between references
\def\bibfont{\small}  %\def\bibfont{\footnotesize}     %change font size in references

\bibliographystyle{biblstyle}
\bibliography{bibliography}

\begin{thebibliography}{85}
\providecommand{\natexlab}[1]{#1}
\providecommand{\url}[1]{\texttt{#1}}
\expandafter\ifx\csname urlstyle\endcsname\relax
  \providecommand{\doi}[1]{doi: #1}\else
  \providecommand{\doi}{doi: \begingroup \urlstyle{rm}\Url}\fi

\bibitem[Andreella et~al.(2023)Andreella, Hemerik, Weeda, Finos, and
  Goeman]{andreella2023permutation}
Andreella, A., Hemerik, J., Weeda, W., Finos, L., and Goeman, J.
\newblock Permutation-based true discovery proportions for {fMRI} cluster
  analysis.
\newblock \emph{Statistics in Medicine. Online First version}, 2023.

\bibitem[Arboretti et~al.(2021)Arboretti, Pesarin, and
  Salmaso]{arboretti2021unified}
Arboretti, R., Pesarin, F., and Salmaso, L.
\newblock A unified approach to permutation testing for equivalence.
\newblock \emph{Statistical Methods \& Applications}, 30\penalty0 (3):\penalty0
  1033--1052, 2021.

\bibitem[Barber and Cand{\`e}s(2015)]{barber2015controlling}
Barber, R.~F. and Cand{\`e}s, E.~J.
\newblock Controlling the false discovery rate via knockoffs.
\newblock \emph{The Annals of Statistics}, 43\penalty0 (5):\penalty0
  2055--2085, 2015.

\bibitem[Basu et~al.(2021)Basu, Fu, Saretto, and Sun]{basu2021empirical}
Basu, P., Fu, L., Saretto, A., and Sun, W.
\newblock Empirical {b}ayes control of the false discovery exceedance.
\newblock \emph{arXiv preprint arXiv:2111.03885}, 2021.

\bibitem[Benjamini and Hochberg(1995)]{benjamini1995controlling}
Benjamini, Y. and Hochberg, Y.
\newblock Controlling the false discovery rate: a practical and powerful
  approach to multiple testing.
\newblock \emph{Journal of the royal statistical society. Series B
  (Methodological)}, pages 289--300, 1995.

\bibitem[Benjamini and Yekutieli(2001)]{benjamini2001control}
Benjamini, Y. and Yekutieli, D.
\newblock The control of the false discovery rate in multiple testing under
  dependency.
\newblock \emph{Annals of statistics}, pages 1165--1188, 2001.

\bibitem[Berger(1997)]{berger1997likelihood}
Berger, R.~L.
\newblock Likelihood ratio tests and intersection-union tests, 1997.

\bibitem[Berger and Hsu(1996)]{berger1996bioequivalence}
Berger, R.~L. and Hsu, J.~C.
\newblock Bioequivalence trials, intersection-union tests and equivalence
  confidence sets.
\newblock \emph{Statistical Science}, 11\penalty0 (4):\penalty0 283--319, 1996.

\bibitem[Berrett et~al.(2020)Berrett, Wang, Barber, and
  Samworth]{berrett2020conditional}
Berrett, T.~B., Wang, Y., Barber, R.~F., and Samworth, R.~J.
\newblock The conditional permutation test for independence while controlling
  for confounders.
\newblock \emph{Journal of the Royal Statistical Society Series B: Statistical
  Methodology}, 82\penalty0 (1):\penalty0 175--197, 2020.

\bibitem[Blain et~al.(2022)Blain, Thirion, and Neuvial]{blain2022notip}
Blain, A., Thirion, B., and Neuvial, P.
\newblock Notip: Non-parametric true discovery proportion control for brain
  imaging.
\newblock \emph{NeuroImage}, 260\penalty0 (119492), 2022.

\bibitem[Blanchard et~al.(2020)Blanchard, Neuvial, Roquain,
  et~al.]{blanchard2020post}
Blanchard, G., Neuvial, P., Roquain, E., et~al.
\newblock Post hoc confidence bounds on false positives using reference
  families.
\newblock \emph{Annals of Statistics}, 48\penalty0 (3):\penalty0 1281--1303,
  2020.

\bibitem[Canay et~al.(2017)Canay, Romano, and Shaikh]{canay2017randomization}
Canay, I.~A., Romano, J.~P., and Shaikh, A.~M.
\newblock Randomization tests under an approximate symmetry assumption.
\newblock \emph{Econometrica}, 85\penalty0 (3):\penalty0 1013--1030, 2017.

\bibitem[Candes et~al.(2018)Candes, Fan, Janson, and Lv]{candes2018panning}
Candes, E., Fan, Y., Janson, L., and Lv, J.
\newblock Panning for gold:‘model-x’knockoffs for high dimensional
  controlled variable selection.
\newblock \emph{Journal of the Royal Statistical Society Series B: Statistical
  Methodology}, 80\penalty0 (3):\penalty0 551--577, 2018.

\bibitem[Chervoneva et~al.(2007)Chervoneva, Hyslop, and
  Hauck]{chervoneva2007multivariate}
Chervoneva, I., Hyslop, T., and Hauck, W.~W.
\newblock A multivariate test for population bioequivalence.
\newblock \emph{Statistics in medicine}, 26\penalty0 (6):\penalty0 1208--1223,
  2007.

\bibitem[Chung and Fraser(1958)]{chung1958randomization}
Chung, J.~H. and Fraser, D.~A.
\newblock Randomization tests for a multivariate two-sample problem.
\newblock \emph{Journal of the American Statistical Association}, 53\penalty0
  (283):\penalty0 729--735, 1958.

\bibitem[Cock(2011)]{cock2011ames}
Cock, D.~D.
\newblock Ames, {I}owa: Alternative to the {B}oston housing data as an end of
  semester regression project.
\newblock \emph{Journal of Statistics Education}, 19\penalty0 (3):\penalty0
  1--15, 2011.

\bibitem[De~Santis et~al.(2025{\natexlab{a}})De~Santis, Goeman, Davenport,
  Hemerik, and Finos]{de2025permutation}
De~Santis, R., Goeman, J.~J., Davenport, S., Hemerik, J., and Finos, L.
\newblock Permutation-based multiple testing when fitting many generalized
  linear models.
\newblock \emph{Electronic Journal of Statistics}, 19\penalty0 (2):\penalty0
  3317--3332, 2025{\natexlab{a}}.

\bibitem[De~Santis et~al.(2025{\natexlab{b}})De~Santis, Goeman, Hemerik,
  Davenport, and Finos]{de2025inference}
De~Santis, R., Goeman, J.~J., Hemerik, J., Davenport, S., and Finos, L.
\newblock Inference in generalized linear models with robustness to
  misspecified variances.
\newblock \emph{Journal of the American Statistical Association}, 120\penalty0
  (552):\penalty0 2762--2771, 2025{\natexlab{b}}.

\bibitem[Delattre and Roquain(2015)]{delattre2015new}
Delattre, S. and Roquain, E.
\newblock New procedures controlling the false discovery proportion via
  {Romano-Wolf}’s heuristic.
\newblock \emph{The Annals of Statistics}, 43\penalty0 (3):\penalty0
  1141--1177, 2015.

\bibitem[DiCiccio and Romano(2017)]{diciccio2017robust}
DiCiccio, C.~J. and Romano, J.~P.
\newblock Robust permutation tests for correlation and regression coefficients.
\newblock \emph{Journal of the American Statistical Association}, 112\penalty0
  (519):\penalty0 1211--1220, 2017.

\bibitem[Dickhaus(2014)]{dickhaus2014simultaneous}
Dickhaus, T.
\newblock \emph{Simultaneous statistical inference: with applications in the
  life sciences}.
\newblock Springer Science \& Business Media, 2014.

\bibitem[Ditzhaus and Janssen(2019)]{ditzhaus2019variability}
Ditzhaus, M. and Janssen, A.
\newblock Variability and stability of the false discovery proportion.
\newblock \emph{Electronic Journal of Statistics}, 13\penalty0 (1):\penalty0
  882--910, 2019.

\bibitem[Dobriban(2022)]{dobriban2022consistency}
Dobriban, E.
\newblock Consistency of invariance-based randomization tests.
\newblock \emph{The Annals of Statistics}, 50\penalty0 (4):\penalty0
  2443--2466, 2022.

\bibitem[D{\"o}hler and Roquain(2020)]{dohler2020controlling}
D{\"o}hler, S. and Roquain, E.
\newblock Controlling the false discovery exceedance for heterogeneous tests.
\newblock \emph{Electronic Journal of Statistics}, 14\penalty0 (2):\penalty0
  4244--4272, 2020.

\bibitem[Engel and van~der Voet(2021)]{engel2021equivalence}
Engel, J. and van~der~ Voet, H.
\newblock Equivalence tests for safety assessment of genetically modified crops
  using plant composition data.
\newblock \emph{Food and Chemical Toxicology}, 156:\penalty0 112517, 2021.

\bibitem[Farcomeni(2008)]{farcomeni2008review}
Farcomeni, A.
\newblock A review of modern multiple hypothesis testing, with particular
  attention to the false discovery proportion.
\newblock \emph{Statistical methods in medical research}, 17\penalty0
  (4):\penalty0 347--388, 2008.

\bibitem[Fisher(1935)]{fisher1935}
Fisher, R.~A.
\newblock \emph{The design of experiments}.
\newblock Oliver and Boyd, 1935.

\bibitem[Genovese and Wasserman(2006)]{genovese2006exceedance}
Genovese, C.~R. and Wasserman, L.
\newblock Exceedance control of the false discovery proportion.
\newblock \emph{Journal of the American Statistical Association}, 101\penalty0
  (476):\penalty0 1408--1417, 2006.

\bibitem[Goeman and Solari(2011)]{goeman2011multiple}
Goeman, J.~J. and Solari, A.
\newblock Multiple testing for exploratory research.
\newblock \emph{Statistical Science}, 26\penalty0 (4):\penalty0 584--597, 2011.

\bibitem[Goeman and Solari(2014)]{goeman2014multiple}
Goeman, J.~J. and Solari, A.
\newblock Multiple hypothesis testing in genomics.
\newblock \emph{Statistics in medicine}, 33\penalty0 (11):\penalty0 1946--1978,
  2014.

\bibitem[Goeman et~al.(2019)Goeman, Meijer, Krebs, and
  Solari]{goeman2019simultaneous}
Goeman, J.~J., Meijer, R.~J., Krebs, T.~J., and Solari, A.
\newblock Simultaneous control of all false discovery proportions in
  large-scale multiple hypothesis testing.
\newblock \emph{Biometrika}, 106\penalty0 (4):\penalty0 841--856, 2019.

\bibitem[Goeman et~al.(2021)Goeman, Hemerik, and Solari]{goeman2021only}
Goeman, J.~J., Hemerik, J., and Solari, A.
\newblock Only closed testing procedures are admissible for controlling false
  discovery proportions.
\newblock \emph{The Annals of Statistics}, 49\penalty0 (2):\penalty0
  1218--1238, 2021.

\bibitem[Guo and Romano(2007)]{guo2007generalized}
Guo, W. and Romano, J.
\newblock A generalized {Sidak-Holm} procedure and control of generalized error
  rates under independence.
\newblock \emph{Statistical applications in genetics and molecular biology},
  6\penalty0 (1), 2007.

\bibitem[Guo et~al.(2014)Guo, He, Sarkar, et~al.]{guo2014further}
Guo, W., He, L., Sarkar, S.~K., et~al.
\newblock Further results on controlling the false discovery proportion.
\newblock \emph{The Annals of Statistics}, 42\penalty0 (3):\penalty0
  1070--1101, 2014.

\bibitem[Hasler and Hothorn(2013)]{hasler2013simultaneous}
Hasler, M. and Hothorn, L.
\newblock Simultaneous confidence intervals on multivariate non-inferiority.
\newblock \emph{Statistics in Medicine}, 32\penalty0 (10):\penalty0 1720--1729,
  2013.

\bibitem[Hemerik(2025)]{hemerik2025resampling}
Hemerik, J.
\newblock Resampling-based multi-resolution false discovery exceedance control.
\newblock \emph{arXiv preprint arXiv:2509.02376}, 2025.

\bibitem[Hemerik(2026)]{hemerik2026mfdp}
Hemerik, J.
\newblock \emph{mFDP: Control of the Median of the FDP.}, 2026.
\newblock URL \url{https://cran.r-project.org/package=mFDP}.
\newblock R package version 0.2.2.

\bibitem[Hemerik and Goeman(2018{\natexlab{a}})]{hemerik2018exact}
Hemerik, J. and Goeman, J.
\newblock Exact testing with random permutations.
\newblock \emph{TEST}, 27\penalty0 (4):\penalty0 811--825, 2018{\natexlab{a}}.

\bibitem[Hemerik and Goeman(2018{\natexlab{b}})]{hemerik2018false}
Hemerik, J. and Goeman, J.~J.
\newblock False discovery proportion estimation by permutations: confidence for
  significance analysis of microarrays.
\newblock \emph{Journal of the Royal Statistical Society: Series B (Statistical
  Methodology)}, 80\penalty0 (1):\penalty0 137--155, 2018{\natexlab{b}}.

\bibitem[Hemerik and Goeman(2021)]{hemerik2021another}
Hemerik, J. and Goeman, J.~J.
\newblock Another look at the lady tasting tea and differences between
  permutation tests and randomisation tests.
\newblock \emph{International Statistical Review}, 89\penalty0 (2):\penalty0
  367--381, 2021.

\bibitem[Hemerik et~al.(2019)Hemerik, Solari, and
  Goeman]{hemerik2019permutation}
Hemerik, J., Solari, A., and Goeman, J.~J.
\newblock Permutation-based simultaneous confidence bounds for the false
  discovery proportion.
\newblock \emph{Biometrika}, 106\penalty0 (3):\penalty0 635--649, 2019.

\bibitem[Hemerik et~al.(2020)Hemerik, Goeman, and Finos]{hemerik2020robust}
Hemerik, J., Goeman, J.~J., and Finos, L.
\newblock Robust testing in generalized linear models by sign flipping score
  contributions.
\newblock \emph{Journal of the Royal Statistical Society Series B: Statistical
  Methodology}, 82\penalty0 (3):\penalty0 841--864, 2020.

\bibitem[Hemerik et~al.(2021)Hemerik, Thoresen, and
  Finos]{hemerik2021permutation}
Hemerik, J., Thoresen, M., and Finos, L.
\newblock Permutation testing in high-dimensional linear models: an empirical
  investigation.
\newblock \emph{Journal of Statistical Computation and Simulation}, 91\penalty0
  (5):\penalty0 897--914, 2021.

\bibitem[Hemerik et~al.(2024)Hemerik, Solari, and Goeman]{hemerik2024flexible}
Hemerik, J., Solari, A., and Goeman, J.~J.
\newblock Flexible control of the median of the false discovery proportion.
\newblock \emph{Biometrika}, 111\penalty0 (4):\penalty0 1129–1150, 2024.

\bibitem[Hoeffding(1952)]{hoeffding1952large}
Hoeffding, W.
\newblock The large-sample power of tests based on permutations of
  observations.
\newblock \emph{The Annals of Mathematical Statistics}, 23:\penalty0 169--192,
  1952.

\bibitem[Hoffelder et~al.(2015)Hoffelder, G{\"o}ssl, and
  Wellek]{hoffelder2015multivariate}
Hoffelder, T., G{\"o}ssl, R., and Wellek, S.
\newblock Multivariate equivalence tests for use in pharmaceutical development.
\newblock \emph{Journal of biopharmaceutical statistics}, 25\penalty0
  (3):\penalty0 417--437, 2015.

\bibitem[Holm(1979)]{holm1979simple}
Holm, S.
\newblock A simple sequentially rejective multiple test procedure.
\newblock \emph{Scandinavian journal of statistics}, pages 65--70, 1979.

\bibitem[Hua et~al.(2016)Hua, Xu, and D’Agostino]{hua2016multiplicity}
Hua, S.~Y., Xu, S., and D’Agostino, R.~B.
\newblock Multiplicity adjustments in testing for bioequivalence.
\newblock In \emph{Biosimilar Clinical Development: Scientific Considerations
  and New Methodologies}, pages 175--200. Chapman and Hall/CRC, 2016.

\bibitem[Javanmard and Montanari(2018)]{javanmard2018online}
Javanmard, A. and Montanari, A.
\newblock Online rules for control of false discovery rate and false discovery
  exceedance.
\newblock \emph{The Annals of statistics}, 46\penalty0 (2):\penalty0 526--554,
  2018.

\bibitem[Kang and Vahl(2014)]{kang2014statistical}
Kang, Q. and Vahl, C.~I.
\newblock Statistical analysis in the safety evaluation of genetically-modified
  crops: Equivalence tests.
\newblock \emph{Crop Science}, 54\penalty0 (5):\penalty0 2183--2200, 2014.

\bibitem[Katsevich and Ramdas(2020)]{katsevich2020simultaneous}
Katsevich, E. and Ramdas, A.
\newblock Simultaneous high-probability bounds on the false discovery
  proportion in structured, regression and online settings.
\newblock \emph{The Annals of Statistics}, 48\penalty0 (6):\penalty0
  3465--3487, 2020.

\bibitem[Kim et~al.(2022)Kim, Neykov, Balakrishnan, and
  Wasserman]{kim2022local}
Kim, I., Neykov, M., Balakrishnan, S., and Wasserman, L.
\newblock Local permutation tests for conditional independence.
\newblock \emph{The Annals of Statistics}, 50\penalty0 (6):\penalty0
  3388--3414, 2022.

\bibitem[Koning and Hemerik(2024)]{koning2024more}
Koning, N.~W. and Hemerik, J.
\newblock More efficient exact group invariance testing: using a representative
  subgroup.
\newblock \emph{Biometrika}, 111\penalty0 (2):\penalty0 441--458, 2024.

\bibitem[Korn et~al.(2004)Korn, Troendle, McShane, and
  Simon]{korn2004controlling}
Korn, E.~L., Troendle, J.~F., McShane, L.~M., and Simon, R.
\newblock Controlling the number of false discoveries: application to
  high-dimensional genomic data.
\newblock \emph{Journal of Statistical Planning and Inference}, 124\penalty0
  (2):\penalty0 379--398, 2004.

\bibitem[Kuhn(2025)]{ameshousing}
Kuhn, M.
\newblock \emph{The {A}mes {I}owa Housing data}, 2025.
\newblock URL \url{https://CRAN.R-project.org/package=AmesHousing}.
\newblock R package version 0.0.4.

\bibitem[Lakens(2017)]{lakens2017equivalence}
Lakens, D.
\newblock Equivalence tests: A practical primer for t tests, correlations, and
  meta-analyses.
\newblock \emph{Social psychological and personality science}, 8\penalty0
  (4):\penalty0 355--362, 2017.

\bibitem[Leday et~al.(2022)Leday, Engel, Vossen, de~Vos, and van~der
  Voet]{leday2022multivariate}
Leday, G.~G., Engel, J., Vossen, J.~H., de~ Vos, R.~C., and van~der~ Voet, H.
\newblock Multivariate equivalence testing for food safety assessment.
\newblock \emph{Food and Chemical Toxicology}, 170:\penalty0 113446, 2022.

\bibitem[Leday et~al.(2023)Leday, Hemerik, Engel, and van~der
  Voet]{leday2023improved}
Leday, G.~G., Hemerik, J., Engel, J., and van~der~ Voet, H.
\newblock Improved family-wise error rate control in multiple equivalence
  testing.
\newblock \emph{Food and Chemical Toxicology}, 178:\penalty0 113928, 2023.

\bibitem[Lehmann and Romano(2022)]{lehmann2022testing}
Lehmann, E.~L. and Romano, J.~P.
\newblock \emph{Testing statistical hypotheses}.
\newblock Springer Science \& Business Media, 2022.

\bibitem[Lehmann and Romano(2005)]{lehmann2005generalizations}
Lehmann, E.~L. and Romano, J.~P.
\newblock Generalizations of the familywise error rate.
\newblock \emph{The Annals of Statistics}, 33\penalty0 (3):\penalty0
  1138--1154, 2005.

\bibitem[Liu et~al.(2022)Liu, Katsevich, Janson, and Ramdas]{liu2022fast}
Liu, M., Katsevich, E., Janson, L., and Ramdas, A.
\newblock Fast and powerful conditional randomization testing via distillation.
\newblock \emph{Biometrika}, 109\penalty0 (2):\penalty0 277--293, 2022.

\bibitem[Logan and Tamhane(2008)]{logan2008superiority}
Logan, B.~R. and Tamhane, A.~C.
\newblock Superiority inferences on individual endpoints following
  noninferiority testing in clinical trials.
\newblock \emph{Biometrical Journal: Journal of Mathematical Methods in
  Biosciences}, 50\penalty0 (5):\penalty0 693--703, 2008.

\bibitem[Marcus et~al.(1976)Marcus, Eric, and Gabriel]{marcus1976closed}
Marcus, R., Eric, P., and Gabriel, K.~R.
\newblock On closed testing procedures with special reference to ordered
  analysis of variance.
\newblock \emph{Biometrika}, 63\penalty0 (3):\penalty0 655--660, 1976.

\bibitem[Marriott(1979)]{marriott1979barnard}
Marriott, F.
\newblock Barnard's {M}onte {C}arlo tests: How many simulations?
\newblock \emph{Applied Statistics}, pages 75--77, 1979.

\bibitem[Meyners(2012)]{meyners2012equivalence}
Meyners, M.
\newblock Equivalence tests--a review.
\newblock \emph{Food quality and preference}, 26\penalty0 (2):\penalty0
  231--245, 2012.

\bibitem[Miecznikowski and Wang(2023)]{miecznikowski2023exceedance}
Miecznikowski, J.~C. and Wang, J.
\newblock Exceedance control of the false discovery proportion via high
  precision inversion method of berk-jones statistics.
\newblock \emph{Computational Statistics \& Data Analysis}, 185:\penalty0
  107758, 2023.

\bibitem[Neyman(1942)]{neyman1942basic}
Neyman, J.
\newblock Basic ideas and some recent results of the theory of testing
  statistical hypotheses.
\newblock \emph{Journal of the Royal statistical society}, 105\penalty0
  (4):\penalty0 292--327, 1942.

\bibitem[Pesarin(2015)]{pesarin2015some}
Pesarin, F.
\newblock Some elementary theory of permutation tests.
\newblock \emph{Communications in Statistics-Theory and Methods}, 44\penalty0
  (22):\penalty0 4880--4892, 2015.

\bibitem[Pesarin and Salmaso(2010)]{pesarin2010permutation}
Pesarin, F. and Salmaso, L.
\newblock \emph{Permutation tests for complex data: theory, applications and
  software}.
\newblock John Wiley \& Sons, 2010.

\bibitem[Quan et~al.(2001)Quan, Bolognese, and Yuan]{quan2001assessment}
Quan, H., Bolognese, J., and Yuan, W.
\newblock Assessment of equivalence on multiple endpoints.
\newblock \emph{Statistics in Medicine}, 20\penalty0 (21):\penalty0 3159--3173,
  2001.

\bibitem[Romano and Shaikh(2006{\natexlab{a}})]{romano2006stepdown}
Romano, J.~P. and Shaikh, A.~M.
\newblock On stepdown control of the false discovery proportion.
\newblock \emph{Lecture Notes-Monograph Series}, pages 33--50,
  2006{\natexlab{a}}.

\bibitem[Romano and Shaikh(2006{\natexlab{b}})]{romano2006stepup}
Romano, J.~P. and Shaikh, A.~M.
\newblock Stepup procedures for control of generalizations of the familywise
  error rate.
\newblock \emph{The Annals of Statistics}, 34\penalty0 (4):\penalty0
  1850--1873, 2006{\natexlab{b}}.

\bibitem[Romano and Wolf(2007)]{romano2007control}
Romano, J.~P. and Wolf, M.
\newblock Control of generalized error rates in multiple testing.
\newblock \emph{The Annals of Statistics}, 35\penalty0 (4):\penalty0
  1378--1408, 2007.

\bibitem[Romano et~al.(2011)Romano, Shaikh, and Wolf]{romano2011consonance}
Romano, J.~P., Shaikh, A., and Wolf, M.
\newblock Consonance and the closure method in multiple testing.
\newblock \emph{The International Journal of Biostatistics}, 7\penalty0
  (1):\penalty0 0000102202155746791300, 2011.

\bibitem[Roquain(2011)]{roquain2011type}
Roquain, E.
\newblock Type {I} error rate control for testing many hypotheses: a survey
  with proofs.
\newblock \emph{hal-00547965v2}, 2011.

\bibitem[Sonnemann(2008)]{sonnemann2008general}
Sonnemann, E.
\newblock General solutions to multiple testing problems.
\newblock \emph{Biometrical Journal: Journal of Mathematical Methods in
  Biosciences}, 50\penalty0 (5):\penalty0 641--656, 2008.

\bibitem[Vahl and Kang(2017)]{vahl2017statistical}
Vahl, C. and Kang, Q.
\newblock Statistical strategies for multiple testing in the safety evaluation
  of a genetically modified crop.
\newblock \emph{The Journal of Agricultural Science}, 155\penalty0
  (5):\penalty0 812--831, 2017.

\bibitem[van~der Laan et~al.(2004)van~der Laan, Dudoit, and
  Pollard]{van2004augmentation}
van~der~ Laan, M.~J., Dudoit, S., and Pollard, K.~S.
\newblock Augmentation procedures for control of the generalized family-wise
  error rate and tail probabilities for the proportion of false positives.
\newblock \emph{Statistical applications in genetics and molecular biology},
  3\penalty0 (1):\penalty0 15, 2004.

\bibitem[Vesely et~al.(2023)Vesely, Finos, and Goeman]{vesely2023permutation}
Vesely, A., Finos, L., and Goeman, J.~J.
\newblock Permutation-based true discovery guarantee by sum tests.
\newblock \emph{Journal of the Royal Statistical Society. Series B (Statistical
  Methodology). Online First version}, 2023.

\bibitem[Wang and Ramdas(2022)]{wang2022false}
Wang, R. and Ramdas, A.
\newblock False discovery rate control with e-values.
\newblock \emph{Journal of the Royal Statistical Society Series B: Statistical
  Methodology}, 84\penalty0 (3):\penalty0 822--852, 2022.

\bibitem[Wang et~al.(1999)Wang, Gene~Hwang, and Dasgupta]{wang1999statistical}
Wang, W., Gene~Hwang, J., and Dasgupta, A.
\newblock Statistical tests for multivariate bioequivalence.
\newblock \emph{Biometrika}, 86\penalty0 (2):\penalty0 395--402, 1999.

\bibitem[Westfall and Troendle(2008)]{westfall2008multiple}
Westfall, P.~H. and Troendle, J.~F.
\newblock Multiple testing with minimal assumptions.
\newblock \emph{Biometrical Journal: Journal of Mathematical Methods in
  Biosciences}, 50\penalty0 (5):\penalty0 745--755, 2008.

\bibitem[Westfall and Young(1993)]{westfall1993resampling}
Westfall, P.~H. and Young, S.~S.
\newblock \emph{Resampling-based multiple testing: Examples and methods for
  p-value adjustment}, volume 279.
\newblock John Wiley \& Sons, 1993.

\bibitem[Winkler et~al.(2014)Winkler, Ridgway, Webster, Smith, and
  Nichols]{winkler2014permutation}
Winkler, A.~M., Ridgway, G.~R., Webster, M.~A., Smith, S.~M., and Nichols,
  T.~E.
\newblock Permutation inference for the general linear model.
\newblock \emph{Neuroimage}, 92:\penalty0 381--397, 2014.

\bibitem[Zhang and Zhao(2023)]{zhang2023randomization}
Zhang, Y. and Zhao, Q.
\newblock What is a randomization test?
\newblock \emph{Journal of the American Statistical Association}, 118\penalty0
  (544):\penalty0 2928--2942, 2023.

\end{thebibliography}

\appendix

\section{Testing a single hypothesis: examples} \label{secsingle}
Nonparametric and semiparametric tests can be applied in many situations \citp{winkler2014permutation, canay2017randomization, hemerik2020robust,  berrett2020conditional, hemerik2021permutation, hemerik2021another, dobriban2022consistency, liu2022fast, zhang2023randomization}.  As with parametric tests, nonparametric tests are only exact for finite samples when the model is relatively simple. Here will give examples of tests that are valid for finite samples.
Another example of an exact nonparametric test is a  test
of independence of two continuous variables as in \citt{neyman1942basic, diciccio2017robust, kim2022local}. Semiparametric tests can also be used for inference in generalized linear models, for example, but then they are only asymptotically exact of course \citp{de2025inference, de2025permutation}.

\subsection{Location model} \label{secloc}
Often we want to show that a population mean $\theta\in \mathbb{R}$ lies above some value $\delta\in  \mathbb{R}$. In this case, the null hypothesis is $H_0:\theta \leq \delta$ and the alternative  $H_a:\theta > \delta$. 
To test $H_0$, we might use a t-test. However, an advantage of a nonparametric approach is that it requires fewer assumptions. Moreover, nonparametric tests can be combined with powerful permutation-based multiple testing methods \citp{westfall1993resampling,westfall2008multiple,pesarin2010permutation,hemerik2018false,hemerik2019permutation,
blain2022notip,andreella2023permutation}. 
\citt[][\S21]{fisher1935} introduces a nonparameric test based on sign-flipping. This test does not require normality, but instead requires the milder assumption that under $H_0$, the observations are symmetric about their mean. In particular, there is no homoscedastity assumption.

The model is the following. The data is a vector $X$ taking values in $\mathbb{R}^n$, where $n\geq 1$ is the sample size. The $n$ observations are assumed to be independent and to have the same mean $\theta$. Further,  they are assumed to be symmetrically distributed about their mean under $H_0$. The observations can have different variances and distributions.

Note that without loss generality, we can assume that the null hypothesis is $H_0: \theta\leq 0$. To test $H_0$, one can  consider the group of all sign-flipping transformations, i.e., all transformations $g: \mathbb{R}^n \rightarrow \mathbb{R}^n$ of the form $x\mapsto F_gx$, where $F_g$ is a diagonal matrix with diagonal elements in $\{-1,1\}$. There are $2^n$ such transformations $g$ and they form a group, say $\G$,  in the algebraic sense \citp{hemerik2018exact,koning2024more}. We refer to this group as the \emph{sign-flipping group}. For every transformed data vector $F_gX$, the test computes the  test statistic $T_g(X):=n^{-1/2}1_n'F_gX$, where $1_n$ is the $n$-vector of ones. For the original data, the test statistic is $$T_{id}(X)=n^{-1/2}1_n'I_nX =    n^{-1/2}\sum_{i=1}^nX_i,$$ where $I_n$ is the $n\times n$ identity matrix.  Given $\alpha\in(0,1)$, the test rejects $H_0: \theta\leq 0$ if and only if $T_{id}(X)$ is larger than $(1-\alpha)100\%$ of the other statistics. To make this formal, let $T^{(1)}(X)\leq ...\leq T^{(|\G|)}(X)$ be the $|\G|=2^n$ sorted test statistics. 
Write $T^{(1-\alpha)}(X) := T^{\lceil (1-\alpha)|\G| \rceil }(X)$,  where $\lceil c \rceil$ denotes the smallest integer than is at least as large as $c$. 
Then the test rejects if and only if
\begin{equation} \label{fliptest}
T_{id}(X)>T^{ (1-\alpha) }(X),
\end{equation}
It is well known that this test has type I error rate at most $\alpha$ \citp{arboretti2021unified}, although most literature assumes a point null hypothesis.
\begin{theorem} \label{thmflip}
Consider $X$ as above and $H_0: \theta\leq 0$.
The test that rejects when inequality  \eqref{fliptest} holds, satisfies $\mathbb{P}_{H_0}(\text{reject } H_0)\leq \alpha$. Further, if $\theta = 0$ and $\alpha$ is a multiple of $|G|^{-1}$ and $X$ is continuous, then  $\mathbb{P}_{H_0}(\text{reject } H_0) =\alpha$. 
\end{theorem}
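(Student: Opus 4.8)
The plan is to establish the type I error bound in two stages: first at the boundary $\theta=0$ by a group‑invariance (randomization‑test) argument, and then to lift this to the whole composite null $H_0:\theta\le 0$ by a pointwise coupling that singles out the transformation $g=id$. For the boundary case, the first observation is that the sign‑flipping matrices form a group of involutions: $F_gF_h=F_{gh}$ and $F_g^2=I_n$, so that $T_h(F_gX)=n^{-1/2}1_n'F_{hg}X=T_{hg}(X)$ for all $g,h\in\G$, and in particular $T_{id}(F_gX)=T_g(X)$. Hence $\{T_{id}(F_gX):g\in\G\}$ and $\{T_g(X):g\in\G\}$ are the same multiset, with common sorted values $T^{(1)}(X)\le\cdots\le T^{(|\G|)}(X)$. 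Since the coordinates of $X$ are independent and, when $\theta=0$, each symmetric about $0$, we have $F_gX\overset{d}{=}X$ for every fixed $g$; drawing $G$ uniformly on $\G$ independently of $X$ then gives $(F_GX,G)\overset{d}{=}(X^\ast,G)$ with $X^\ast\overset{d}{=}X$ and $X^\ast$ independent of $G$. Applying the measurable map $(x,g)\mapsto\bigl(T_{id}(x),\ \mathrm{sort}_{h\in\G}T_{hg}(x)\bigr)$ to both sides, and using $T_{id}(F_GX)=T_G(X)$, $T_{hG}(F_GX)=T_h(X)$ and the fact that $h\mapsto hg$ permutes $\G$, one obtains
\[
\bigl(T_G(X),\,T^{(1)}(X),\dots,T^{(|\G|)}(X)\bigr)\ \overset{d}{=}\ \bigl(T_{id}(X),\,T^{(1)}(X),\dots,T^{(|\G|)}(X)\bigr).
\]
Therefore $\pr_{\theta=0}\bigl(T_{id}(X)>T^{(1-\alpha)}(X)\bigr)=\pr_{\theta=0}\bigl(T_G(X)>T^{(1-\alpha)}(X)\bigr)$, and conditioning on $X$ (which determines $T^{(1-\alpha)}(X)$, while $G\perp X$) the latter equals $\mathbb{E}\bigl[|\G|^{-1}\#\{h\in\G:T_h(X)>T^{(1-\alpha)}(X)\}\bigr]$. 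Since $T^{(1-\alpha)}(X)=T^{\lceil(1-\alpha)|\G|\rceil}(X)$ is the $\lceil(1-\alpha)|\G|\rceil$‑th smallest of the $|\G|$ statistics, this count is at most $|\G|-\lceil(1-\alpha)|\G|\rceil\le\alpha|\G|$, so the rejection probability is at most $\alpha$.

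For the general case $\theta\le 0$, the structural fact I would use is that $s_g:=1_n'F_g1_n$ satisfies $s_g\le s_{id}=n$ for every $g\in\G$ (with equality only at $g=id$). Since $T_g(X+c\,1_n)=T_g(X)+c\,n^{-1/2}s_g$, it follows that the rejection indicator is monotone under adding a common nonnegative constant to all coordinates: the event $T_{id}(X)>T^{(1-\alpha)}(X)$ is equivalent to $\#\{g:T_g(X)<T_{id}(X)\}\ge\lceil(1-\alpha)|\G|\rceil$, and for every $g$ in that set and every $c\ge 0$ we have $T_g(X+c1_n)\le T_g(X)+c\,n^{-1/2}n<T_{id}(X)+c\,n^{-1/2}n=T_{id}(X+c1_n)$, so the count cannot decrease and rejection is preserved. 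Writing $Y:=X-\theta 1_n=X+(-\theta)1_n$ with $-\theta\ge 0$, the vector $Y$ has independent coordinates each symmetric about $0$, i.e.\ satisfies the model with mean $0$; the monotonicity gives $\pr_\theta(\text{reject on }X)\le\pr(\text{reject on }Y)\le\alpha$ by the boundary case. I expect this stage to be the real obstacle: the randomization‑test machinery only delivers validity at $\theta=0$, and transferring it to all $\theta\le0$ hinges precisely on the observation that $id$ is the unique sign pattern maximising $s_g$, which makes $T_{id}$ the statistic most inflated by a positive mean shift. One also has to phrase this as a pointwise coupling between $X$ and $Y$ rather than as monotonicity of a single power function, since different values of $\theta$ may come with different error distributions.

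Finally, for exactness, assume $\theta=0$, $X$ continuous and $\alpha$ a multiple of $|\G|^{-1}$. Continuity of $X$ makes the $|\G|$ statistics $T_g(X)$ almost surely pairwise distinct: for $g\ne h$ the difference $1_n'(F_g-F_h)X$ is a linear combination with at least one nonzero coefficient, hence has a continuous distribution and is a.s.\ nonzero. Then $\#\{h\in\G:T_h(X)>T^{(1-\alpha)}(X)\}=|\G|-\lceil(1-\alpha)|\G|\rceil$ almost surely, and since $\alpha|\G|\in\mathbb{N}$ implies $(1-\alpha)|\G|\in\mathbb{N}$ and hence $\lceil(1-\alpha)|\G|\rceil=(1-\alpha)|\G|$, the conditional rejection probability equals exactly $\alpha$; taking expectations yields $\pr_{\theta=0}(\text{reject})=\alpha$, completing the proof. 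Everything in this last step is routine bookkeeping about ties.
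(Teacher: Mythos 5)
Your proposal is correct and follows essentially the same route as the paper: the composite null is handled by the same pointwise shift coupling $Y=X-\theta 1_n$, using that $T_{id}$ increases by exactly $-\theta n^{1/2}$ while every $T_g$ increases by at most that amount (your $s_g\le n$ observation is the paper's inequality $T_g(X^*)\le T_g(X)-n^{-1/2}n\theta$, just phrased via the count of statistics below $T_{id}$ instead of via the quantile $T^{(1-\alpha)}$), so that rejection on $X$ implies rejection on the mean-zero data. The only difference is presentational: for the $\theta=0$ bound and the exactness statement you spell out the standard group-invariance/randomization argument and the no-ties computation, which the paper simply cites from the literature.
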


\subsection{Comparing two groups} \label{sec2gr}
Another nonparametric model relevant to this paper is the following one, often used for comparing two populations by exact permutation testing.
Consider two samples $Z$ and $Y$, which both take values in $\mathbb{R}^{n}$. Here $Z$ usually represents a sample from one population and $Y$ a sample from another population.
The assumption that both samples are equally large is made for convenience. We write the whole dataset as $X=(Z',Y')'$. The permutation test employs permutation maps $g:\mathbb{R}^{2n}\rightarrow \mathbb{R}^{2n}$. A permutation map is a  transformation of the form $(x_1,...,x_{2n})'\mapsto (x_{\pi(1)},...,x_{\pi({2n})})'$, where $(\pi(1),...,\pi({2n}))$ is a permutation of $(1,...,2n)$. Note that there are $(2n)!$ such permutation maps. 

As a concrete example, suppose that all $2n$ observations are i.i.d., except that the observations in $Z$ have mean $\theta^1$ and the observations in $Y$ have mean $\theta^2$. Then we can test the null hypothesis $H_0: \theta \leq \delta$, where $\theta= \theta^1-\theta^2$. Without loss of generality we can assume that $\delta=0$.
To test $H_0$, we compute the test statistic
$$T(X)=T_{id}(X)= T\{(Z',Y')'\}:=n^{1/2}(\bar{Z} - \bar{Y}),$$
where $\bar{Z}$ denotes the mean of the entries of $Z$ and likewise for $Y$.
Letting $\G$ be the group of $(2n)!$ permutation maps, for every $g\in\G$
we compute the test statistic $T_g(X)= T(g(X))$.
As in section \ref{secloc}, we define $T^{(1-\alpha)}(X)$ to be the $(1-\alpha)$-quantile of these test statistics and reject $H_0$ if and only if $T(X)>T^{(1-\alpha)}(X)$. 
As a sidenote, it can be seen that many permutations lead to the same test statistic, so that in fact we only need to use $\binom{2n}{n}$  permutations \citp{hemerik2018exact}, which speeds up the computation.

It is well-known that this test controls the type I error rate \citp{pesarin2015some}, although this is usually proved for the case of a point null hypothesis \citp{hoeffding1952large,lehmann2022testing,hemerik2018exact}. 

\begin{theorem} \label{thm2gr}
Consider $X=(Z',Y')'$ as above.
Suppose that all $2n$ observations are i.i.d., except that the observations in $Z$ have mean $\theta^1$ and the observations in $Y$ have mean $\theta^2$. Consider  $H_0: \theta\leq 0$, where $\theta= \theta^1-\theta^2$. 
Consider the test that rejects when inequality  \eqref{fliptest} holds, which is now understood to refer to the current setting.
This test satisfies $\mathbb{P}_{H_0}(\text{reject } H_0)\leq \alpha$. Further, in case $\theta = 0$ and $\alpha$ is a multiple of $\binom{2n}{n}$ and $X$ is continuous, then  $\mathbb{P}_{H_0}(\text{reject } H_0) =\alpha$. 
\end{theorem}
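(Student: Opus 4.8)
The plan is to reduce everything to the fully exchangeable case $\theta^1=\theta^2$ by a coupling, and in that case to invoke the same randomization argument that underlies Theorem \ref{thmflip}. Since the only structure assumed is that $Z_i-\theta^1$ and $Y_j-\theta^2$ are i.i.d.\ copies of a common mean-zero variable, I would write $X=W+\mu$, where $W=(W_1,\dots,W_{2n})'$ has i.i.d.\ mean-zero entries and $\mu=(\theta^1,\dots,\theta^1,\theta^2,\dots,\theta^2)'$ with each value repeated $n$ times. This $X$ has exactly the distribution stipulated in the theorem, so nothing is lost.

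Next I would use that $T$ is linear, so $T_g(X)=T(gX)=T(gW)+T(g\mu)$ for every permutation map $g\in\G$. A direct computation gives $T(g\mu)=n^{-1/2}(2k_g-n)\theta$, where $k_g\in\{0,\dots,n\}$ is the number of the first $n$ coordinates that $g$ fills with a $\theta^1$; in particular $k_{id}=n$, so $T(\mu)=n^{1/2}\theta$, while $k_g\le n$ for every $g$ forces $T(g\mu)\ge n^{1/2}\theta$ whenever $\theta\le 0$. Hence $T_g(X)\ge T(gW)+n^{1/2}\theta$ for all $g$, with equality at $g=id$, and since $T_{id}(X)=T(W)+n^{1/2}\theta$ we obtain the pointwise inclusion $\{g: T(gW)\ge T(W)\}\subseteq\{g: T_g(X)\ge T_{id}(X)\}$. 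Because the rejection event in \eqref{fliptest} is exactly $\{\,\#\{g: T_g(X)\ge T_{id}(X)\}\le |\G|-\lceil(1-\alpha)|\G|\rceil\,\}$, and analogously for $W$, this inclusion of index sets shows that $X$ is rejected only when the exchangeable vector $W$ is rejected by the same recipe. Therefore $\mathbb{P}_{H_0}(\text{reject }H_0)\le\mathbb{P}(W\text{ is rejected})$, and since $W$ is i.i.d.\ and hence invariant under $\G$, the latter is at most $\alpha$ by the randomization argument proving Theorem \ref{thmflip}. This yields the level statement for every $\theta\le 0$.

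For the exactness claim I would specialize to $\theta=0$, so $\mu$ is a constant vector, $T(g\mu)=0$ for all $g$, and $T_g(X)=T(gX)$ with $X$ itself i.i.d. If $X$ is also continuous, then two permutation maps inducing the same split of $\{1,\dots,2n\}$ into a first block of size $n$ and its complement give identical statistics, whereas maps inducing different splits give a.s.\ distinct statistics, since the difference of the two statistics is a nonzero linear combination of independent atomless variables. Thus the $(2n)!$ statistics take $\binom{2n}{n}$ distinct values, each with multiplicity $(n!)^2$, and by exchangeability the rank of $T_{id}(X)$ among the distinct values is uniform on $\{1,\dots,\binom{2n}{n}\}$. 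When $\alpha\binom{2n}{n}$ is an integer, $\lceil(1-\alpha)(2n)!\rceil=(1-\alpha)\binom{2n}{n}(n!)^2$ lands exactly on a block boundary, so $T^{(1-\alpha)}(X)$ equals the $(1-\alpha)\binom{2n}{n}$-th distinct value and $\mathbb{P}_{H_0}(\text{reject }H_0)=\mathbb{P}(\text{rank}>(1-\alpha)\binom{2n}{n})=\alpha$.

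I expect the main obstacle to be this last paragraph: carefully relating the $(1-\alpha)$-quantile of the $(2n)!$ tied statistics to the $\binom{2n}{n}$ effective values, and verifying under mere continuity that distinct splits yield distinct statistics almost surely. The coupling and linearity steps are routine, and the reduction to the invariance lemma behind Theorem \ref{thmflip} is standard in this literature; the only genuinely new ingredient beyond a point null is the monotonicity inequality $T(g\mu)\ge T(\mu)$, which is precisely what makes $\theta=0$ least favorable.
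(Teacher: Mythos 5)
Your proof is correct and follows essentially the same route as the paper: you couple $X$ with the de-meaned vector $W$ (the paper's $X^*$), use the shift inequality $T(g\mu)\geq T(\mu)=n^{1/2}\theta$ for $\theta\leq 0$ to conclude that rejection for $X$ implies rejection for $W$, and then apply the standard group-invariance level result to the exchangeable $W$. The only differences are cosmetic: you phrase the comparison via the count $\#\{g: T_g(X)\geq T_{id}(X)\}$ instead of via the order statistic $T^{(1-\alpha)}$, and you write out the exactness computation (ties in blocks of size $(n!)^2$, uniform rank of the observed split among the $\binom{2n}{n}$ distinct values) which the paper simply delegates to the cited randomization-test literature.
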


\begin{remark}
For what follows later (e.g. Section \ref{secmainas}), it is important to realize that in the setting of Theorem \ref{thm2gr}, the test statistic $T(X) = n^{1/2}(\bar{Z}-\bar{Y})$ is symmetric about its mean $n^{1/2}\theta$, even if the $2n$ variables are not symmetric.
To see this, note  that $(\bar{Z}-\bar{Y})$ is symmetric about $\theta$, which follows from
$$ (\bar{Z}-\bar{Y}) - \theta = (\bar{Z}- \theta^1) -(\bar{Y} - \theta^2) \,{\buildrel d \over =}\, (\bar{Y}- \theta^2) -(\bar{Z} - \theta^1) =\theta - (\bar{Z}-\bar{Y}),$$ where we used that $(\bar{Z}- \theta^1)\,{\buildrel d \over =}\,(\bar{Y} - \theta^2)$ by assumption, as well as independence of $\bar{Z}$ and $\bar{Y}$. 
\end{remark}

\section{Examples where Assumption \ref{asssym} is satisfied for finite samples}  \label{secexvalidass}
We now provide two examples of settings where the test statistics are not normally distributed and  Assumption \ref{asssym} is eactly satisfied for finite samples.

\begin{example}[Nonparametric location model]  \label{exltest}
In Section \ref{secloc} we considered a non-parametric location model and a hypothesis test. Now consider not one such test, but $m$ such hypotheses and tests. Now $X$ is not an $n$-vector but an $n$-by-$m$ matrix
$X=(X^1,...,X^m)$, where $X^j$ is an $n$-vector whose entries have mean $\theta_j$, $1\leq j \leq m$.  For every $1\leq j \leq m$ we now consider the null hypothesis $H_j:\theta_j\leq \delta$ and define $T_j = n^{-1}1_n'X^j$. 
 Note that $\mu_j=\mathbb{E}(T_j)= \theta_j$ and $\N = \{1\leq j \leq m: \theta_j\leq \delta\}$.
Let $X^{\N}$ be the $n$-by-$|\N|$ submatrix $(X^j:j\in\N)$ and write  $X^{\N}_i$ for its $i$-th row.
Assume that the rows of  $X^{\N}$ are independent of each other. Within each row, there may be dependence.
Let $\theta^{\N}:=(\theta_j:j\in\N)$ and assume that for every $1\leq i \leq n$,  the rows of  $X^{\N} - \theta^{\N} $ are symmetric, i.e. that   $X^{\N}_i-\theta^{\N}  \,{\buildrel d \over =}\, -(X^{\N}_i-\theta^{\N} )$.

Note that the entries $(T_j-\mu_j : j\in \N)$ have mean $0$ and satisfy
\begin{align*}
 (T_j-\mu_j : j\in \N) = & \big\{(n^{-1}1_n'X^j -\theta_j) : j\in \N\big\}  \\
=& \big\{n^{-1}(1_n'X^j-n\theta_j)  : j\in \N\big\} \\
=& \big\{n^{-1}\sum_{1\leq i \leq n}(X_i^j-\theta_j)  : j\in \N\big\}.
\end{align*}
Since the rows of  $X^{\N}$ are symmetric and independent of each other, this is equal in distribution to
\begin{align*}
&  \big\{n^{-1}\sum_{1\leq i \leq n}(-(X_i^j-\theta_j))  : j\in \N\big\} =\\
&  \big\{  n^{-1}(-(1_n'X^j-n\theta_j))  : j\in \N\big\}=\\
& (-(T_j-\mu_j) : j\in \N),
\end{align*}
so that Assumption \ref{asssym} is satisfied.
\end{example}

In the example above we assumed that the observations are symmetric. In the example below,  we do not make such an assumption, yet still the test statistics satisfy Assumption \ref{asssym}.

\begin{example}[Comparing two groups]  \label{ex2gr}
We now extend the setting of Section  \ref{sec2gr} to the case of multiple hypotheses; here we will also see that Assumption \ref{asssym} is satisfied. Now $X$ is a $2n$-by-$m$ matrix $X=(X^1,...,X^m)$, where for each $1\leq j \leq m$, 
$X^j = (Z^{j'},Y^{j'})'$ is a $2n$-dimensional column vector, where $Z^{j}$ is the j-th column of an $n$-by-$m$ matrix $Z$ and  $Y^{j}$ is the j-th column of an $n$-by-$m$ matrix $Y$.
For $1\leq j\leq m$, the entries of $Z^{j}$ have mean $\theta^1_j$ and the entries  of $Y^{j}$ have mean $\theta^2_j$.
 For every $1\leq j \leq m$, consider the null hypothesis $H_j:\theta_j \leq \delta$, where $\theta_j=\theta^1_j-\theta^2_j$.
 For  each $1\leq j \leq m$, let
$T_j = n^{1/2}(\bar{Z}^j-\bar{Y}^j)$, where $\bar{Z}^j$ is the mean of the entries of ${Z}^j$ and likewise for  ${Y}^j$.
Note that $\mu_j:=\mathbb{E}(T_j)=n^{1/2}(\theta^1_j - \theta^2_j)     =  n^{1/2} \theta_j$.
Let $X^{\N}$ be the $n$-by-$|\N|$ submatrix $(X_j:j\in\N)$. 
Assume that the rows of  $X^{\N}$ are independent of each other and each have the same joint distribution, apart from a mean shift so that the columns of $Z$ and $Y$ have the means specified above.
Within each row, there may be dependence. 
Thus, the matrices $Z^{\N}$ and $Y^{\N}$ have the same joint distribution, except that the columns of $Z^{\N}$ can have different means than the columns of $Y^{\N}$.

 Note that the entries $(T_j-\mu_j : j\in \N)$ have mean $0$ and satisfy

 \begin{align*}
&      (T_j-\mu_j : j\in \N) =                 \\
&    \big\{n^{1/2}(\bar{Z}^j-\bar{Y}^j)   -n^{1/2}\theta_j : j\in \N\big\} =                    \\
&    \Big\{n^{1/2}\big((\bar{Z}^j- \theta_j^1)-(\bar{Y}^j- \theta_j^2)\big)  : j\in \N\Big\}  =                   \\
&  \Big\{n^{1/2}n^{-1}\sum_{1\leq i \leq n} \big[({Z}_i^j-  \theta_j   )-({Y}_i^j- \theta_j^2)\big]  : j\in \N\Big\}=                                     
\end{align*}
\begin{equation} \label{eq:ass2twogroup}
n^{1/2}n^{-1}\sum_{1\leq i \leq n} \Big\{ ({Z}_i^j-  \theta_j   )-({Y}_i^j- \theta_j^2)  : j\in \N\Big\}.
\end{equation}
Note that for every $1\leq i \leq n$, 
 \begin{align*}
&  \big\{({Z}_i^j-  \theta_j^1   )-({Y}_i^j- \theta_j^2): j\in \N\big\}\,{\buildrel d \over =}\, \\ 
&\big\{({Y}_i^j- \theta_j^2)-({Z}_i^j-  \theta_j^1   ): j\in \N\big\},    
\end{align*}
where we used that
$$(Z^j_i -\theta^1_j:j\in\N )  \,{\buildrel d \over =}\, (Y^j_i- \theta^2_j:j\in\N )$$
combined with the fact that the rows of $X$ are mutually independent.

Hence, \eqref{eq:ass2twogroup} is equal in distribution to
 \begin{align*}
&  n^{1/2}n^{-1}\sum_{1\leq i \leq n} \Big\{ \big(({Y}_i^j- \theta_j^2) -({Z}_i^j-  \theta_j   )\big)  : j\in \N\Big\}=     \\
&  \Big\{n^{1/2}\big[(\bar{Y}^j- \theta_j^2)  -   (\bar{Z}^j- \theta_j^1)\big]  : j\in \N\Big\}=     \\
&     \Big\{n^{1/2}(\bar{Y}^j-\bar{Z}^j)  + n^{1/2}\theta_j : j\in \N\Big\} =  \\
&    (-(T_j-\mu_j) : j\in \N),   
\end{align*}
 so that Assumption \ref{asssym} is satisfied.

\end{example}

\begin{remark}
Note that in Example \ref{ex2gr}, we assumed that the rows of the data matrix $X$ were independent of each other. While this is a common assumption, it is in fact not needed at all for Assumption \ref{asssym}  to be satisfied. Indeed, it suffices to assume that the matrices $Z^{\N}$ and $Y^{\N}$ have the same joint  distribution after removing the means of their entries. The reason is that Assumption \ref{asssym} is quite mild. In particular, we do not need invariance under all permutations like SAM does.
\end{remark}

\begin{remark}
In Example \ref{ex2gr}, the hypotheses  were $H_j: \theta_j\leq \delta$. This can straightforwardly be generalized to the more general collection of hypotheses $H_j': \theta_j\leq \delta_j$, where $\delta_1,...,\delta_m \in \mathbb{R}$. Indeed, for every $1\leq j \leq m$ we can shift, i.e. translate, the observations in $Z^j$ by $\delta-\delta_j$. Then $H_j$ holds for the shifted data if and only if $H_j'$ holds for the original data. Likewise, we can generalize to the situation where some of the hypotheses are of the form $H_j'': \theta_j\geq\delta_j$, with a left-sided alternative. Indeed, if we multiply  the corresponding columns of $X$ by $-1$ and then suitably shift the data, the problem becomes that of Example \ref{ex2gr}.
\end{remark}

\section{Theoretical comparison of the estimator $\tilde{V}$ with SAM} \label{seccompSAM}
 In this section, we discuss a notable relationship between the bound  $\tilde{V}$ defined in Section \ref{secnovel}  and  the existing competitor SAM, which was developed in      \citt{hemerik2018false}     and was summarized in Section \ref{secexm}. 
We show that in situations where SAM can be used, SAM is sometimes equivalent to the novel method, if in SAM one takes $\alpha=0.5$ and uses only two permutations. Using only two permutations might seem a bad idea power-wise \citp{chung1958randomization,marriott1979barnard}. However, if we perform one-sided tests and most hypotheses are false and $\alpha=0.5$, then  this often gives better performance than when many permutations are used. In other words, by using two permutations we often obtain a  lower  median unbiased estimator of $V$ than when many permutations are used. Note that this is only true in certain one-sided testing settings where most of the hypotheses are false.

Firstly, we show that SAM with only two permutations is often equivalent to the novel method. Note that in some one-sided permutation-type tests, e.g. those of Sections \ref{secloc} and \ref{sec2gr}, the permutation distribution is symmetric about 0, meaning that  for every $g\in \G$ there is a $g'\in \G$ such that $T(g(X)) = -T(g'(X))$. Note that the equality here is not just an equality in distribution.
Thus, if we make a histogram of the test statistics---including the original test statistic---then this histogram will be symmetric about 0---regardless of whether $H_0$ is true. 
When we test multiple (say $m$) hypotheses with a permutation method, then for every $1\leq j \leq m$ and for every $g\in \G$, we compute a statistic $T_j(g(X))$ (see e.g. Section \ref{secexm}). If the test statistics are one-sided, they sometimes have the analogous  symmetry property: for every  $g\in \G$, there is a $g'\in \G$ such that 
\begin{equation} \label{symtstats}
\big\{T_1(g(X)),...,T_m(g(X))\big\} = - \big\{T_1(g'(X)),...,T_m(g'(X))\big\}.
\end{equation}

Now, recall from Section \ref{secexm} that for $\alpha=0.5$, SAM computes  $\bar{V}(t)$ as the median of the numbers $\R(g(X),t)$, $g\in \G$, i.e, the median of the numbers
$$|\{1\leq j \leq m: T_j(g(X))>t|,$$
$g\in G$.
But due to the symmetry property \eqref{symtstats}, this equals the median of the numbers 
$$|\{1\leq j \leq m: T_j(g(X))< - t|,$$
$g\in G$.
Suppose now that rather than a large group $\G$, we use a subgroup of only 2 transformations. In case of sign-flipping (Section \ref{secloc}), we consider the subgroup $\{id, h\}$, where $id$ is the identity map $(x_1,...,x_n)'\mapsto (x_1,...,x_n)'$ and $h$ is the map  $(x_1,...,x_n)'\mapsto- (x_1,...,x_n)'$. In case of permutation tests for comparing two equally sized groups, we take $h$ to be a permutation that swaps all cases with the controls---which typically has the same effect on the test statistic as multiplying the data $X$ by -1 (this is the case in e.g. Example \ref{ex2gr}).
If we use this tiny  subgroup,  then we write $\bar{V}_2$, saving the notation $\bar{V}$ for the bound based on the whole group $\G$ (see Section \ref{secexm}). 
Note that $\bar{V}_2(t)$ is simply
$$\bar{V}_2(t) = R(h(X),t)\wedge R(X,t)\text{, where}$$   
$$   R(h(X),t) =    |\{1\leq j \leq m: T_j(-X)> t| =  |\{1\leq j \leq m: T_j(X)<-t|.$$
Thus,  $\bar{V}_2(t)$ is simply  the bound $R(t)\wedge R^-(t)= \tilde{V}(t)$ from Section \ref{secnovel}, if we take $\delta_1=...=\delta_m=0$. Thus, SAM with $\alpha=0.5$ and two permutations, is in these cases equivalent to the novel method with  $\delta_1=...=\delta_m=0$.

As confirmed with simulations, if $\alpha=0.5$,  then $\bar{V}_2$ often performs better than $\bar{V}$, i.e., $\bar{V}_2$ if often  lower on average. Again, this is only true in certain one-sided testing settings where most hypotheses are false. 
In these settings, in SAM it often better to use merely two transformations than the whole group. 
We do not obtain a uniform improvement though. Indeed, we only have $\bar{V}_2\leq \bar{V}$ if  $R^-(X)  \leq R^{(0.5|\G|)}(X)$, which does not hold in general as can easily be checked with an example. We provide a simple example where $\tilde{V} = \bar{V}_2>\bar{V}$ below. It is also easy to find examples where $\tilde{V}=\bar{V}_2<\bar{V}$.

\begin{example}
Suppose the data  are $X=(X^1,X^2)$, where  $X^j\in \mathbb{R}^n$ for $j\in\{1,2\}$ and $n=3$. 
Assume that for every $j\in\{1,2\}$, all observations in column $j$
have mean $\theta^j\in \mathbb{R}$, say. 
 There are $m=2$ null hypotheses $H_j: \theta^j \leq 0$, $j\in\{1,2\}$. 
Let the test statistics be  $T_j(X) = \sum_{i=1}^{n} X^j_i$, $j\in\{1,2\}$, and suppose Assumption \ref{asssym} holds.
Pick $t=1$.
Suppose the observed values are $X^1_1=...=X^1_n=0.34$ and  $X^2_1=...=X^2_n= - 0.34$. 
Thus, $$R(X) = |\{1\leq j \leq m: T(X)>t\}| = |\{1\}|=1,$$
$$\tilde{V} = \bar{V}_2 = |\{1\leq j \leq m: T(X)<-t\}| = |\{2\}|=1.$$
The group $\G$ of sign-flipping transformations has cardinality $2^n=8$.
To obtain $\bar{V}$, we compute the median of the numbers of rejections $R(g(X))$. For all $g\in\G$ except $id$ and the map $X\mapsto -X$, we find  $R(g(X))=0$, so that the median is $0$ and $\bar{V}=0$, which is strictly smaller than  $\tilde{V}= \bar{V}_2$. 
\end{example}

\section{Closed testing and admissibility of $\tilde{V}'$} \label{secctad}   
Consider the one-sided testing setting from Section \ref{secnovel}.
In this section, we link the bounds $\tilde{V}$ and $\tilde{V}'$ to the theory of \emph{closed testing} and prove that the bound $\tilde{V}'$ is admissible, i.e., that it cannot be uniformly improved.

\subsection{General construction of FDP bounds using closed testing} \label{secbasicsct}
We start by reviewing what closed testing is and how it can be used to obtain bounds $B$ of the form $\mathbb{P}(V\leq B)\geq 1-\alpha$, where $\alpha\in(0,1)$ is fixed in advance. 
The closed testing principle goes back to \citt{marcus1976closed} 
and can be used to construct multiple testing procedures that control the family-wise error rate \citp{sonnemann2008general,  romano2011consonance}. \citt{goeman2011multiple} show that such  procedures  can be extended to provide confidence bounds for the numbers of true hypotheses in all sets of hypotheses simultaneously \citp[an equivalent approach is in][]{genovese2006exceedance}. 

Let $\C$ be the collection of all nonempty subsets of $\{1,...,m\}$.
For every  $I\in \C$ consider the intersection hypothesis
$H_I=\cap_{i\in I} H_i$. This is the hypothesis that all $H_i$ with $i\in I$ are true.
For every $I\in \C$, consider some \emph{local test} $\phi_I$, which is $1$ if $H_I$ is rejected and $0$ otherwise. Assume  the test $\phi_\N$ is valid in the sense that it has level at most $\alpha$, i.e.,  $\mathbb{P}(\phi_\N\geq 1)\leq \alpha$. Since we do not know $\N$, this effectively means that we require all the local tests to be valid.
A closed testing procedure generally rejects all intersection hypotheses $H_I$ with $I\in \X$, where 
$$\X:=\{I \in \C: \phi_J=1 \text{ for all } I\subseteq  J\subseteq  \{1,...,m\}  \}.$$
 It is well-known that this procedure controls the familywise error rate \citp{marcus1976closed}, i.e., with probability at least $1-\alpha$ there are 0 type I errors.
In \citt{goeman2011multiple} it is shown that we can also use the set $\X$ to provide a  $(1-\alpha)$-confidence upper bound for the number of true hypotheses in any $I\in\C$.
They show that
$$t_{\alpha}(I):=\max\{|J|: J \subseteq I \text{ and }   J\not\in \X\}.$$
is a  $(1-\alpha)$-confidence upper bound for $|\N\cap I|$. In fact, they show that the bounds $t_{\alpha}(I)$ are valid simultaneously over all $I\in\C:$
\begin{equation} \label{GSresult}
   \mathbb{P} \Bigg[\bigcap_{I\in\C}\Big\{     |\N\cap I| \leq t_{\alpha}(I)   \Big\} \Bigg]\geq 1-\alpha.
   \end{equation}
The proof is short: 
with probability at least $1-\alpha$, $H_{\N}$ is not rejected by its local test. In case $H_{\N}$ is not rejected by its local test,  for every $I \in \C$ it holds that $|\N\cap I|\not\in \X$, so that $|\N\cap I| \leq t_{\alpha}(I)$, i.e., $t_{\alpha}(I)$ is a valid upper bound for the number of true hypotheses in $I$. This proves inequality \eqref{GSresult}.
A different method,  formulated in \citt{genovese2006exceedance}, leads to the same bounds. 
The two approaches were compared in  the supplementary material of \citt{hemerik2019permutation}.

\subsection{$\tilde{V}$ coincides  with a closed testing method} \label{secvtisctp}
We first consider the bound $\tilde{V}$. We will define local tests and, as explained in Section \ref{secbasicsct}, use closed testing to obtain bounds. Then we show that these bounds  are equivalent with $\tilde{V}$, so that $\tilde{V}$ coincides with a closed testing procedure.

Recall the definition of  $\C$ from Section \ref{secbasicsct}.
For every $I\in \C$, we will  define a corresponding local test $\lt_I$. First define 
$$R_I(t) := |I\cap \R(t)| =  \{j\in I: T_j>\delta_j+t\},$$
$$R^-_I(t) := |I\cap \R^-(t)| =  \{j\in I: T_i<\delta_j-t\}.$$
We will often refrain from writing ``$(t)$'' for brevity.
For every $I\in \C$,  we define the local test 
\begin{equation} \label{deflocalbasic}
\lt_I =\mathbbm{1}(R_I>R^-_I),
\end{equation}
where $\mathbbm{1}(\cdot)$ is the indicator function. 

\begin{proposition} \label{ltval}
Suppose Assumption \ref{asssym} holds.
Then for every  $I\in\C$, the local test  $\lt_I$ is valid for $\alpha=0.5$, in the sense that if   $H_I$ is true, then  $\mathbb{P}(\lt_I=1)\leq 0.5.$
\end{proposition}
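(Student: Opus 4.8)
The plan is to push everything back to the symmetry in Assumption \ref{asssym} by trading the thresholds $\delta_j$ for the means $\mu_j$. Assume $H_I$ is true, so that $I\subseteq\N$ and $\mu_j\leq\delta_j$ for every $j\in I$. I would introduce the centered counts
$$W_I := |\{j\in I:\, T_j-\mu_j>t\}|, \qquad W^-_I := |\{j\in I:\, T_j-\mu_j<-t\}|,$$
and first record two monotonicity inclusions. Since $\mu_j\leq\delta_j$, the event $\{T_j>\delta_j+t\}$ is contained in $\{T_j-\mu_j>t\}$, whence $R_I\leq W_I$; conversely $\{T_j-\mu_j<-t\}$ is contained in $\{T_j<\delta_j-t\}$, whence $W^-_I\leq R^-_I$. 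Consequently $\{R_I>R^-_I\}\subseteq\{W_I>W^-_I\}$, so it suffices to prove $\pr(W_I>W^-_I)\leq 0.5$.

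The second step is the symmetry argument. Restricting the distributional identity of Assumption \ref{asssym} to the coordinates indexed by $I\subseteq\N$ yields $(T_j-\mu_j:j\in I)\,{\buildrel d \over =}\,(-(T_j-\mu_j):j\in I)$. The map $x\mapsto -x$ carries the half-line $(t,\infty)$ bijectively onto $(-\infty,-t)$, so pushing the joint law forward under this single coordinatewise map turns $W_I$ into $W^-_I$ and $W^-_I$ into $W_I$ \emph{simultaneously}; that is, the pair satisfies $(W_I,W^-_I)\,{\buildrel d \over =}\,(W^-_I,W_I)$. Hence $\pr(W_I>W^-_I)=\pr(W^-_I>W_I)$, and since $\{W_I>W^-_I\}$ and $\{W^-_I>W_I\}$ are disjoint, the two probabilities sum to at most $1$, so each is at most $0.5$. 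Combining with the first step gives $\pr(\lt_I=1)=\pr(R_I>R^-_I)\leq 0.5$, which is the claim.

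The one point that needs genuine care is that the reflection must swap the two counts \emph{jointly}, not merely equate their marginal laws, so the argument should be phrased as an equality in distribution of the vector $(W_I,W^-_I)$ obtained from the single map $x\mapsto -x$ applied to $(T_j-\mu_j:j\in I)$. The bookkeeping of inequality directions relating $\delta_j$ to $\mu_j$ (which way each of the two inclusions runs) is the other place to be attentive, but it is routine; ties such as $T_j-\mu_j=\pm t$ cause no difficulty because the inequalities are strict and $(t,\infty)$ maps onto $(-\infty,-t)$ under negation. Nothing in the argument uses independence or normality — only Assumption \ref{asssym}.
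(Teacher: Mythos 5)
Your proposal is correct and follows essentially the same route as the paper's proof: bound $R_I$ by the mean-centered count and $R_I^-$ from below by its mean-centered counterpart using $\mu_j\leq\delta_j$, then apply the joint reflection symmetry of $(T_j-\mu_j:j\in I)$ to conclude $\pr(W_I>W_I^-)\leq 0.5$. Your explicit remark that the reflection must swap the two counts jointly is a nice articulation of what the paper leaves implicit in its phrase ``by symmetry of $(T_j-\mu_j:j\in I)$.''
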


Proofs are in section \ref{appproofs}.
We now investigate which $I$ are in $\X$, i.e, which intersection hypotheses are rejected by the closed testing procedure.
\begin{proposition} \label{whichinX}
Consider the closed testing procedure based on the local tests defined by equation \eqref{deflocalbasic}.
For any $I\in \C$, we have  $I\in\X$ if and only if  $R_I>R^-$, where we recall that $R^-=R^-_{\{1,...,m\}}$.
\end{proposition}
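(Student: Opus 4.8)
The plan is to unfold the definition of $\X$ and then combine two easy monotonicity observations with one well-chosen superset of $I$. Recall that by definition $I\in\X$ means $\lt_J=1$, i.e.\ $R_J>R^-_J$, for \emph{every} $J$ with $I\subseteq J\subseteq\{1,\dots,m\}$; here $R_J=|J\cap\R|$ and $R^-_J=|J\cap\R^-|$ are both nondecreasing under enlargement of $J$. Note also that $\R$ and $\R^-$ are disjoint: since $t\ge 0$, membership $j\in\R$ forces $T_j-\delta_j>t\ge 0$, whereas $j\in\R^-$ forces $T_j-\delta_j<-t\le 0$, so no index lies in both.

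For the ``if'' direction, suppose $R_I>R^-$ (where $R^-=R^-_{\{1,\dots,m\}}$). For any $J$ with $I\subseteq J\subseteq\{1,\dots,m\}$, monotonicity gives $R_J\ge R_I$ and $R^-_J\le R^-$, hence $R_J\ge R_I>R^-\ge R^-_J$, so $\lt_J=1$. Since this holds for all admissible $J$, we conclude $I\in\X$.

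For the ``only if'' direction, suppose $I\in\X$ and apply the defining condition to the particular set $J^\ast:=I\cup\R^-$, which indeed satisfies $I\subseteq J^\ast\subseteq\{1,\dots,m\}$. Then $R_{J^\ast}>R^-_{J^\ast}$. Here $R^-_{J^\ast}=|(I\cup\R^-)\cap\R^-|=|\R^-|=R^-$, and, using disjointness of $\R$ and $\R^-$, $R_{J^\ast}=|(I\cup\R^-)\cap\R|=|I\cap\R|=R_I$. Combining these, $R_I=R_{J^\ast}>R^-_{J^\ast}=R^-$, which is the claim.

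The argument is short; the one thing to get right is the choice of the critical superset $J^\ast=I\cup\R^-$, which is the ``worst case'' $J\supseteq I$ for the local test: enlarging $I$ by indices of $\R^-$ is exactly what can spoil the inequality $R_J>R^-_J$, while adding indices of $\R$ or of neither set can only help. No further appeal to Assumption~\ref{asssym} is needed in this proposition (symmetry entered only through the validity statement of Proposition~\ref{ltval}), so I anticipate no obstacle beyond bookkeeping.
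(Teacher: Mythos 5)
Your proof is correct and takes essentially the same route as the paper: the paper's argument is a chain of equivalences whose key step is exactly your observation that, among all supersets $J\supseteq I$, the binding case is $J=I\cup\R^-$ (with monotonicity of $R_J$ and $R^-_J$ handling the rest). You simply make explicit the worst-case superset and the disjointness of $\R$ and $\R^-$, which the paper leaves implicit.
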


Next we investigate what bounds the above closed testing procedure provides. For $I\in \C$, note that
$$t_{\alpha}(I)=\max\{|J|: J \subseteq I \text{ and }   J\not\in \X\}= $$
\begin{equation} \label{boundnotadm}
\max\{|J|: J \subseteq I \text{ and }  R_J\leq R^- \}.  
\end{equation}

In particular, we are interested in $t_{\alpha}(\R(t))=t_{\alpha}(\R)$, since this provides a bound for $| \N\cap \R(t)| =V(t)$, which is the number of incorrect rejections for our method defined in Section \ref{secnovel}.
Note that 
$$t_{\alpha}(\R(t)) = \max\{|J|: J \subseteq \R \text{ and }   R_J\leq R^- \} =$$
$$   \max\{|J|: J \subseteq \R \text{ and }  |J|\leq R^- \} = R\wedge R^- =\tilde{V}(t).$$
Thus, $\tilde{V}(t)$ coincides with the $50\%$-confidence bound for $|\N\cap\R|$ produced by this closed testing procedure.
Now consider any other $I\in \C$.
Expression \eqref{boundnotadm} can  be rewritten as
$$ |I|\wedge \big(   |\{j\in I: T_j\leq \delta_j+t  \}| + R^-   \big) =  |I| \wedge \big(  |I\cap \R^c| +R^-    \big),$$
where $\R^c:=\{1,...,m\}\setminus \R$. Thus, the  $50\%$-confidence bound for the number of true hypotheses in $I$, $|\N\cap I|$, is simply the bound  $R^- $ for $|\N\cap \R|$ plus the number of elements in $I$ that are not in $\R$. But this bound follows trivially from the bound  $\tilde{V}=R^-\wedge R$ for $|\N\cap R|$, without looking at the data.  Thus, all the information contained in the bounds $t_{\alpha}(I)$, $I\in \C$, is already contained in our single bound $\tilde{V}$. Thus, the bound $\tilde{V}$ is completely equivalent to the bounds from the closed testing procedure. As we showed, $\tilde{V}$ is not admissible---since it is improved by $\tilde{V}'$---and hence this closed testing procedure is not admissible.

\subsection{$\tilde{V}'$ is admissible} \label{secvtildepadm}

Next, we will show that $\tilde{V}'$ also coincides with a closed testing procedure and that that procedure is admissible. We will then  deduce that  $\tilde{V}'$ is admissible. We start by defining the closed testing procedure, which we do by defining its local tests.
For every $I\in \C$,  we define the local test 
\begin{equation} \label{defadmlt}
    \lt'_I= 
\begin{cases}
   1  & \text{if } R_I(t) >  R_I^-(t) \\
   b  & \text{if } R_I(t) =  R_I^-(t) \\
  0            & \text{otherwise,}
\end{cases}
\end{equation}
i.e., $$  \lt'_I =  \mathbbm{1}(R_I >  R_I^-)+ b    \mathbbm{1}(R_I =  R_I^-),$$
where we define $b$ to be a coin flip  satisfying $b=1$ when $d=-\infty$ and $b=0$ if $d=1$. Note that $\lt'_I$ is a valid local tests for $\alpha=0.5$, i.e., if $H_I$ is true then $\pr( \lt'_I=1)\leq 0.5. $

We first show that this test exhausts $\alpha=0.5$, under some distributions.
\begin{lemma}  \label{exhaust}
For every $I\in\C$,  there exists a distribution of $(T_1,...,T_m)$ for which 1. Assumption \ref{asssym}  is satisfied, 2. $H_I$ is true  and 3. 
$\mathbb{P}(\lt'_I=1)=0.5$.
\end{lemma}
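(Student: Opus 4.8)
The plan is to construct, for each fixed $I \in \C$, an explicit distribution of the test statistics such that the local test $\lt'_I$ from \eqref{defadmlt} rejects with probability exactly $0.5$ when $H_I$ is true. The key realization is that we have complete freedom in choosing the joint distribution of $(T_1,\dots,T_m)$ subject only to Assumption \ref{asssym}, which constrains only the coordinates in $\N$; so the simplest approach is to take $\N = I$, make all hypotheses in $I$ true (barely so, i.e. $\mu_j = \delta_j$ for $j \in I$), and arrange things so that the tie event $\{R_I = R_I^-\}$ has probability zero. If there are no ties, then $\lt'_I = \mathbbm{1}(R_I > R_I^-)$ almost surely, and we are back to the situation analyzed in Lemma \ref{lemmaties}: with $\pr(E_2)=0$ we get $\pr(E_3) = \pr(R_I > R_I^-) = 0.5$ exactly.

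Concretely, I would set $m' = |I|$ and let $(T_j - \delta_j : j \in I)$ be, say, i.i.d.\ standard normal (or any continuous symmetric nondegenerate distribution), and fix the remaining coordinates $j \notin I$ to arbitrary constants (their values are irrelevant since $H_I$ only concerns $j \in I$, and Assumption \ref{asssym} only concerns $j \in \N = I$). Assumption \ref{asssym} is then immediate: $(T_j - \delta_j : j \in I)$ is nondegenerate and, being i.i.d.\ symmetric about $0$, satisfies $(T_j - \delta_j : j \in I) \,{\buildrel d \over =}\, (-(T_j-\delta_j) : j \in I)$. Also $H_j : \mu_j \le \delta_j$ holds with equality for each $j \in I$, so $H_I$ is true. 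Since $\mu_j = \delta_j$, reflecting the vector about its mean exactly swaps $\R_I(t)$ and $\R^-_I(t)$: for a fixed threshold $t$, $\{j \in I : T_j - \delta_j > t\}$ becomes $\{j \in I : T_j - \delta_j < -t\}$ under $x \mapsto -x$. Hence $R_I \,{\buildrel d \over =}\, R_I^-$, and more importantly the pair $(R_I, R_I^-) \,{\buildrel d \over =}\, (R_I^-, R_I)$, giving $\pr(R_I > R_I^-) = \pr(R_I < R_I^-)$.

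It remains to check the tie event has probability zero, so that $\lt'_I = \mathbbm{1}(R_I > R_I^-)$ a.s.\ and the coin flip $b$ never enters. With the $T_j - \delta_j$ continuous and independent, $\pr(|T_j - \delta_j| = t \text{ for some } j) = 0$, so with probability one no $T_j - \delta_j$ lies exactly on $\pm t$; but even allowing some on the boundary, what we actually need is $\pr(R_I = R_I^-) = 0$. This is where a tiny bit of care is needed: $R_I = R_I^-$ can happen with positive probability even without boundary ties — e.g.\ if $|I|$ is even it is quite possible for the count above $t$ to equal the count below $-t$. The fix is to take $|I| = m'$ odd, or more robustly to choose the marginal so that $\pr(T_j - \delta_j > t) + \pr(T_j - \delta_j < -t) < 1$ is irrational-free in a way forcing $R_I \ne R_I^-$... actually the cleanest route is different: pick the common threshold-crossing probabilities and the value of $t$ together with the dimension so that $R_I \ne R_I^-$ deterministically is impossible, so instead I would simply \emph{append one extra auxiliary coordinate}. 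That is, enlarge to $\N = I \cup \{j_0\}$ for a dummy index $j_0$, but this changes $H_I$...

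Let me restructure the last step, since that is the genuine obstacle. The robust construction: keep $\N = I$, take $(T_j - \delta_j : j \in I)$ i.i.d.\ continuous symmetric, and simply choose $t$ so large that $\pr(\text{some } |T_j - \delta_j| > t)$ is tiny; then condition the argument on the event $A = \{\text{exactly one } j \in I \text{ has } |T_j - \delta_j| > t\}$, which by exchangeability and symmetry splits evenly between "that one is $> t$" (so $R_I = 1 > 0 = R_I^-$) and "that one is $< -t$" (so $R_I = 0 < 1 = R_I^-$); meanwhile on $A^c$ with both counts $0$, or both positive, we need the symmetry $\pr(R_I > R_I^- \mid A^c) = \pr(R_I < R_I^- \mid A^c)$, which still holds because $(R_I, R_I^-) \,{\buildrel d \over =}\, (R_I^-, R_I)$ unconditionally (the map $x \mapsto -x$ preserves $A^c$). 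So $\pr(R_I > R_I^-) = \tfrac12(1 - \pr(R_I = R_I^-))$, and it suffices to exhibit \emph{one} such distribution with $\pr(R_I = R_I^-) = 0$. For $|I| = 1$ this is automatic: $R_I \in \{0,1\}$, $R_I^- \in \{0,1\}$, they cannot both be $1$, and the event $\{R_I = R_I^- = 0\} = \{|T_j - \delta_j| \le t\}$ — oops, that still has positive probability. The actual clean fix: take $t = 0$ (allowed, since we only need \emph{some} distribution and $t \ge 0$) and $T_j - \delta_j$ continuous: then $R_I^- = |I| - R_I$ a.s., so $R_I = R_I^- \iff R_I = |I|/2$, which has probability $0$ as soon as $|I|$ is odd, and for $|I|$ even has probability $\binom{|I|}{|I|/2} 2^{-|I|}$ under the i.i.d.\ fair-coin model — nonzero. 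So for even $|I|$, instead make the $T_j - \delta_j$ \emph{dependent but still exchangeable and symmetric}, e.g.\ $T_j - \delta_j = \varepsilon_j \cdot W$ with $\varepsilon_j$ i.i.d.\ symmetric and $W$ an independent variable never equal to $0$, no — that does not help either since signs are still independent fair coins.

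The honest resolution: allow a generic continuous threshold rather than $t=0$. Take $t > 0$ and let $T_j - \delta_j$ be i.i.d.\ from a symmetric continuous law with $p := \pr(T_j - \delta_j > t) \in (0,1/2)$. Then $R_I \sim \mathrm{Bin}(|I|,p)$ on the "positive" side, but the two counts are \emph{not} complementary (there is a middle region $[-t,t]$), and $R_I = R_I^-$ now genuinely can happen with positive probability. To kill it, simply replace the middle region by nothing: use a law supported on $(-\infty,-t) \cup (t,\infty)$, i.e.\ put no mass on $[-t,t]$. Then again $R_I^- = |I| - R_I$ and we are back to the parity issue for even $|I|$. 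Final answer to the obstacle: just take $|I|$ odd in the construction when $|I|$ is odd, and for even $|I|$ remove one index's symmetry in a controlled way — concretely, for even $|I|$, fix $T_{j^*} - \delta_{j^*}$ (one chosen $j^* \in I$) to a positive constant $c > t$ with probability $1$... but then Assumption \ref{asssym} fails for that coordinate. Hence: for even $|I|$, take $|I| - 1$ of the coordinates i.i.d.\ symmetric on $(-\infty,-t)\cup(t,\infty)$ and let the last coordinate equal the \emph{negative of one of the others} — no, still symmetric signs.

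I will therefore present the plan at the level of: \emph{construct an i.i.d.\ symmetric continuous model for $(T_j-\delta_j:j\in I)$ supported off a neighborhood of the origin when $|I|$ is odd; for $|I|$ even, use instead a model where the number of positive coordinates is a.s.\ never $|I|/2$ (e.g.\ let two designated coordinates be perfectly negatively dependent, $T_{j_1}-\delta_{j_1} = -(T_{j_2}-\delta_{j_2})$, forcing exactly one of them positive, and the rest i.i.d.\ symmetric continuous), which still satisfies Assumption \ref{asssym} because the joint vector is invariant under global sign flip, and makes $\pr(R_I = R_I^-)=0$}. With $\pr(R_I = R_I^-) = 0$ in hand, the symmetry $(R_I,R_I^-)\,{\buildrel d\over=}\,(R_I^-,R_I)$ gives $\pr(\lt'_I = 1) = \pr(R_I > R_I^-) = 0.5$, completing the proof. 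The main obstacle, as the above makes clear, is precisely ruling out the tie $R_I = R_I^-$ in a way compatible with Assumption \ref{asssym}; everything else (validity of $\lt'_I$, truth of $H_I$, the reflection argument) is routine and already in place from Lemma \ref{lemmaties} and Proposition \ref{ltval}.
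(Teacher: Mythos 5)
There is a genuine gap, and it stems from overlooking the one ingredient the statement is designed around: the coin flip $b$ in definition \eqref{defadmlt} makes exhaustion automatic, so there is no need to eliminate the tie event at all. Under any distribution satisfying Assumption \ref{asssym} with $\mu_j=\delta_j$ for all $j$ (your i.i.d.\ symmetric choice already qualifies), the reflection $(T_j-\delta_j : j\in I) \,{\buildrel d \over =}\, (-(T_j-\delta_j): j\in I)$ gives $\pr(R_I>R_I^-)=\pr(R_I<R_I^-)$, and since $b$ is a fair coin independent of the data,
$$\pr(\lt'_I=1)=\pr(R_I>R_I^-)+0.5\,\pr(R_I=R_I^-)=0.5$$
exactly, whatever the tie probability is. That two-line computation is essentially the paper's entire proof; your insistence on $\pr(R_I=R_I^-)=0$, so that ``the coin flip $b$ never enters,'' manufactures an obstacle the lemma does not contain.

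Moreover, the construction you settle on to overcome that self-imposed obstacle does not work for even $|I|$. Taking $T_{j_1}-\delta_{j_1}=-(T_{j_2}-\delta_{j_2})$ (with magnitudes above $t$) and the remaining $|I|-2$ coordinates i.i.d.\ symmetric does not make ties null: for $|I|=2$ it forces $R_I=R_I^-=1$ almost surely, so under your reading $\lt'_I=\mathbbm{1}(R_I>R_I^-)$ the rejection probability would be $0$, not $0.5$; for larger even $|I|$, if you also restrict the remaining coordinates' support away from $[-t,t]$ the tie event occurs exactly when they split evenly between $(t,\infty)$ and $(-\infty,-t)$, which has probability $\binom{|I|-2}{(|I|-2)/2}2^{-(|I|-2)}>0$, and without that restriction it contains, e.g., the event that all remaining coordinates fall in $[-t,t]$. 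So the argument as written fails for every even $|I|$. (Tie-free symmetric constructions do exist for even $|I|$---e.g.\ give all coordinates a common random sign and magnitudes above $t$---but none of this is needed: once $b$ is used, any boundary-null distribution works, and the odd/even case split, support restrictions and dependence tricks can all be deleted. Your examples do satisfy the lemma's conclusion, just not for the reason you give.)
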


 In many  situations, if a test exhausts $\alpha$ under some distributions, then it is admissible. 
 The lemma below states that this is also the case for the local tests $\lt'_I$,  $I\in\C$.

\begin{lemma} \label{ltIadm}   
Assume the support of $(T_1,...,T_m)$ is $\mathbb{R}^m$.
Then for every $I\in\C$, the test  $\lt'_I = \lt'_I((T_1,...,T_m))$ is admissible, i.e., there exists no test $\phi^*_I((T_1,...,T_m))$ taking values in $\{0,1\}$ such that:

\begin{enumerate}
\item the test $\phi^*$ is valid for $\alpha=0.5$, i.e.,   it generally holds that  $\mathbb{P}(\phi^*=1)\leq 0.5$ under Assumption \ref{asssym} if $H_I$ is true;
\item $\phi ^*_I$ always rejects when  $\lt'_I$ rejects, i.e., $ \phi ^*_I  \geq \lt'_I$;
\item for at least one distribution satisfying Assumption \ref{asssym}, $\mathbb{P}( \phi^*_I >\lt'_I )>0$, i.e., $\phi^*_I$ improves $\lt'_I$.
\end{enumerate}

\end{lemma}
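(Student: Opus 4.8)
The plan is to argue by contradiction: suppose a test $\phi^*_I$ satisfies conditions 1--3. I will derive a contradiction by analysing the event on which $\phi^*_I$ strictly improves on $\lt'_I$. Since $\lt'_I$ rejects with probability $1$ on $\{R_I > R_I^-\}$, rejects with probability $\tfrac12$ (via the coin flip $b$) on the tie event $\{R_I = R_I^-\}$, and never rejects on $\{R_I < R_I^-\}$, condition~2 forces $\phi^*_I = 1$ whenever $R_I > R_I^-$ and whenever $R_I = R_I^-$ and $b = 1$. Hence the only room for a strict improvement is the set
$$
D := \{R_I < R_I^-\} \cup \big(\{R_I = R_I^-\}\cap\{b=0\}\big),
$$
and condition~3 says there is a distribution $Q$ satisfying Assumption~\ref{asssym} with $H_I$ true and $Q(\phi^*_I = 1,\ \text{we are in } D) > 0$.

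First I would fix the distribution $Q_0$ from Lemma \ref{exhaust} associated with this particular $I$: under $Q_0$, Assumption~\ref{asssym} holds, $H_I$ is true, and $\mathbb{P}(\lt'_I = 1) = \tfrac12$, i.e.\ $\lt'_I$ exhausts level $\tfrac12$ under $Q_0$. The key point is that under $Q_0$ the improvement set $D$ has positive probability --- indeed $\mathbb{P}_{Q_0}(\lt'_I=1)=\tfrac12$ together with the structure above gives $\mathbb{P}_{Q_0}(D)\ge \tfrac12>0$, and in fact (tracing through the symmetry argument underlying Lemma \ref{exhaust}) $D$ is exactly complementary in probability to $\{\lt'_I=1\}$. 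Now I use the full-support hypothesis: since the support of $(T_1,\dots,T_m)$ is $\mathbb{R}^m$, any event of the form ``$(T_1,\dots,T_m)$ lies in an open set witnessing $\phi^*_I=1$ and $D$'' has positive probability under $Q_0$ as well, because one can perturb the distribution $Q$ of condition~3 toward $Q_0$ while keeping that open event charged. Concretely: on $D$ we have $R_I \le R_I^-$, so $\lt'_I = 1$ only through the coin flip; if $\phi^*_I = 1$ on a positive-probability piece of $D$ under $Q$, then by absolute-continuity-style reasoning (or by exhibiting an explicit symmetric distribution supported near the relevant configurations, which full support permits) there is a symmetric distribution $Q'$ with $H_I$ true under which $\mathbb{P}_{Q'}(\phi^*_I = 1)$ strictly exceeds $\mathbb{P}_{Q'}(\lt'_I = 1)$.

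The contradiction is then obtained by combining this with the exhaustion property at $Q_0$. Because $\lt'_I$ is essentially the indicator of $\{R_I > R_I^-\}$ randomised on ties, and because under a distribution making all of $I$'s hypotheses barely true (the $Q_0$ of Lemma \ref{exhaust}) the reflection symmetry of Assumption~\ref{asssym} forces $\mathbb{P}(R_I > R_I^-) + \tfrac12\mathbb{P}(R_I = R_I^-) = \tfrac12$ exactly, there is no slack: any test $\phi^*_I \ge \lt'_I$ with $\mathbb{P}(\phi^*_I = 1) > \mathbb{P}(\lt'_I = 1)$ at even one symmetric $Q$ with $H_I$ true would, when evaluated at $Q_0$, push the rejection probability strictly above $\tfrac12$, violating condition~1. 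The technical bridge --- that a strict improvement at \emph{some} symmetric distribution propagates to a violation at $Q_0$ --- is where the full-support assumption is essential: it guarantees that the configuration $(T_1,\dots,T_m)$ on which $\phi^*_I$ does its extra rejecting is charged by $Q_0$, so the extra mass cannot be ``hidden'' on a $Q_0$-null set.

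The main obstacle I anticipate is making the propagation step rigorous: moving from ``$\phi^*_I$ strictly improves $\lt'_I$ at one $Q$'' to ``$\phi^*_I$ overshoots level $\tfrac12$ at $Q_0$'' is not automatic, since validity is required across \emph{all} symmetric distributions, not pointwise at $Q_0$. The cleanest route is probably to observe that $\phi^*_I \ge \lt'_I$ together with the reflection-symmetry identity forces $\phi^*_I$ to reject on (essentially) a \emph{superset} of $\{R_I \ge R_I^-\}$ up to the coin flip, and then to build, using full support, a symmetric distribution concentrating mass on the extra region where $\phi^*_I = 1$ but $R_I < R_I^-$ --- under such a distribution $\mathbb{P}(R_I > R_I^-)$ can be made arbitrarily small while $\mathbb{P}(\phi^*_I = 1)$ stays bounded below, directly contradicting condition~1. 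I would isolate this distribution-building argument as the heart of the proof.
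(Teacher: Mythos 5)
Your proposal is correct and follows essentially the same route as the paper: argue by contradiction, take the exhausting null distribution from Lemma \ref{exhaust}, use the full-support assumption to ensure the set where $\phi^*_I=1$ but $\lt'_I=0$ is charged under that distribution, and conclude $\mathbb{P}(\phi^*_I=1)>0.5$ there, contradicting validity. The paper's proof is precisely this argument in compressed form (it too glosses over the measure-theoretic step you flag, by taking an improvement set of positive Lebesgue measure and invoking full support), so your closing alternative construction concentrating mass on the improvement region is not needed.
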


 We now investigate which $I$ are in $\X$ for the closed testing procedure based on the local tests $\lt'_I$.
\begin{proposition} \label{propctp2}
Consider the closed testing procedure based on the local tests $\lt'_I$.
For any $I\in \C$, we have  $I\in\X$ if and only if  $R_I(t)>\tilde{V}'(t)$.
\end{proposition}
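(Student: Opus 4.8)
The plan is to characterize membership in $\X$ directly from the definition of the closed testing procedure and the local tests $\lt'_I$, paralleling the proof of Proposition \ref{whichinX} but carefully tracking the coin flips $b$ (equivalently $d$). Recall that $I\in\X$ iff $\lt'_J=1$ for every $J$ with $I\subseteq J\subseteq\{1,\dots,m\}$. Since $\lt'_J\ge\mathbbm{1}(R_J>R^-_J)$ and $\lt'_J=1$ exactly when either $R_J>R^-_J$, or $R_J=R^-_J$ and the coin flip gives $b=1$, I would split into the two cases $d=1$ (so $b=0$) and $d=-\infty$ (so $b=1$).

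First consider $d=-\infty$, so $b=1$ and hence $\lt'_J=\mathbbm{1}(R_J\ge R^-_J)$ for all $J$. Note also that $d=-\infty$ forces $\tilde{V}'(t)=\tilde{V}(t)=R\wedge R^-$ if $R=R^-$, and otherwise $\tilde{V}'(t)=\tilde{V}(t)$; in all cases when $d=-\infty$ we have $\tilde{V}'(t)=\tilde{V}(t)=R\wedge R^-$. For $J$ ranging over supersets of $I$, the quantity $R_J$ is nondecreasing in $J$ while $R^-_J$ is also nondecreasing, and taking $J=\{1,\dots,m\}$ gives the binding constraint $R_{\{1,\dots,m\}}=R\ge R^-$. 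Arguing as in Proposition \ref{whichinX}, the condition ``$R_J\ge R^-_J$ for all $J\supseteq I$'' reduces to a single inequality; I expect it to come out as $R_I>\tilde{V}'(t)=R\wedge R^-$, using that the worst case over supersets $J$ of $I$ is obtained by adding indices in $\R^-\setminus I$ first. Concretely, the maximal $J$-deficit $R^-_J-R_J$ over $J\supseteq I$ equals $R^- - R_I$ when $R_I\le R^-$ (add all of $\R^-$, none of $\R$), so $\lt'_J=1$ for all such $J$ iff $R_I>R^-$ is false-proof... — more precisely iff $R_I\ge R^-_J$ throughout, which I will show is equivalent to $R_I>R\wedge R^-$.

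Next consider $d=1$, so $b=0$ and $\lt'_J=\mathbbm{1}(R_J>R^-_J)$; this is exactly the local test of equation \eqref{deflocalbasic}, so by Proposition \ref{whichinX} applied to $J$ in place of the full index set — or rather, by repeating its argument relativized to supersets of $I$ — we get $I\in\X$ iff $R_I>R^-$. Meanwhile, in this case $\tilde{V}'(t)=d=R^-$ precisely when $R=R^-$ (that is the only case where $\tilde{V}'$ differs from $\tilde{V}$, and then $\tilde{V}'=d=1$... — I need to recheck the definition: $\tilde{V}'(t)=d$ when $R(t)=R^-(t)$, and $d$ equals $\tilde{V}(t)$ or $-\infty$; so $d=1$ means $\tilde{V}'(t)=\tilde{V}(t)=R\wedge R^-=R=R^-$). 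When $R\ne R^-$ we have $\tilde{V}'(t)=\tilde{V}(t)=R\wedge R^-$. So in the $d=1$ branch, $\tilde{V}'(t)=R\wedge R^-=R^-\wedge R$ as well, and the condition $R_I>R^-$ needs to be matched against $R_I>\tilde{V}'(t)$. When $R^-\le R$ these agree immediately; when $R^-> R$, note $R_I\le R$, so $R_I>R^-$ is impossible, consistent with $I\notin\X$, and $R_I>\tilde{V}'(t)=R$ is also impossible — so the two conditions again agree (both false). This handles all sub-cases.

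The main obstacle I anticipate is bookkeeping: showing that for every fixed realization, the quantifier ``$\lt'_J=1$ for all $J\supseteq I$'' collapses to the single clean inequality $R_I(t)>\tilde{V}'(t)$, uniformly across the two values of the coin flip and across the regimes $R^-\lessgtr R$. The key structural fact making this work is that $R^-_J-R_J$ is maximized over $J\supseteq I$ by adjoining all indices of $\R^-$ not already in $I$ (which only increases $R^-_J$) and no indices of $\R$, combined with the observation that $\tilde{V}'(t)$ equals $R^-\wedge R$ on the event relevant to each branch. Once that reduction is pinned down, the equivalence is immediate, and I would present it as a short case analysis rather than a computation.
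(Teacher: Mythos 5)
Your reduction of membership in $\X$ is the same as the paper's and is correct: using the worst-case superset $J=I\cup\R^-$ (for which $R_J=R_I$ and $R^-_J=R^-$, since $\R$ and $\R^-$ are disjoint), you get $I\in\X$ iff $R_I>R^-$ when $b=0$ and iff $R_I\geq R^-$ when $b=1$; and your $b=0$ branch is complete, since there $\tilde{V}'=\tilde{V}=R\wedge R^-$ in every realization and the regimes $R^-\leq R$ and $R^->R$ both match. The genuine gap is the $b=1$ branch, and it is twofold. First, you misstate the definition of $\tilde{V}'$: on the event $R=R^-$ with $d=-\infty$ we have $\tilde{V}'=d=-\infty$, not $\tilde{V}=R\wedge R^-$, so your assertion that $\tilde{V}'=R\wedge R^-$ ``in all cases when $d=-\infty$'' contradicts the definition. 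Second, the equivalence you announce but never prove in that branch---that ``$R_J\geq R^-_J$ for all $J\supseteq I$'' is the same as $R_I>R\wedge R^-$---is false at ties: the left-hand condition is equivalent to $R_I\geq R^-$, and whenever $R\geq R^-\geq 1$ there exist sets with $R_I=R^-$. Concretely, take $\R=\{1,2\}$, $\R^-=\{3\}$ and $I=\{1,3\}$: with $b=1$ every $J\supseteq I$ satisfies $R_J\geq R^-_J$ (the tie at $J=I$ is rejected because $b=1$), so $I\in\X$, yet $R_I=1=R\wedge R^-$, so $R_I>R\wedge R^-$ fails. Hence, under your reading of $\tilde{V}'$, the forward direction of the proposition cannot be established for such boundary sets, and the case analysis you defer as ``bookkeeping'' cannot be completed as announced.

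These tie configurations are not incidental; they are the entire reason the coin $d$ and the modified bound $\tilde{V}'$ were introduced, and they are exactly where the paper's own proof does its real work: after the same two-branch reduction of $\X$ that you use, it finishes with an explicit case analysis over five events involving $R_I$ versus $R^-$, $R$ versus $R^-$, and $b$, in which the realizations with $\tilde{V}'=d=-\infty$ are what deliver $R_I>\tilde{V}'$ for the tied sets. By asserting $\tilde{V}'=R\wedge R^-$ throughout the $b=1$ branch you erase the only mechanism available to handle those sets, so the mismatch above is unavoidable in your write-up. If you redo this branch, track precisely on which event $\tilde{V}'$ equals $-\infty$ (by the definition, $d$ is invoked only when the global counts satisfy $R=R^-$, not when $R_I=R^-_I$ or $R_I=R^-$), because the matching of $\mathbbm{1}(I\in\X)=\mathbbm{1}(R_I>R^-)+b\,\mathbbm{1}(R_I=R^-)$ against $\mathbbm{1}(R_I>\tilde{V}')$ is sensitive to exactly this point.
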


By Proposition \ref{propctp2}, the bounds that the closed testing procedure based on the  local tests $\lt'_J$ provides are 
\begin{equation} \label{boundadm}
\begin{aligned}
t_{\alpha}(I)&=\max\{|J|: J \subseteq I \text{ and }  J\not\in \X\} \\
&=\max\{|J|: J \subseteq I \text{ and }   R_J\leq \tilde{V}' \},
\end{aligned}
\end{equation}
$I\in \C$.
In particular, for $I=\R(t)$, which is the set we are interested in, we obtain
$$t_{\alpha}(\R(t)) = \tilde{V}',$$
analogously to Section  \ref{secvtisctp}.
Thus, $\tilde{V}'(t)$ concides with the $50\%$-confidence bound from the closed testing procedure based on local tests $\lt'_I$.
Now consider any other $I\in \C$.
The number \eqref{boundadm} equals
\begin{align*}
\max\{|J|: J \subseteq I \text{ and }   R_J\leq \tilde{V}'  \}  
&= |I|\wedge (  | \{j\in I: T_j\leq \delta_j+t  \}| + \tilde{V}'   ) \\
&=  |I| \wedge (  |I\cap \R^c| + \tilde{V}'  ).
\end{align*}
 Thus, the  $50\%$-confidence bound for the number of true hypotheses in $I$, $|\N\cap I|$, is simply the bound  $\tilde{V}' $ plus the number of elements in $I$ that are not in $\R$. This bound follows trivially from the bound  $\tilde{V}'$ for $\N\cap R$, without looking at the data.  Thus, all the information contained in the bounds $t_{\alpha}(I)$, $I\in \C$, is already contained in our single bound $\tilde{V}'$. 
 
 Earlier we found  that $\tilde{V}$ is equivalent to the closed testing procedure based on local tests $\lt(I)$. Now we have found that  $\tilde{V}'$ is equivalent to the closed testing procedure based on local tests $\lt'(I)$.
 An important difference is that unlike the local tests  $\lt_I$,  the local tests $\lt'_I$ are admissible. From Theorem 3 in \citt{goeman2021only} it follows that the corresponding closed testing procedure is admissible in the sense of  \citt{goeman2021only}.
Theorem \ref{thmadm} states that this means that the bound $\tilde{V}'$ is admissible. This is proved in section Section \ref{appproofs}.

\section{Equivalence testing: $\tilde{V}$ is a closed testing procedure}      \label{appVtildectp}
Consider the equivalence testing setting from Section \ref{secet}.
We will now show that $\tilde{V}$ coincides with a closed testing procedure. We assume the reader is familiar with the notation and theory from Section \ref{secbasicsct}, e.g.,  $\C$ is the collection of all nonempty subsets of $\{1,...,m\}$.
For every $I\in \C$, define 
$$
\R_I(t) := \{j\in I: |T_j|<\delta_j -t  \}, \quad  R_I(t) := |\R_I(t)|,
$$
$$
\R_I^-(t):=   \{j \in I: |T_j|>\delta_j + t  \}, \quad  R_I^-(t) = |\R_I^-(t)|.
$$

For every $I\in \C$,  we define the local test 
\begin{equation} \label{deflocalbasicet}
\lt'_I:= \mathbbm{1}( R_I >  R_I^-).
\end{equation}
It can easily be checked that if $I\subseteq \N$, then $\mathbb{P}(R_I(t)>R_I^-(t))\leq 0.5$.

 Recall the notation $\X$ from Section \ref{secbasicsct} .
 We now investigate which $I$ is in $\X$ for the closed testing procedure based on the local tests $\lt'_I$.

\begin{proposition}  \label{whichrejectet}
Consider the hypotheses $H_j: |\mu_j|\geq\delta_j$, $1\leq j \leq m$.
Consider the closed testing procedure based on the local tests defined by equation \eqref{deflocalbasicet}.
For any $I\in \C$, we have  $I\in\X$ if and only if  $R_I(t)>R^-(t)$.
\end{proposition}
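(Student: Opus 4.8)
The plan is to mimic the argument behind Proposition~\ref{whichinX}, adapted to the equivalence-testing definitions of $\R(t)$ and $\R^-(t)$. Recall that $I\in\X$ means $\lt'_J=1$, i.e.\ $R_J(t)>R^-_J(t)$, for every $J$ with $I\subseteq J\subseteq\{1,\dots,m\}$. Two elementary observations drive the proof. First, $\R(t)\cap\R^-(t)=\emptyset$: a common index $j$ would have to satisfy both $|T_j|<\delta_j-t$ and $|T_j|>\delta_j+t$, which is impossible since $t\geq 0$ forces $\delta_j-t\leq\delta_j+t$. Second, both $I\mapsto R_I(t)$ and $I\mapsto R^-_I(t)$ are monotone nondecreasing in $I$, as they count elements of $I$ lying in a fixed subset of $\{1,\dots,m\}$; in particular $R^-_J(t)\leq R^-_{\{1,\dots,m\}}(t)=R^-(t)$ for every $J\subseteq\{1,\dots,m\}$.

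For the ``if'' direction I would assume $R_I(t)>R^-(t)$ and take an arbitrary $J$ with $I\subseteq J\subseteq\{1,\dots,m\}$. Monotonicity then gives $R_J(t)\geq R_I(t)>R^-(t)\geq R^-_J(t)$, so $\lt'_J=1$; since $J$ was arbitrary, $I\in\X$.

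For the ``only if'' direction the key step is to evaluate the defining condition at a well-chosen superset of $I$, namely $J:=I\cup\R^-(t)$ (which lies in $\C$ and contains $I$). Because $\R^-(t)\subseteq J$ we have $R^-_J(t)=|J\cap\R^-(t)|=R^-(t)$, and because $\R(t)\cap\R^-(t)=\emptyset$ we have $J\cap\R(t)=\bigl(I\cap\R(t)\bigr)\cup\bigl(\R^-(t)\cap\R(t)\bigr)=I\cap\R(t)$, hence $R_J(t)=R_I(t)$. Since $I\in\X$ forces $\lt'_J=1$, i.e.\ $R_J(t)>R^-_J(t)$, this reads $R_I(t)>R^-(t)$, as required.

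I do not expect a genuine obstacle here. The only nonobvious ingredient is noticing that the superset $J=I\cup\R^-(t)$ simultaneously saturates the ``$R^-$'' count at $R^-(t)$ and leaves the ``$R$'' count equal to $R_I(t)$, and this hinges entirely on the disjointness $\R(t)\cap\R^-(t)=\emptyset$, which is immediate in the equivalence setting. Everything else is bookkeeping with the two monotone counting functions.
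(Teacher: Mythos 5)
Your proof is correct and follows essentially the same route as the paper: the paper's proof reduces $I\in\X$ to the single inequality $R_I(t)>R^-(t)$ by the same worst-case-superset reasoning, only stated tersely, while you make the implicit step explicit via monotonicity of the counts and the witness $J=I\cup\R^-(t)$ together with the disjointness $\R(t)\cap\R^-(t)=\emptyset$.
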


Next we investigate what bounds the above closed testing procedure provides. For $I\in \C$, we have   
\begin{equation} \label{boundnotadmet}
t_{\alpha}(I)=  \max\{J \subseteq I:   J\not\in \X \} = \max\{J \subseteq I:   R_J\leq R^- \}.
\end{equation}
In particular, if we take $I=\R(t)=\R$, we obtain
$$t_{\alpha}(\R) = \max\{J \subseteq \R:   R_J\leq R^- \} =   \max\{J \subseteq \R:   |J|\leq R^- \} = R\wedge R^- =\tilde{V}(t).$$
Thus, $\tilde{V}(t)$ concides with the $50\%$-confidence bound for $|\N\cap\R|$ produced by this closed testing procedure.
Now consider any other $I\in \C$.
The quantity \eqref{boundnotadmet} equals 
$$\max\{J \subseteq I:   R_J\leq R^- \}  = |I|\wedge (   |\{j\in I: |T_j|\geq \delta_j-t  \}| + R^-   ) =  |I| \wedge (  |I\cap \R^c| +R^-    ),$$
where $\R^c=\{1,...,m\}\setminus \R$. Thus, $t_{\alpha}(I)$  is simply the bound  $R^- $ for $|\N\cap \R|$ plus the number of elements in $I$ that are not in $\R$. 
Hence, just as in Section \ref{secvtisctp}, we conclude that 
all the information contained in the bounds $t_{\alpha}(I)$, $I\in \C$, is already contained in our single bound $\tilde{V}$.  Thus, $\tilde{V}$ is equivalent to a closed testing procedure. 

This closed testing procedure is not admissible, because the underlying local tests $\lt'_I$ are not admissible, due to the possibility that $R_I = R_I^-$. However, we can again introduce a random coin flip $d$ in case $R_I =  R_I^-$, as in definition \eqref{defadmlt}. In case $R_I =  R_I^-$ and $d=1$,  the corresponding bound, $\tilde{V}'$ say,  will be $-\infty$ as in Section \ref{secnovel}. 
We conjecture that these redefined local tests are admissible. It then follows as in Section \ref{secvtildepadm}
that $\tilde{V}'$ is admissible.
Further, under additional assumptions, a slightly more powerful procedure can be defined.
 This is discussed in Appendix \ref{etadm}.

\section{An alternative FDP estimator for equivalence testing} \label{etadm}
This section is an extension of the theory of Section \ref{secet}.  Here, we introduce the following  assumption, under which we can make the method from Section \ref{secet} somewhat more powerful.  $\tilde{V}(t)$ will be defined as in Section \ref{secet}, except that the definition of the underlying quantity $R^-(t)$ will be slightly redefined.

\begin{assumption} \label{assmono}
Assumption \ref{asssym} holds and the test statistics $T_j$ with $j\in \N$ are mutually independent and each have a density function $f_j:\mathbb{R}\rightarrow [0,\infty)$ with respect to the Lebesgue measure. Moreover, $f_j$ is non-decreasing on $(\infty,\mu_j)$ and non-increasing on $(\mu_j,\infty)$.
\end{assumption}
Here, the assumption that $T_j$ with $j\in \N$ are mutually independent is more than  we need in the proof of Theorem \ref{thmetindep}. Future research may substantially weaken this independence assumption.

We still consider the setting of equivalence testing, so the null hypotheses are $H_j: \mu_j\geq \delta_j$, $1\leq j \leq m$.
Moreover, we still write $\R(t) = \{1\leq j \leq m: |T_j|<\delta_j -t  \}$ and $ R(t):= |\R(t)|.$
However, we now slighty modify the definition of  $R^-(t)$. We still use the same mathematical symbols as in Section \ref{secet},  to avoid an excess of notation.
Define
$$
\R^-(t):=   \{1\leq j \leq m: \delta_j + t <|T_j| \leq 3\delta_j-t \}, \quad  R^-(t):= |\R^-(t)|  
$$
and $\tilde{V} :=   R^- \wedge R$.

We now show that the bound  $\tilde{V}(t)$ is valid under Assumption \ref{assmono}.

\begin{theorem} \label{thmetindep}
Under Assumption \ref{assmono}, we have $\pr(V(t)\leq \tilde{V}(t))\geq 0.5$.
\end{theorem}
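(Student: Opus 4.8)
The plan is to reduce the statement to a one‑dimensional coupling inequality. For $j\in\N$ write $B_j:=\mathbbm{1}(|T_j|<\delta_j-t)$ and $C_j:=\mathbbm{1}(\delta_j+t<|T_j|\le 3\delta_j-t)$, so that $V(t)=\sum_{j\in\N}B_j$, while $R(t)\ge V(t)$ deterministically and $R^-(t)=\sum_{1\le j\le m}C_j\ge\sum_{j\in\N}C_j$. On the event $\{\sum_{j\in\N}B_j\le\sum_{j\in\N}C_j\}$ we then have both $V(t)\le R^-(t)$ and $V(t)\le R(t)$, hence $V(t)\le R^-(t)\wedge R(t)=\tilde{V}(t)$. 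So it suffices to prove $\pr\!\big(\sum_{j\in\N}C_j\ge\sum_{j\in\N}B_j\big)\ge 0.5$.

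First I would establish the marginal inequality $\pr(C_j=1)\ge\pr(B_j=1)$ for every $j\in\N$. Splitting $\N$ into $\N^-=\{j\in\N:\mu_j\le-\delta_j\}$ and $\N^+=\{j\in\N:\mu_j\ge\delta_j\}$, and noting that $B_j$ and $C_j$ depend on $T_j$ only through $|T_j|$, the map $T_j\mapsto-T_j$ reduces the case $j\in\N^-$ to the case $j\in\N^+$; so assume $\mu_j\ge\delta_j$. By Assumption \ref{assmono}, $f_j$ is symmetric about $\mu_j$ and unimodal with mode $\mu_j$, and the standard one‑dimensional fact that $c\mapsto\int_{c-r}^{c+r}f_j(\mu_j+x)\,dx$ is even in $c$ and non‑increasing in $|c|$ shows that, among intervals of a fixed half‑length, those whose centre is closer to $\mu_j$ carry at least as much $T_j$‑mass. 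The rejection window $\{|T_j|<\delta_j-t\}$ is the interval of half‑length $\delta_j-t$ centred at $0$, hence (reflecting about $\mu_j$) has the same mass as the interval of half‑length $\delta_j-t$ centred at $2\mu_j$; and $(\delta_j+t,3\delta_j-t]$ is an interval of half‑length $\delta_j-t$ centred at $2\delta_j$, with $|2\delta_j-\mu_j|\le\mu_j=|2\mu_j-\mu_j|$ because $0<\delta_j\le\mu_j$. Therefore $\pr(B_j=1)\le\pr\{T_j\in(\delta_j+t,3\delta_j-t]\}\le\pr(C_j=1)$.

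Next comes the coupling. Put $D_j:=C_j-B_j$; since $B_jC_j=0$ we have $D_j\in\{-1,0,1\}$, the $D_j$ are independent over $j\in\N$ because the $T_j$ are, and $\pr(D_j=1)\ge\pr(D_j=-1)=:q_j$ by the previous step. Introduce independent variables $D_j^*$ with $\pr(D_j^*=1)=\pr(D_j^*=-1)=q_j$, coupled to $D_j$ through a shared uniform so that $D_j\ge D_j^*$ pointwise (feasible since $\pr(D_j=-1)=\pr(D_j^*=-1)$ while $\pr(D_j=1)\ge\pr(D_j^*=1)$, and $\pr(D_j=1)+\pr(D_j=-1)\le 1$). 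Then $\sum_{j\in\N}D_j\ge\sum_{j\in\N}D_j^*$, and $\sum_{j\in\N}D_j^*\,{\buildrel d \over =}\,-\sum_{j\in\N}D_j^*$ by independence and symmetry of the $D_j^*$, so $\pr\!\big(\sum_{j\in\N}D_j^*\ge 0\big)\ge 0.5$; hence $\pr\!\big(\sum_{j\in\N}C_j\ge\sum_{j\in\N}B_j\big)=\pr\!\big(\sum_{j\in\N}D_j\ge 0\big)\ge 0.5$, which finishes the proof.

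The main obstacle is the geometric mass comparison in the second step. In Theorem \ref{boundetvalid} the set $\R^-(t)$ used the unbounded windows $\{|T_j|>\delta_j+t\}$, so pure symmetry of $(T_i:i\in\N)$ sufficed; here $\R^-(t)$ uses the \emph{bounded} windows $(\delta_j+t,3\delta_j-t]$, and one genuinely needs unimodality together with the elementary observation that the centre $2\delta_j$ is never farther from the mode $\mu_j$ than the centre $0$ of the rejection window is. This is exactly where Assumption \ref{assmono} enters, both through unimodality (step two) and through the independence it postulates (the summation and symmetrization in step three); the degenerate situations, such as $t=\delta_j$ or $\R^-(t)=\emptyset$, are routine.
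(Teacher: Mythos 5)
Your proof is correct and follows essentially the same route as the paper: for each null $j$ you compare the mass of the rejection window $\{|T_j|<\delta_j-t\}$ with that of $(\delta_j+t,3\delta_j-t]$ using symmetry and unimodality of $f_j$ about $\mu_j$ (reducing $\N^-$ to $\N^+$ by reflection), and then combine across the mutually independent null coordinates to obtain $\pr\big(V(t)\le R^-(t)\big)\ge 0.5$. The only difference is level of detail: where the paper simply asserts that independence yields $\pr\{V(t)\le R^-_{\N}(t)\}\ge 0.5$, you spell this out with the coupling of $D_j=C_j-B_j$ against independent symmetric three-valued variables, a correct and welcome elaboration of the same step rather than a different argument.
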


\begin{proof}
Let $\N^+= \{j\in \N: \mu_j\geq \delta_j\}$,  $\N^+= \{j\in \N: \mu_j\leq -\delta_j\}$.
Write $S_j^+ = (-\delta_j +t,\delta_j -t)$, $S_j^- =   [-3\delta_j+t ,  -\delta_j -t )    \cup (\delta_j + t ,  3\delta_j-t]$.

We have   $$ V(t)=| \{j\in \N: T_j\in S^+  \}  |,$$ and 
    $$ R_{\N}^-(t):=\N\cap R^-(t) =|\{j\in \N: T_j \in S_j^- \}  | .$$

Let $j\in \N$. Hence, either $\mu_j\geq \delta_j$ or $\mu_j\leq -\delta_j$. First suppose that $\mu_j\geq \delta_j$. 
Note that $\delta_j$ lies exactly in between the intervals $S_j^+$ and $(\delta_j + t ,  3\delta_j-t] \subset S_j^-$. Further, both those interval have width $2\delta_j-2t$.
Consequently, if $\mu_j=\delta_j$, then 
$$\pr(T_j\in S^+) = \int_{S^+_j} f_j(x) dx = \int_{(\delta_j + t ,  3\delta_j-t]} f_j(x) dx \leq \pr(T_j\in S^-),$$
since by Assumption \ref{assmono}, $T_j$ is unimodal and symmetric about $\mu_j=\delta_j$.
Now note that if $\mu_j>\delta_j$, then $f_j$ is unimodal and symmetric with mean larger than $\delta_j$, so that
$$\pr(T_j\in S^+) = \int_{S^+_j} f_j(x) dx \leq  \int_{(\delta_j + t ,  3\delta_j-t]} f_j(x) dx \leq \pr(T_j\in S^-).$$
We conclude that if $\mu_j\geq \delta_j,$ then  $\pr(T_j\in S^+) \leq \pr(T_j\in S^-)$. 
Analogously, we find that $\mu_j\leq -\delta_j,$ then  $\pr(T_j\in S^+) \leq \pr(T_j\in S^-)$. 

Since the $T_j$ with $j \in \N$ are mutually independent, it follows that 
$\pr(V(t)\leq R_{\N}^-(t)) \geq 0.5$. Hence, $\pr(V(t)\leq R^-(t)) \geq 0.5$, so that  $\pr(V(t)\leq \tilde{V}(t)) \geq 0.5$.
\end{proof}

We end this section by showing that $\tilde{V}$ coincides with a closed testing procedure.  Firstly, for every $I\in \C$, define $\R_I(t)$ as in Section \ref{secet} and define
$$
\R_I^-(t):=   \{j \in I: \delta_j + t < |T_j| \leq 3\delta_j-t  \}, \quad  R_I^-(t) = |\R_I^-(t)|.
$$
Like before, for every $I\in \C$ we now  define
$$
    \lt_I= \mathbbm{1}\big\{R_I(t) >  R_I^-(t)\big\}.
$$
Then the local tests $\lt_I $ are valid at level $\alpha=0.5$ in the sense that if $H_I$ is true, then the type I error rate is at most $0.5$. This can be proved in the same way as we proved that $\pr(V(t)\leq R_{\N}^-(t)) \geq 0.5$ in the proof of Theorem \ref{thmetindep}.

 Similar to Proposition \ref{propctp2}, we now investigate which $I$ is in $\X$ for the closed testing procedure based on the local tests $\lt_I$.
Letting $S^+$ and $S^-$ be as in the proof of Theorem \ref{thmetindep}, note that 
$$I\in\X \Leftrightarrow$$
$$  R_J> R^-_J\text{ for all } I\subseteq  J\in \C \Leftrightarrow $$
$$   \big|\big\{j\in J: T_j\in S^+   \big\}\big|  \geq   \big|\big\{j\in J: T_j\in S^- \big\}\big|              \text{ for all } I\subseteq  J\in\C \Leftrightarrow $$
$$   \big|\big\{j\in I: T_j\in S^+\big\}\big|  \geq   \big|\big\{j\in \{1,...,m\}: T_j\in S^- \big\}\big| \Leftrightarrow$$
$$R_I \geq R^-.$$
As in Section  \ref{secvtisctp} we find that for the current closed testing procedure 
$$t_{\alpha}(\R(t)) = \tilde{V}.$$
Thus, $\tilde{V}(t)$ concides with the $50\%$-confidence bound from the closed testing procedure based on local tests $\lt_I$.
Moreover, for any other  $I\in \C$,
$$t_{\alpha}(I) = \max\{|J|: J \subseteq I \text{ and }   R_J\leq \tilde{V}  \}  = $$
$$ |I|\wedge (  | \{j\in I: |T_j| < \delta_j -t  \}| + \tilde{V}'   ) =  |I| \wedge (  |I\cap \R^c| + \tilde{V}'    ).$$
 Thus, as in Section \ref{secvtildepadm}, the  $50\%$-confidence bound for the number of true hypotheses in $I$, $|\N\cap I|$, is simply the bound  $\tilde{V} $ plus the number of elements in $I$ that are not in $\R$. Thus,  $\tilde{V}$ is equivalent to the closed testing procedure based on local tests $\lt_I$.

\section{Proofs of results} \label{appproofs}

\subsection{Proof of Theorem \ref{thmflip}}

\begin{proof}
The statements about the case $\theta=0$ follow directly from the general group invariance testing principle, see e.g. \citt{hoeffding1952large}, \citt[][Theorem 17.2.1]{lehmann2022testing} and  \citt{hemerik2018exact}. 
Now suppose  $\theta<0$.  Let $\theta_n$ denote the $n$-vector $(\theta,...,\theta)'$.
Define $X^* = X -  \theta_n$. The entries of $X^*$ are symmetric with mean $0$ and hence 
\begin{equation} \label{eq:rewr}
\mathbb{P}(T_{id}(X^*)>T^{(1-\alpha) }(X^*))\leq \alpha
\end{equation} 
 by e.g. \citt{hemerik2018exact}.
Now note that 
\begin{equation} \label{diffid}
T_{id}(X^*)    = T_{id}(X) - n^{-1/2} n\theta,
\end{equation}
where $-n^{-1/2} n\theta$ is positive.
Further, for any transformation $g\in \G$ we have 
$$   T_{g}(X^*)  \leq  T_{g}(X) - n^{-1/2} n\theta$$
The latter implies that
\begin{equation} \label{diffq}
T^{(1-\alpha) }(X^*)  \leq T^{(1-\alpha) }(X) -n^{-1/2} n\theta.  
\end{equation}
From expressions  \eqref{diffid} and  \eqref{diffq} and the fact that   $- n^{-1/2} n\theta>0$,   it follows that the following implication always holds:
$$  T_{id}(X)>T^{(1-\alpha) }(X)  \implies T_{id}(X^*)>T^{(1-\alpha) }(X^*).$$
Hence, $$\pr\Big\{\text{reject } H_0\Big\}= \pr\Big\{T_{id}(X)>T^{(1-\alpha) }(X)\Big\}  \leq \pr \Big\{T_{id}(X^*)>T^{(1-\alpha) }(X^*)\Big\} \leq \alpha.$$
by inequality \eqref{eq:rewr}. This finishes the proof.
\end{proof}

\subsection{Proof of Theorem  \ref{thm2gr}}

\begin{proof}
In the literature it is quite common to consider the point null hypothesis $H_0: \theta = 0$. As remarked, in that case the results directly follow from e.g. \citt{hemerik2018exact}. Now suppose  $\theta < 0$. We  define $X^*=\{(Z- \theta^1_n)' ,(Y-\theta^2_n)'\}'$, where $\theta^1_n$ and $\theta^2_n$ denote the $n$-vectors $(\theta^1,...,\theta^1)'$ and $(\theta^2,...,\theta^2)'$ respectively. Then the distribution of $X^*$ is permutation-invariant, so that by e.g. \citt{hemerik2018exact},
$$
\mathbb{P}_{H_0}(T_{id}(X^*)>T^{(1-\alpha) }(X^*))\leq \alpha.
$$
Now note that 
\begin{equation} \label{diffid2}
T_{id}(X^*)   =T_{id}(X) - n^{-1/2}( n \theta^1 - n \theta^2),
\end{equation}
where $- n^{-1/2}( n \theta^1 - n \theta^2)>0$ , since $\theta^1-\theta^2=\theta<0$.
Further, for any transformation $g\in \G$ we have 
$$T_{g}(X^*)  \leq T_{g}(X) - n^{-1/2}( n \theta^1 - n \theta^2).$$
The latter implies that
\begin{equation} \label{diffq2}
T^{(1-\alpha) }(X^*)   \leq T^{(1-\alpha) }(X) - n^{-1/2}( n \theta^1 - n \theta^2).
\end{equation}
From results  \eqref{diffid2} and  \eqref{diffq2} it follows that the following implication always holds:
$$  T_{id}(X)>T^{(1-\alpha) }(X)  \implies T_{id}(X^*)>T^{(1-\alpha) }(X^*).$$
The result now follows as at the end of the proof of Theorem \ref{thmflip}.

\end{proof}

\subsection{Proof of Theorem \ref{thmnewvalid}}

 \begin{proof}

 We have 
 \begin{align*}
  V(t) =&   |\{ j\in \N: T_j  > \delta_j+t \}  | \\
 \leq &  |\{ j\in \N: T_j  > \mu_j+t \}  | \\
  = & | \{j\in \N: T_j -\mu_j >  t \}  |=:N_1, 
  \end{align*}
  where we used that if $j\in \N$, then $\delta_j  \geq \mu_j$.

   Note that  
    \begin{align*}
    R^-(t) =& |\{1\leq j\leq m: T_j  < \delta_j -t\}  |\\
    \geq & |\{j\in \N: T_j  < \delta_j-t\}  | \\
    \geq & |\{j\in \N: T_j  -\mu_j  <   -t\}  | =:N_2,
    \end{align*}
    where we again used that if $j\in \N$ then $\delta_j  \geq \mu_j$.
 
   Thus, $V(t)\leq N_1$ and $R^- \geq N_2$.
   Consequently,  
   \begin{align*}
   &\mathbb{P}\big\{V(t)\leq   R^-(t)\big\} \geq\\
   &  \mathbb{P}\big\{N_1\leq N_2\big\} =\\
   &  \mathbb{ P} \Big\{      | j\in \N: T_j -\mu_j >  t \}  |  \leq     |\{j\in \N: T_j  -\mu_j <   -t\}  |     \Big\}. 
   \end{align*}   
  By Assumption  \ref{asssym}, the above equals
  $$  \mathbb{ P} \Big\{      | j\in \N: -(T_j -\mu_j) >  t \}  |  \leq     |\{j\in \N: -(T_j  -\mu_j) <   -t\}  |     \Big\} =$$
  $$  \mathbb{ P} \Big\{      | i\in \N: T_j -\mu_j < - t \}  |  \leq     |\{i\in \N: T_j  -\mu_j  >   t\}  |     \Big\} $$
  but this is exactly $\mathbb{P}\{N_2\leq N_1\}$. Thus  $\mathbb{P}(N_1\leq N_2) = \mathbb{P}(N_2\leq N_1)  $. Since the sum of these probabilities is at least 1, It follows that $\mathbb{P}(N_1\leq N_2)\geq 0.5$. 
  
  Note that since $V(t)\leq R(t)$, we have  $$V(t)\leq   \tilde{V}(t) \quad \Leftrightarrow  \quad V(t)\leq  R^-(t)   \wedge  R(t) \quad \Leftrightarrow \quad V(t)\leq   R^-(t).$$ 
Thus, $$\mathbb{P}\big\{V(t)\leq   \tilde{V}(t)\big\} = \mathbb{P}\big\{V(t)\leq   R^-(t)\big\} \geq  \mathbb{P}\big\{N_1\leq  N_2\big\}  \geq 0.5.$$ 
Note that $V(t)\leq   \tilde{V}(t)$ implies $FDP(t)\leq   \tilde{FDP}(t)$, so that we also have 
$$\mathbb{P}(FDP(t)\leq   \tilde{FDP}(t))\geq 0.5.$$
 \end{proof}

\subsection{Proof of Lemma  \ref{lemmaties}}

\begin{proof}
We have
\begin{align*}
V(t) =& |\{ j\in \N: T_j  > \delta_j+t \}  | \\
=& | \{j\in \N: T_j -\mu_j >  t \}  |
\end{align*}
  and 
  \begin{align*}
R^-(t) =& |\{1\leq j\leq m: T_j  < \delta_j -t\}  | \\
=&|\{j\in \N: T_j  < \delta_j-t\}  | \\
=& |\{j\in \N: T_j  -\mu_j  <   -t\}  |.
\end{align*}
    Thus,

$$ \mathbb{P}\{V(t)\leq   R^-(t)\} =$$
 $$ \mathbb{P}\Big\{| \{j\in \N: T_j -\mu_j >  t \}  |\leq    |\{j\in \N: T_j  -\mu_j  <   -t\}  | \Big\}.$$  
We now reason analogously to the proof of Theorem \ref{thmnewvalid}.   By Assumption  \ref{asssym}, the above equals
 $$ \mathbb{P}\big\{| \{j\in \N: -(T_j -\mu_j) >  t \}  |\leq    |\{j\in \N: -(T_j  -\mu_j)  <   -t\}  | \big\} =$$  
$$  \mathbb{ P} \Big\{      | i\in \N: T_j -\mu_j < - t \}  |  \leq     |\{i\in \N: T_j  -\mu_j  >   t\}  |     \Big\} $$
but this is exactly $\mathbb{P}\{ R^-(t) \leq V(t)\}$.
Hence, $\mathbb{P}\{V(t)\leq   R^-(t)\} =\mathbb{P}\{V(t)\leq   R^-(t)\} $, which means that these probabilities are at least $0.5$. Note that  $\mathbb{P}(E_1)+ \mathbb{P}(E_2) + \mathbb{P}(E_3) =1$ and $\mathbb{P}(E_1) = \mathbb{P}(E_3).$ 
Thus, $2\mathbb{P}(E_1)  + \mathbb{P}(E_2)=1$, so that $\mathbb{P}(E_1) =   0.5  - 0.5\mathbb{P}(E_2) )$.
\end{proof}

\subsection{Proof of Proposition  \ref{basicsvbarprime}}

\begin{proof}

\emph{Proof of first statement.}
Define the random variables $N_1$ and $N_2$ as in the proof of Theorem \ref{thmnewvalid}:

\begin{align*}
N_1=&| \{j\in \N: T_j -\mu_j >  t \}  |,\\
N_2=&|\{j\in \N: T_j  -\mu_j  <   -t\}  | .
\end{align*}

We have 
\begin{align*}
&\mathbb{P}\{V(t) \leq   \tilde{V}'(t) |   R(t) \neq R^-(t)\}  = \\
&\mathbb{P}\{V(t) \leq   \tilde{V}(t) |   R(t) \neq R^-(t)\}  = \\
&\mathbb{P}\{V(t) \leq   R^-(t) |   R(t) \neq R^-(t)\}.
\end{align*}

As in the proof of Theorem \ref{thmnewvalid}, we have $V(t)\leq N_1$ and $R^-(t)\geq N_2$.
 Hence the above is at least 
 \begin{align*}
& \mathbb{P}\{N_1 \leq   N_2 |   R(t) \neq R^-(t)\}  = \\
&\mathbb{P}\{N_2 \leq   N_1|   R(t) \neq R^-(t)\}  \geq \\
&\mathbb{P}\{R^-(t) \leq  V(t)|   R(t) \neq R^-(t)\}. 
\end{align*}
  Since 
  $$\mathbb{P}\Big\{V(t) \leq   R^-(t) \big|   R(t) \neq R^-(t)\Big\}  +  \mathbb{P}\Big\{R^-(t) \leq  V(t)\big|   R(t) \neq R^-(t)\Big\} \geq 1,$$
it follows that 
$$\mathbb{P}\big\{V(t) \leq   R^-(t) |   R(t) \neq R^-(t)\big\} \geq 0.5$$ 
and hence
$$\mathbb{P}\big\{V(t) \leq   \tilde{V}'(t) |   R(t) \neq R^-(t)\big\}  \geq 0.5.$$
Further, 
$$
\mathbb{P}\big\{V(t) \leq   \tilde{V}'(t) |   R(t) = R^-(t)\big\}    = \mathbb{P}\big\{d=1 |   R(t) = R^-(t)\big\}=0.5.
$$
From the law of total probability it follows that $\mathbb{P}\{V(t) \leq   \tilde{V}'(t) \} \geq 0.5.$
\\
\\
\emph{Proof of second statement.}
Suppose that for all $1\leq j \leq m$ we have $\mu_j=\delta$. Then $V(t)=R(t)$.
Define $E_1$, $E_2$ and $E_3$ as in Lemma \ref{lemmaties}.
Note that $$\mathbb{P}\{V(t)  >  \tilde{V}'(t)\} =$$
$$    \mathbb{P}\Big\{V(t)  >  \tilde{V}'(t) \text{ and } V(t) \neq  R^-(t)   \Big \}  +     \mathbb{P}\Big\{V(t)  >  \tilde{V}'(t) \text{ and } V(t) =  R^-(t)    \Big\}.$$
Since $R(t)=V(t)$, the above equals 
 \begin{align*}
&\mathbb{P}\Big\{V(t)  >  \tilde{V}'(t) \text{ and } R(t) \neq  R^-(t)   \Big \}   +     \mathbb{P}\Big\{V(t)  >  \tilde{V}'(t) \text{ and } R(t) =  R^-(t)    \Big\}= \\
 & \mathbb{P}\Big\{V(t)  >  \tilde{V}(t) \text{ and } R(t) \neq  R^-(t)   \Big \}   +     \mathbb{P}\Big\{ R(t) =  R^-(t)  \text{ and }   d=-\infty  \Big\}= \\
 &\mathbb{P}\Big\{V(t)  >  R^-(t) \text{ and } R(t) \neq  R^-(t)   \Big \}   +     0.5 \mathbb{P}\{R(t) =  R^-(t)\} =
\end{align*}
\begin{equation} \label{eqrmin}
 \mathbb{P}\{V(t) >  R^-(t)\} + 0.5 \mathbb{P}\{V(t) =  R^-(t)\},
\end{equation}
where we used that $V(t)=R(t)$.
Recall that $\{V(t) >  R^-(t)\}=E_3$ and   $\mathbb{P}\{V(t) =  R^-(t)\}=E_2.$

Now, by   Lemma \ref{lemmaties} we have $\mathbb{P}(E_1) = \mathbb{P}(E_3)$, so that $\mathbb{P}(E_3) = 0.5\mathbb{P}(E_1) +0.5\mathbb{P}(E_3)$.
Thus,   \eqref{eqrmin} equals 
 \begin{align*}
& \mathbb{P}(E_3)+0.5\mathbb{P}(E_2) = \\
& 0.5\mathbb{P}(E_1) +0.5\mathbb{P}(E_3) +0.5\mathbb{P}(E_2)= \\
&0.5(\mathbb{P}(E_1) +\mathbb{P}(E_2) +\mathbb{P}(E_3) )=0.5\cdot 1=0.5,
\end{align*}
 as was to be shown.

\end{proof}

\subsection{Proof of Proposition \ref{ltval}}

\begin{proof}
For any true intersection hypothesis $H_I$ with $I\in\C$,
$$\mathbb{P}(\lt_I=1) = \mathbb{P}(R_I>R^-_I)=$$
$$  \mathbb{P}\Big(|\{j\in I: T_j >\delta_j +t   \}|>   |\{j\in I: T_j <\delta_j -t   \}|\Big).$$
Note that for every $j\in I$,  $\mu_j\leq \delta_j$, since $H_i$ is true. Hence the above is at most
$$  \mathbb{P}\Big(|\{j\in I: T_j -\mu_j > t   \}|>   |\{j\in I: T_j -\mu_j< -t   \}|\Big).$$
Similarly to other proofs, by symmetry of $(T_j-\mu_j:j\in I)$ the above is at most $0.5=\alpha$.
This finishes the proof.
\end{proof}

\subsection{Proof of Proposition  \ref{whichinX}}

\begin{proof}
For $I\in \C$ we have 
 \begin{align*}
 I\in\X \Leftrightarrow &   \lt_J=1 \text{ for all } I\subseteq  J\in\C  \\
 \Leftrightarrow  & R_J>R^-_J\text{ for all } I\subseteq  J\in\C  \\
\Leftrightarrow  & |\{j\in J: T_j>\delta_j+t\}|  >   |\{j\in J: T_j<\delta_j-t\}|              \text{ for all } I\subseteq  J\in \C  \\
\Leftrightarrow  & |\{i\in I: T_j>\delta_j+t\}|  >   |\{j\in \{1,...,m\}: T_j<\delta_j-t\}|            \\
\Leftrightarrow  & R_I > R^-.
\end{align*}
Thus, $I\in\X$ if and only if  $R_I>R^-$.
\end{proof}

\subsection{Proof of Lemma  \ref{exhaust}}

\begin{proof}
Let $I\in\C$. We must give an example of a distribution of  $(T_1,...,T_m)$  satisfying Assumption \ref{asssym} for which $\mathbb{P}(\delta'_I=1)=0.5$. In fact, it holds for any distribution  that satisfies Assumption \ref{asssym} and in addition satisfies $\mu_j=\delta_j$ for all $1\leq j \leq m$.
Indeed, then   $(T_j-\delta_j: j\in I) \,{\buildrel d \over =}\,    -(T_j-\delta_j: j\in I) $ and hence $\pr(R_I(t) >  R_I^-(t))=\pr(R_I(t) <  R_I^-(t)) $. Further, $\pr(b=1)=0.5$. Consequently,  $\mathbb{P}(\lt'_I=1)=0.5$.
 \end{proof}

\subsection{Proof of  Lemma  \ref{ltIadm}}

\begin{proof}
Suppose such a $\phi^*_I$ exists and pick one. Since  $\mathbb{P}( \phi^*_I >\lt'_I )>0$, we can choose a subset $A\subset \mathbb{R}^n$ with strictly positive Lebesgue measure such that if $(t_1,...,t_m)\in A$, then  $\phi^*_I((t_1,...,t_m))=1$ and   $\lt'_I((t_1,...,t_m))=0$.
Since the support of $(T_1,...,T_m)$ is generally $\mathbb{R}^m$, it follows that generally 
 $\mathbb{P}(\lt'_I=1)<  \mathbb{P}(\phi^*_I=1)$, so that  if $H_I$ is true, then  $\mathbb{P}(\lt'_I=1)<0.5$ . However, by Lemma  \ref{exhaust} this cannot generally hold. Thus, there is no such test $\phi^*_I$, as was to be shown.
 \end{proof}

\subsection{Proof of Proposition  \ref{propctp2}}

\begin{proof}
Let $I\in \C$.
Note that for $J\in \C$, $\lt_J'=1$ means that either $ R_J>R^-_J$ or $\{R_J=R^-_J\}\cap\{b=1\}$ holds. 
Suppose that $b=1$. Then  $\lt_J=1$ if and only if $R_J\geq R^-_J$. Hence, then 
$$I\in\X \Leftrightarrow R_I \geq R^-$$
exactly as in the proof of Proposition \ref{whichinX}.
Now suppose that $b=0$. Then it analogously follows that $I\in\X \Leftrightarrow R_I > R^-$.

We see that in general $ \mathbbm{1}(I\in\X) = \mathbbm{1}(R_I > R^-) + b  \mathbbm{1}(R_I = R^-)$. Consider the following events, whose union has probability 1.

\begin{itemize}
\item \emph{Event 1:} $R_I=R_I^-$ and $b=1$. Then $\tilde{V}' =d=-\infty< R_I$.

\item \emph{Event 2:} $R_I > R^-$. Then $ R^-< R_I  \leq R$ and hence  $\tilde{V}'=R \wedge R^-= R^- <R_I$.
\item \emph{Event 3:} $R_I < R^-$  and $\{R_I\neq R_I^-\}\cup\{b=0\}$ and $R\leq R^-$. Then $\tilde{V}'= R \wedge R^-=R\geq R_I$.

\item \emph{Event 4:} $R_I < R^-$  and $\{R_I\neq R_I^-\}\cup\{b=0\}$ and $R>  R^-$. Then $\tilde{V}'= R \wedge R^-=R^->R_I.$

\item  \emph{Event 5:} $R_I = R^-$ and  $\{R_I\neq R_I^-\}\cup\{b=0\}$. Then $\tilde{V}' = \tilde{V} =R \wedge R^- =R^-=R_I$.

\end{itemize}
Consequently, $R_I > \tilde{V'}$ if and only if event 1 or 2 happens. 
 We conclude that 
$$\mathbbm{1}(I\in\X) =   \mathbbm{1}(R_I > R^-) + b  \mathbbm{1}(R_I = R^-) =\mathbbm{1}(R_I > \tilde{V}'),$$ 
so that we are done.

\end{proof}

\subsection{Proof of Theorem  \ref{thmadm}}

\begin{proof}
As already shown in the main text, the closed testing procedure based on the local tests $\lt'_I$, $I\in \C$, is admissible in the sense of  \citt{goeman2021only}.
Now suppose  $\tilde{V}'$ is not admissible. Choose a procedure  producing bounds $\tilde{V}^*(t)$ with the three mentioned properties stated in the theorem. Further, pick a distribution of $(T_1,...,T_m)$  satisfying Assumption \ref{asssym} such that  $\mathbb{P}\{ \tilde{V}'(t) >\tilde{V}^*(t)\}>0$. We discussed in the main text that $\tilde{V}'(t)$ is equivalent to an admissible closed testing procedure, namely the procedure that provides the upper bound $|I| \wedge (  |I\cap \R^c| + \tilde{V}'    )$ for every $I\in \C$. Note that $\tilde{V}^*(t)$ is equivalent to a procedure  that is uniformly better, namely the procedure that provides the upper bound $|I| \wedge (  |I\cap \R^c| + \tilde{V}^*    )$ for  every $I\in \C$. This is a contradiction with the admissibility of the former procedure. This finishes the proof.
\end{proof}

\subsection{Proof of Theorem  \ref{boundetvalid}}
 
 \begin{proof}
 Write 
 \begin{align*}
  \N^- =&\{j\in \N: \mu_j\leq -\delta_j\},\\
 \N^+ =&\{j\in \N: \mu_j\geq\delta_j\}.
 \end{align*}
 Note that $\N= \N^-\cup\N^+$.
 
 We have $V(t) = $
 \begin{align*}
 | j\in \N: |T_j|  & < \de_j-t \}  | =\\
 \Big|\{  j\in \N^-: |T_j|  < \de_j-t\}  & \cup  \{  j\in \N^+: |T_j|  < \de_j-t\}  \Big| \leq\\
 \Big|\{  j\in \N^-: T_j  > -(\de_j-t)\}  & \cup  \{  j\in \N^+: T_j  < \de_j-t\}  \Big|  \leq\\
 \Big |\{  j\in \N^-: T_j -  \mu_j >  t\} &  \cup  \{  j\in \N^+: T_j -  \mu_j < - t\}\Big|=:Z_1,  
 \end{align*}
where the last step follows from the fact that  if $j \in \N^-$, then $\mu_j \leq -\delta_j$ so $-\mu_j \geq \delta_j$, and if  $ j\in \N^+$, then $\mu_j\geq \delta_j$ so $-\mu_j\leq -\delta_j$.
 
 We have  $\tilde{V}(t) =$
  \begin{align*}
|\{1\leq j \leq m: |T_j|  & > \de_j+t\}  | \geq\\
 |\{j\in \N: |T_j|  & \geq \de_j+t\}  |  =\\
\Big|   \{i\in \N: T_j < -\de_j-t\}  & \cup    \{j\in \N: T_j > \de_j+t\}       \Big|   \geq\\
\Big| \{i\in \N^-: T_j < -\de_j -t \}  & \cup   \{j\in \N^+: T_j >\de_j+t\}    \Big|  \geq\\
\Big|   \{i\in \N^-: T_j   -  \mu_j < -t \} & \cup  \{j\in \N^+: T_j -  \mu_j  > t \}      \Big|  =:Z_2  ,
 \end{align*}
    where the last inequality follows from the fact that if $j\in \N^-$ then $\mu_j\leq -\de_j$ so $-\mu_j\geq \de_j$ and  if $j\in \N^+$ then $\mu_j\geq \de_j$ so $-\mu_j\leq -\delta_j$.

 We consequently have  
   \begin{align*}
&\mathbb{P}(V(t)\leq   \tilde{V}(t)) \geq  \mathbb{P}(Z_1\leq Z_2)=\\
&\mathbb{ P} \Big\{\big|\{  j\in \N^-: T_j-  \mu_j >  t\}  \cup  \{  i\in \N^+: T_j -  \mu_j < - t\}\big| \leq \\
&\big|   \{i\in \N^-: T_j   -  \mu_j< -t \}  \cup   \{j\in \N^+: T_j -  \mu_j > t \}   \big| \Big\}.
 \end{align*}
By Assumption \ref{asssym}, the above equals
   \begin{align*}
&\mathbb{ P} \Big\{\big |\{  j\in \N^-: T_j -  \mu_j <-  t\}  \cup  \{  i\in \N^+: T_j -  \mu_j >  t\}\big| \leq\\
& \big| \{j\in \N^-: T_j   -  \mu_j > t \}  \cup   \{j\in \N^+: T_j -  \mu_j  <- t \} \big|   \Big\}, 
 \end{align*}
 but this is $\mathbb{P}(Z_2\leq Z_1)$. Thus  $\mathbb{P}(Z_1\leq Z_2) = \mathbb{P}(Z_2\leq Z_1)  $. Since the sum of these probabilities is at least 1, It follows that $\mathbb{P}(Z_1\leq Z_2)\geq 0.5$. Consequently, $\mathbb{P}(V(t)\leq   \tilde{V}(t)) \geq 0.5$, as was to be shown.
 \end{proof}

\subsection{Proof of Proposition  \ref{whichrejectet}}

\begin{proof}
For $I\in \C$ we have 
\begin{align*}
I\in\X \Leftrightarrow &  \lt'_J=1 \text{ for all } I\subseteq  J\in\C \\
\Leftrightarrow & R_J>R^-_J\text{ for all } I\subseteq  J\in \C \\
\Leftrightarrow & |\{j\in J: |T_j|<\delta_j-t\}|  >   |\{j\in J: |T_j|>\delta_j + t\}|         \text{ for all } I\subseteq  J\in \C. 
\end{align*}
Note that the above holds if and only if 
$R_I > R^- $.
 Indeed, to show the first implication, suppose the above holds. Let $K = \{1\leq j\leq m: |T_j|>\delta_j + t\}$.
Then 
$$R_I = R_{I\cup K} > R_{I\cup K}^- = R_{\{1,...,m\}}^-=R^-.$$
To show the second implication, note that  if $R_I > R^-$, then for all $I\subseteq  J\in \C$, we have  $$R_J\geq R_I > R^-\geq R^-_J.$$
We conclude that $I\in\X$ if and only if  $R_I>R^-$.
\end{proof}

\subsection{Proof of Theorem  \ref{thmcontrol}}

\begin{proof}
We first provide a proof for the case of one-sided testing  (Section \ref{secnovel}). We then give the proof for the case of equivalence testing (Section \ref{secet}), which is analogous.
\\
\\
\noindent \textbf{Proof for the case of one-sided testing}\\
We first define some quantities. Then we give an overview of the three steps of the proof. Finally, we prove the three steps in detail.
For $t\geq 0$, let
\begin{align*}
&A^-(t) =|\{j\in\N^c: T_j<\delta_j-t\}|,\\
&A^+(t) =|\{j\in\N^c: T_j>\delta_j+t\}|,\\
&N^-(t) =|\{j\in\N: T_j<\delta_j-t\}|,\\
&N^+(t) =|\{j\in\N: T_j>\delta_j+t\}|,
\end{align*}
where in fact $N^+=V$. Further, note that $R= N^+ + A^+$ and $R^- =N^- + A^- $.
Observe  that
$$FDP=\frac{N^+}{(N^+ + A^+)\vee 1}.$$
Define 
$$\tilde{FDP}^* := \frac{N^- }{(N^+ + A^+) \vee 1}\wedge 1  \leq  \frac{N^- + A^-}{(N^+ + A^+)\vee 1}\wedge 1 =\tilde{FDP}.$$
Note that 
$$FDP> \gamma \Leftrightarrow N^+ > \gamma({N^+ + A^+}) \Leftrightarrow $$
$$ N^+ -\gamma N^+ > \gamma A^+ .$$  %\Leftrightarrow  N^+ \leq   \frac{\gamma}{1-\gamma} A^+.$$
Likewise,
$$  \tilde{FDP}^*  > \gamma   \Leftrightarrow            N^- > \gamma(N^+ + A^+)    \Leftrightarrow$$
$$   N^- -\gamma N^+ > \gamma A^+.$$
Let $$t_1 = \sup\big\{t\geq 0:  FDP(t)>\gamma\} = \sup\{t\geq 0: N^+(t ) - \gamma N^+(t )  >   \gamma A^+(t) \big\},$$
 $$t_2 = \sup\big\{t\geq 0:  \tilde{FDP}^*(t)>\gamma\} = \sup\{t\geq 0:   N^-(t) -\gamma N^+(t) > \gamma A^+(t)  \big\},$$
 $$t_3 = \sup\big\{t\geq 0: N^-(t ) - \gamma N^-(t )  >   \gamma A^+(t) \big\}.$$
\\
\\
\emph{Overview of the proof.}
In Step 1 of this proof, we show that  $\pr(t_1\leq t_3)\geq 0.5$. 
In Step 2, we show that  $\pr(t_1\leq t_2|t_1\leq t_3 )=1$, so that $\pr(t_1\leq t_2)\geq 0.5$. In Step 3, we show that $\pr(\forall t\geq s^+: FDP(t)\leq \gamma|t_1\leq t_2) =1$.
It thus follows that $\pr(\forall t\geq s^+: FDP(t)\leq \gamma)\geq 0.5$, as we want.
\\
\\
\emph{Step 1.}
Note that 
$$\pr(t_1\leq t_3) = $$
$$\pr\Big[\sup\big\{t\geq 0: (1- \gamma) N^+(t )  >   \gamma A^+(t) \big\}  \leq \sup\big\{t\geq 0:(1- \gamma)N^-(t )  >   \gamma A^+(t) \big\}\Big] =$$
\begin{align*}
\pr\Big[&\sup\big\{t\geq 0: (1- \gamma) |\{j\in\N: T_j>\delta_j+t\}|  >   \gamma A^+(t) \big\}  \leq \\
&\sup\big\{t\geq 0:(1- \gamma)|\{j\in\N: T_j<\delta_j-t\}|  >   \gamma A^+(t) \big\}\Big]
\end{align*}

For all $j\in \N$, $\mu_j\leq \delta_j$. Hence, replacing $\delta_j$ by $\mu_j$ makes $|\{j\in\N: T_j>\delta_j+t\}|$  larger and $|\{j\in\N: T_j<\delta_j-t\}|$ smaller, so that the above probability becomes smaller. Thus, the above is larger than or equal to 
\begin{equation} \label{eqsymmandindep1}
\begin{aligned}
\pr\Big[&\sup\big\{t\geq 0: (1- \gamma) |\{j\in\N: T_j>\mu_j+t\}|  >   \gamma A^+(t) \big\}  \leq \\
&\sup\big\{t\geq 0:(1- \gamma)|\{j\in\N: T_j<\mu_j-t\}|  >   \gamma A^+(t) \big\}\Big]
\end{aligned}
\end{equation}
By Assumption \ref{asssym} and by independence of $(T_i:i\in \N)$ from $(T_i:i\in \N^c)$, the above equals
\begin{equation} \label{eqsymmandindep2}
\begin{aligned}
\pr\Big[&\sup\big\{t\geq 0: (1- \gamma) |\{j\in\N: -T_j>-\mu_j+t\}|  >   \gamma A^+(t) \big\}  \leq \\
&\sup\big\{t\geq 0:(1- \gamma)|\{j\in\N: -T_j<-\mu_j-t\}|  >   \gamma A^+(t) \big\}\Big] 
\end{aligned}
\end{equation}
$$= $$
\begin{align*}
\pr\Big[&\sup\big\{t\geq 0: (1- \gamma) |\{j\in\N: T_j<\mu_j-t\}|  >   \gamma A^+(t) \big\}  \leq \\
&\sup\big\{t\geq 0:(1- \gamma)|\{j\in\N: T_j>\mu_j+t\}|  >   \gamma A^+(t) \big\}\Big] 
\end{align*}
$$\geq $$
\begin{align*}
\pr\Big[&\sup\big\{t\geq 0: (1- \gamma) |\{j\in\N: T_j<\de_j-t\}|  >   \gamma A^+(t) \big\}  \leq \\
&\sup\big\{t\geq 0:(1- \gamma)|\{j\in\N: T_j>\de_j+t\}|  >   \gamma A^+(t) \big\}\Big] 
\end{align*}
$$=$$
$$ \pr\Big[\sup\big\{t\geq 0: (1- \gamma) N^-(t)  >   \gamma A^+(t) \big\}  \leq \sup\big\{t\geq 0:(1- \gamma)N^+(t) >   \gamma A^+(t) \big\}\Big]= $$
$$ \pr(t_3\leq t_1).$$
Thus,  $\pr(t_1\leq t_3) \geq   \pr(t_3\leq t_1)$. We conclude that  $\pr(t_1\leq t_3)\geq 0.5$.
\\
\\
\emph{Step 2.} %PART 1: TO SHOW $\pr(t_1\leq t_2)\geq 0.5$.
Let $E$ be the event that $t_1\leq t_3$. Suppose $E$ happens. Note that either $t_3=-\infty$ or $t_3>0$. In the former case, we also have $t_1=-\infty$ and hence $t_1\leq t_2$.
Let $E'=E\cap\{t_3>0\}$.

Note that at $t_3$, the function $N^-(\cdot)$ has a jump downwards.
Since with probability one $|T_1-\delta_1|,...,|T_m-\delta_m|$ are distinct,  with probability 1, $N^+(\cdot)$ has no discontinuity at $t_3$.
Hence, $\pr(t_1<t_3|E')=1.$

Thus, under $E'$, with probability 1, 
%there is (BEKIJK OOK CASE DAT T1 0 IS) we have 
\begin{equation} \label{eqinter1}
\lim_{\epsilon\downarrow 0}\Big(N^+(t_3-\epsilon ) - \gamma N^+(t_3-\epsilon )\Big) \leq \gamma A^+(t_3) < \lim_{\epsilon\downarrow 0}\Big(N^-(t_3-\epsilon ) - \gamma N^-(t_3-\epsilon )\Big).
  \end{equation}
But that means that under $E'$, with  probability 1, 
 $$\lim_{\epsilon\downarrow 0}N^+(t_3-\epsilon )< \lim_{\epsilon\downarrow 0}N^-(t_3-\epsilon ),$$
so that  
\begin{equation} \label{eqinter2}
 - \lim_{\epsilon\downarrow 0} \gamma N^+(t_3-\epsilon )\geq - \lim_{\epsilon\downarrow 0} \gamma N^-(t_3-\epsilon ).
   \end{equation}
Combining     \eqref{eqinter2} with the last inequality of \eqref{eqinter1}, we find that
 under $E'$, with  probability 1, 
\begin{equation} \label{eqinter3}
   \lim_{\epsilon\downarrow 0}(N^-(t_3-\epsilon ) - \gamma N^+(t_3-\epsilon )) \geq   \lim_{\epsilon\downarrow 0}(N^-(t_3-\epsilon ) - \gamma N^-(t_3-\epsilon )) >   \gamma A^+(t_3).
   \end{equation}

%$$ \gamma A^+(t_3) < \lim_{\epsilon\downarrow 0}(N^-(t_3-\epsilon ) - \gamma N^+(t_3-\epsilon )). $$
Under $E'$, with probability 1 there is a neighbourhood of $t_3$ where $A^+(\cdot)$ does not have a discontinuity.
Hence under $E'$ with probability 1 there is an $\eta>0$ such that for all $0<\epsilon<\eta$,
$$N^-(t_3-\epsilon ) - \gamma N^+(t_3-\epsilon )> \gamma A^+(t_3-\epsilon).$$
Hence, under $E'$ with probability 1,  $t_2\geq t_3$ and hence 
$t_1<t_3\leq t_2$. 
We conclude that  
$$\pr\big(t_1\leq  t_2\big)\geq  \pr(E')\pr\big(t_1\leq  t_2|E'\big) +  \pr\big(E\setminus E'\big)\pr\big(t_1\leq t_2|E\setminus E'\big) \geq$$
$$   \pr\big(E'\big)\cdot 1  + \pr\big(E\setminus E'\big) \geq  0.5,$$
so that Step 2 is completed.
\\
\\
\emph{Step 3.} % PART 2
Suppose $t_1\leq  t_2$. 
If $t_1=-\infty$, then $FDP(t)\leq \gamma$ for all $t\geq 0$ and hence $FDP(t)\leq \gamma$ for all $t\geq s^+$.
Now suppose $t_1\geq 0$. 
Then, with conditional probability 1, $t_1< t_2$, since with probability 1 the discontinuities of $N^+(\cdot)$ and $N^-(\cdot)$ do not overlap. 
Then, we have 
$$0\leq t_1=  \sup\{t\geq 0:  FDP(t)>\gamma\}   < t_2 = $$
$$ \sup\{t\geq 0:  \tilde{FDP}^*(t)>\gamma\} \leq  \sup\{t\geq 0:  \tilde{FDP}(t)>\gamma\} = s^+$$
by construction of $s^+$.

Thus, $$\sup\{t\geq 0:  FDP(t)>\gamma\} < s^+,  $$
so that $\forall t\geq s^+: FDP(t)\leq \gamma.$

Thus, $\pr(\forall t\geq s^+: FDP(t)\leq \gamma |  t_1\leq t_2)=1$, where $\pr(t_1\leq t_2)\geq 0.5$. We conclude that 
$\pr(\forall t\geq s^+: FDP(t)\leq \gamma)\geq 0.5$, as was to be shown.
\\
\\
\noindent \textbf{Proof for the case of equivalence testing}\\
The proof is essentially the same, except that for $t\geq 0$ we now define
\begin{align*}
&A^-(t) =|\{j\in\N^c: |T_j|>\delta_j+t\}|,\\
&A^+(t) =|\{j\in\N^c: |T_j|<\delta_j-t\}|,\\
&N^-(t) =|\{j\in\N: |T_j|>\delta_j+t\}|,\\
&N^+(t) =|\{j\in\N: |T_j|<\delta_j-t\}|
\end{align*}
Note that $A^+(t)$ and $N^+(t)$ are simply 0 if $t\geq\max\{\delta_j:1\leq j \leq m\}$.
Define $\tilde{FDP}^*(\cdot)$, $t_1$, $t_2$ and $t_3$ as before, but  with these new definitions of $A^-(t),...,N^+(t)$. 

As above, it follows from Assumption \ref{asssym} and independence of $(T_i:i\in \N)$ from $(T_i:i\in \N^c)$ that $\pr(t_1\leq t_3)\geq 0.5$, which finishes Step 1. Steps 2 and 3 are the same as above. This finishes the proof.

\end{proof}

\subsection{Proof of Theorem \ref{thmasymptcontrol}}

\begin{proof}
Assume that at least one hypothesis is false, since otherwise the result follows from Theorem \ref{thmcontrol}. 
%First consider the case of directional testing (Section \ref{secnovel}).
Note that in the proof of Theorem \ref{thmcontrol}, we only used the independence of $(T_i:i\in \N)$ from $(T_i:i\in \N^c)$ once, namely in the step where we note that the probabilities   \eqref{eqsymmandindep1} and \eqref{eqsymmandindep2} are equal.
We used this independence to guarantee that the test statistics $(T_i:i\in \N)$ are independent from $A^+(t)$. In the present theorem, we do not assume independence of $(T_i:i\in \N)$ from $(T_i:i\in \N^c)$. 
However, consider the step where we showed that  the probabilities   \eqref{eqsymmandindep1} and \eqref{eqsymmandindep2} are equal. Here we will argue that they are equal with probability converging to 1.

We reason as follows.
We know that with probability converging to 1,  for every $t$ for which  $|\{j\in \N: T_j > \mu_j+t\}|>0$ or $|\{j\in \N: T_j < \mu_j-t\}|>0$, the following holds:
$$ \forall j\in \N^c: T_j>\delta_j +t.$$
Consider the event ${E}$, say, that that holds for all such $t$. Then for all $t$ for which  $|\{j\in \N: T_j > \mu_j+t\}|>0$ or $|\{j\in \N: T_j < \mu_j-t\}|>0$, we have $A^+(t)=|\N^c|$.
Consequently, under ${E}$,
%\begin{equation} \label{eq:pro1}
\begin{equation}
\begin{aligned}
\pr\Big[&\sup\big\{t\geq 0: (1- \gamma) |\{j\in\N: T_j>\mu_j+t\}|  >   \gamma A^+(t) \big\}  \leq \\
&\sup\big\{t\geq 0:(1- \gamma)|\{j\in\N: T_j<\mu_j-t\}|  >   \gamma A^+(t) \big\}\Big]
\end{aligned}
\end{equation}
$$=$$
%\end{equation}
\begin{align*}
\pr\Big[&\sup\big\{t\geq 0: (1- \gamma) |\{j\in\N: T_j>\mu_j+t\}|  >   \gamma |\N^c| \big\}  \leq \\
&\sup\big\{t\geq 0:(1- \gamma)|\{j\in\N: T_j<\mu_j-t\}|  >   \gamma |\N^c| \big\}\Big].
\end{align*}
Since $\pr({E})$ converges to 1 as $n\rightarrow\infty$, the probability that the above holds converges to 1.
By Assumption \ref{asssym} the above  equals
\begin{align*}
\pr\Big[&\sup\big\{t\geq 0: (1- \gamma) |\{j\in\N: -T_j>-\mu_j+t\}|  >   \gamma |\N^c| \big\}  \leq \\
&\sup\big\{t\geq 0:(1- \gamma)|\{j\in\N: -T_j<-\mu_j-t\}|  >   \gamma |\N^c| \big\}\Big].
\end{align*}
With probability converging to 1 again, the above equals
%\begin{equation} \label{eq:pro2}
\begin{equation}
\begin{aligned}
\pr\Big[&\sup\big\{t\geq 0: (1- \gamma) |\{j\in\N: -T_j>-\mu_j+t\}|  >   \gamma A^+(t) \big\}  \leq \\
&\sup\big\{t\geq 0:(1- \gamma)|\{j\in\N: -T_j<-\mu_j-t\}|  >   \gamma A^+(t) \big\}\Big].
\end{aligned}
\end{equation}
%\end{equation}
This, with probability converging to 1,  \eqref{eqsymmandindep1} and \eqref{eqsymmandindep2} are equal in the present setting.
Thus, we can obtain the same result as in Step 1 of the proof of Theorem \ref{thmcontrol}, except that it holds asymptotically:
$$\liminf_{n\rightarrow\infty}\{\pr(t_1\leq t_3) -\pr(t_3\leq t_1)    \}\geq 0$$
and hence 
$$\liminf_{n\rightarrow\infty}\pr(t_1\leq t_3) \geq 0.5.$$

By Step 2 of the  proof of Theorem \ref{thmcontrol}, we have $\pr(t_1\leq t_2|t_1\leq t_3)=1$ for every $n$, so that 
$\liminf_{n\rightarrow\infty}\pr(t_1\leq t_2)\geq 0.5.$ By Step 3 of the proof of Theorem \ref{thmcontrol}, we have $\pr(\forall t\geq s^+: FDP(t)\leq \gamma|t_1\leq t_2)=1$ for every $n$, so that it follows that $\liminf_{n\rightarrow\infty}\pr(\forall t\geq s^+: FDP(t)\leq \gamma)\geq 0.5$, as we wanted to show.

\end{proof}

\subsection{Proof of Theorem \ref{pvsgood}}

\begin{proof}
Suppose that for all $j\in \N$, $\mu_j=\delta_j$. Then the test statistics satisfy 
$$(T_j: j\in \N)   \,{\buildrel d \over =}\,  -(T_j: j\in \N)$$
and the $\pvs$ defined in the theorem then satisfy 
$$(P_j: j\in\N) \,{\buildrel d \over =}\,   (1-P_j: j\in\N).$$

However, we only know that $\mu_j\leq \delta_j$. But this means that $(T_j: j\in \N)$ is deterministically smaller than or equal to a vector $(T_1^*,...,T^*_N)$ satisfying
$$(T_1^*,...,T_N^*)   \,{\buildrel d \over =}\,  -(T_1^*,...,T_N^*). $$ 
Here, by ``smaller'', we mean that every entry of the former  is smaller than or equal to the corresponding entry of the latter.
Consequently, $(P_j: j\in\N)$ is  larger than or equal to a vector satisfying expression \eqref{sympvs}. 
By \citt{hemerik2024flexible}, if the $\pvs$ used satisfy expression \eqref{sympvs}, the methods in that paper are valid. Using larger $\pvs$ decreases $V(t)$ and increases the upper bounds from   \citt{hemerik2024flexible}. Consequently, the $\pvs$ $P_1,...,P_m$ can also validly be used in the methods from \citt{hemerik2024flexible}. This finishes the proof.
\end{proof}

\subsection{Proof of Theorem  \ref{pvsgoodet}}

\begin{proof}
For every $j\in \N$, define
$$
P_j' =
\begin{cases}
\int_{T_j - \mu_j }^{\infty } f_j(x) dx     & \text{if } \mu_j<0 \\
   \int_{-\infty}^{T_j-\mu_j} f_j(x) dx     & \text{if } \mu_j\geq0.
\end{cases}
$$ 
The numbers $P_j'$ are unknown in practice. However, since both $T_j-\mu_j$ and $f_j$ are symmetric about 0, the   $P_j'$ are jointly symmetric about 0.5.  For example, if $\mu_j<0$, we have 
$$P_j'=\int_{T_j - \mu_j }^{\infty } f_j(x) dx   \,{\buildrel d \over =}\,   \int_{-(T_j - \mu_j) }^{\infty } f_j(x) dx  = $$ 
$$   1 -   \int_{-\infty  }^{-(T_j - \mu_j) } f_j(x) dx  = 1- \int_{T_j - \mu_j }^{\infty } f_j(x) dx=1- P_j'$$
and more generally
$$  (P_j':j\in \N )       \,{\buildrel d \over =}\,       (1-P_j':j\in \N ).$$

Now, for every $j\in \N$ define
$$
P_j^* =
\begin{cases}
\int_{T_j +\delta_j }^{\infty } f_j(x) dx     & \text{if } \mu_j<0 \\
   \int_{-\infty}^{T_j-\delta_j} f_j(x) dx     & \text{if } \mu_j\geq0.
\end{cases}
$$ 
For every $j\in \N$ we have   $|\mu_j|\geq \delta_j$  and hence $T_j +\delta_j \leq T_j - \mu_j $ if $\mu_j<0$ and  $T_j - \delta_j \geq T_j -\mu_j $ if $\mu_j\geq 0$, so that $P_j^*\geq P_j'$.
%Hence, $(P_j^*:j\in \N )$ is stochastically larger than or equal to  $(P_j':j\in \N )$.

Let $j\in \N$. We distinguish three cases. 

\emph{Case 1: $T_j$ and $\mu_j$ have the same sign.} Then we immediately have $P_j = P_j^*$. 

\emph{Case 2: $\mu_j\geq 0$ and $T_j<0$.} 
Then 
$$P_j - P_j^* =  \int_{T_j+\delta_j }^{\infty } f_j(x) dx    -   \int_{-\infty}^{T_j-\delta_j} f_j(x) dx.$$
The function $f_j$ was assumed to be symmetric about 0, so that the above equals
$$\int_{T_j+\delta_j }^{\infty } f_j(x) dx    -   \int_{-T_j+\delta_j  }^{ \infty} f_j(x) dx.$$
Since $T_j<-T_j$, the above is at least $0$, so that $P_j \geq P_j^*$.

\emph{Case 3: $\mu_j< 0$ and $T_j\geq 0$.} 
In this case, we likewise have
$$P_j - P_j^* =  \int_{-\infty}^{T_j-\delta_j} f_j(x) dx  -  \int_{T_j +\delta_j }^{\infty } f_j(x) dx    =$$
$$ \int_{-T_j+\delta_j }^{\infty} f_j(x) dx  -  \int_{T_j +\delta_j }^{\infty } f_j(x) dx \geq 0.$$

In conclusion, for every $j\in\N$, $P_j\geq P_j^*\geq  P_j'$ and  $(P_j':j\in \N )$  has the symmetry property \eqref{sympvs}.
By \citt{hemerik2024flexible}, that symmetry property  is sufficient for the methods in that paper to be valid. Using larger $\pvs$ decreases $V(t)$ and increases the upper bounds from   \citt{hemerik2024flexible}. Consequently, the $\pvs$ $P_1,...,P_m$ can also validly be used in the methods from \citt{hemerik2024flexible}.
\end{proof}

\subsection{Proof of claim from Remark  \ref{remTOST}}
Here we prove the claim from Remark \ref{remTOST} that the \emph{p}-value is Theorem \ref{pvsgoodet} is the TOST-\emph{p}-value.
\begin{proof}
The  general definition of the TOST-\emph{p}-value is in e.g. \citt{meyners2012equivalence} and \citt{lakens2017equivalence}.
In our setting, where the equivalence intervals $(\delta_j,\delta_j)$ are symmetric about 0, the following holds.
\begin{itemize}
\item If $T_j<0$, then the TOST-\emph{p}-value $P_j$ equals the right-sided \emph{p}-value for the null hypothesis that $\mu_j= -\delta_j$. 
\item Likewise, if $T_j\geq 0$, then the TOST-\emph{p}-value $P_j$ equals the left-sided \emph{p}-value for the null hypothesis that $\mu_j= \delta_j$. 
\end{itemize}
Thus,  if the observed statistic  $T_j^{\text{obs}}$ is smaller than 0, then to obtain the  TOST-\emph{p}-value, we compute  $\pr(T_j\geq T_j^{\text{obs}})$ under the assumption that $\mu_j =  -\delta_j$, which gives
\begin{align*}
& \pr(T_j\geq T_j^{\text{obs}}) = \pr(T_j-\mu_j\geq T_j^{\text{obs}}-\mu_j) = \\
&\int_{T_j^{\text{obs}}-\mu_j}^{-\infty} f_j(x)dx = \int_{T_j^{\text{obs}}+\delta_j}^{-\infty} f_j(x)dx.
\end{align*}
Likewise, if the observed statistic  $T_j^{\text{obs}}$ is larger than 0, then to obtain the TOST-\emph{p}-value, we compute
$\pr(T_j\leq T_j^{\text{obs}})$ under the assumption that $\mu_j =  \delta_j$, which gives
\begin{align*}
& \pr(T_j\leq T_j^{\text{obs}}) = \pr(T_j-\mu_j\leq T_j^{\text{obs}}-\mu_j) =\\
& \int_{-\infty}^{T_j^{\text{obs}}-\mu_j} f_j(x)dx= \int_{-\infty}^{T_j^{\text{obs}}-\delta_j} f_j(x)dx.
\end{align*}
This results in the expression \eqref{TOSTpv} from Theorem \ref{pvsgoodet}.
\end{proof}

\end{document}